\newcommand{\poly}[0]{\mathrm{poly}}
\newcommand*{\QEDB}{\hfill\ensuremath{\square}}
\newtheorem{thm}{Theorem}
\newtheorem{corollary}{Corollary}
\newtheorem{lem}{Lemma}
\newtheorem{definition}{Definition}
\newcommand*{\mot}{~}
\newcommand{\proj}[1]{\ket{#1}\bra{#1}}
\begin{document}

\title{Quantum advantage from energy measurements of many-body quantum systems}
\author{Leonardo Novo}
\email{legoncal@ulb.ac.be}
\affiliation{ Centre for Quantum Information and Communication, Ecole Polytechnique de Bruxelles, CP 165, Universit\'e Libre de Bruxelles, 1050 Brussels, Belgium}

\author{Juani Bermejo-Vega}
\email{jbermejovega@go.ugr.es}
\affiliation{ Electromagnetism and Matter Physics Department, University of Granada, Granada, Spain}
\affiliation{ Carlos I Institute of Theoretical and Matter Physics, University of Granada, Granada, Spain}
\affiliation{ Free University of Berlin, Berlin, Germany}

\author{Ra\' {u}l Garc\'{i}a-Patr\'{o}n}
\email{rgarciap@ulb.ac.be}
\affiliation{ Centre for Quantum Information and Communication, Ecole Polytechnique de Bruxelles, CP 165, Universit\'e Libre de Bruxelles, 1050 Brussels, Belgium}

\begin{abstract}
The problem of sampling outputs of quantum circuits has been proposed as a candidate for demonstrating a quantum computational advantage (sometimes referred to as quantum ``supremacy"). In this work, we investigate whether quantum advantage demonstrations can be achieved for more physically-motivated sampling problems, related to measurements of physical observables. We focus on the problem of sampling the outcomes of an energy measurement, performed on a simple-to-prepare product quantum state -- a problem we refer to as energy sampling. For different regimes of measurement resolution and measurement errors, we provide complexity theoretic arguments showing that the existence of efficient classical algorithms for energy sampling is unlikely. In particular, we describe a family of Hamiltonians with nearest-neighbour interactions on a 2D lattice that can be efficiently measured with high resolution using a quantum circuit of commuting gates (IQP circuit), whereas an efficient classical simulation of this process should be impossible. In this high resolution regime, which can only be achieved for Hamiltonians that can be \emph{exponentially fast-forwarded}, it is possible to use current theoretical tools tying quantum advantage statements to a polynomial-hierarchy collapse whereas for lower resolution measurements such arguments fail. Nevertheless, we show that efficient classical algorithms for low-resolution energy sampling can still be ruled out if we assume that quantum computers are strictly more powerful than classical ones. We believe our work brings a new perspective to the problem of demonstrating quantum advantage and leads to interesting new questions in Hamiltonian complexity.
\end{abstract}
%
\maketitle

\section{Introduction}
Impressive recent developments in experimental quantum physics are enabling the manipulation of many-body quantum systems of larger and larger sizes. The high degree of control and local resolution of measurement reached in experimental platforms such as quantum gas microscopes \cite{Bloch2017}, Rydberg atoms manipulated with optical tweezers \cite{Lukin2017}, ion traps \cite{Monroe2017,Lanyon2017}, 
or superconducting circuits \cite{Martinis2018, googlesupremacy2019},
are moving these experiments closer to the quantum advantage frontier -- a regime that is challenging to model using our traditional computers. Experiments at this scale should lead to new insights into important problems in many-body physics.
For example, recent developments of many-body interferometric techniques to estimate the entanglement entropy \cite{Greiner2015,Roos2019} 
have opened experimental access to the investigation of quantum thermalization \cite{Greiner2016}.
Similarly, the access to complex many-body correlators on large-size many-body systems has enabled  the experimental study of quantum critical dynamics and  dynamical phase-transitions \cite{Lukin2017,Lukin2019,Monroe2017}, many-body localization \cite{choi_exploring_2016,Greiner2019},  scrambling \cite{Rey2017,Zoller2019} and topological order \cite{topological1,topological2}. 

Several experimental demonstrations of large-scale quantum simulators that outperform certain classical simulations methods have already been reported\mot\cite{Trotzky,Schreiber-pnas-2015,choi_exploring_2016,Lukin2017,Monroe2017}. Unfortunately, the evidence for quantum advantage in these experiments is based solely on numerical benchmarks against available classical algorithms such as, e.g., DMRG \cite{Trotzky}. Hence, this does not exclude the possibility that a \emph{new classical algorithm} performs as efficiently as a given quantum simulator or quantum algorithm, for a problem where it was previously thought there was an exponential quantum speed-up. A remarkable example where this happened is the recent work ``dequantizing" certain quantum machine learning algorithms \cite{tang2019}.

For this reason, it is of utmost importance to put statements about quantum advantage on rigorous mathematical ground. This has been the subject of several recent works which demonstrate, based on strong complexity-theoretic evidence, that there are certain tasks that can be performed efficiently by quantum devices for which an efficient classical algorithm cannot exist. These are based on sampling problems that exhibit certain robustness against noise and are tailored to near-term hardware. Examples include boson sampling~\cite{bosonsampling}, IQP sampling~\cite{IQPapprox}, sampling from random quantum circuits~\cite{boixo_characterizing_2016, googlesupremacy2019} and quantum simulations of constant-time  Hamiltonian evolutions~\cite{gao,bermejo}. The key strength of these results is that the existence of a quantum advantage is provable assuming plausible complexity theoretic conjectures, such as the non-collapse of the Polynomial Hierarchy (a commonly made assumption in theoretical computer science, which can be seen as a generalization of the P$\neq$NP conjecture)~\cite{bosonsampling,IQPapprox}. The prospect of demonstrating in a reliable way exponential quantum speed-ups has initiated a new field of theoretical and experimental activity coined \emph{quantum computational advantage} (or quantum ``supremacy'') \cite{Qsup,preskill}.

This has motivated several efforts to bring quantum advantage proposals closer to a realistic physical implementation.
These efforts have largely focused on finding approximate sampling problems that are robust against certain experimental errors \cite{bosonsampling,IQPapprox}; tailoring quantum sampling problems to existing implementations \cite{boixo_characterizing_2016,bremner_achieving_2017,bermejo};  verifying such devices with efficient quantum resources \cite{bermejo,gao,miller_quantum_2017} or exponential classical ones \cite{boixo_characterizing_2016,Aaronson:2017:CFQ:3135595.3135617,hangleiter_direct_2017,bouland_quantum_2018}. 
Some works have also brought a many-body physics perspective to previously existing 
quantum advantage proposals, through 
the study of the connections between transitions in sampling complexity and dynamical phase transitions \cite{deshpande_dynamical_2018,muraleedharan_quantum_2019,maskara_complexity_2019}. 
All these works, however, mostly rely on  ``unphysical'' sampling problems that were discovered for the sole purpose of demonstrating a quantum advantage, and further connections with questions of central interest in many-body physics are yet to be explored.

In this work, we take a step further towards identifying physically-motivated quantum advantages
in many-body quantum systems. We ask whether complexity-theoretic results can deliver reliable quantum advantages for the measurement of a physically meaningful observable or the calculation of quantities of physical relevance. Specifically, we investigate whether quantum advantages can be related to the measurement of the energy of a many-body quantum system. Such measurements can be implemented, for example, on a quantum computer via quantum phase estimation \cite{QPEkitaev1995,abrams1999quantum}, or on analog quantum simulators \cite{QNDzoller}.

In retrospective, one could interpret the work by Huh and collaborators \cite{vibronicspectra} as a
first attempt in this direction. This work connects a quantum "supremacy" device, namely a Gaussian boson sampler~\cite{Hamilton2017}, to the problem of determining the vibrational spectrum of a molecule\mot\cite{Barone12}. However, this work did not prove this problem to be hard for a classical computer and, in practice, there exist classical algorithms that build this spectrum for molecules of a hundred harmonic vibrational modes on a desktop within a few minutes~\cite{Santoro2007}.

Hence, demonstrating a conclusive physically motivated quantum advantage remains a difficult milestone. We would like such a quantum advantage proposal to meet two desiderata:
\begin{itemize}
    \item It describes a physical experiment that efficiently measures or estimates
    a relevant many-body observable or quantity.
    \item It provides a rigorous mathematical proof of the impossibility of simulating the outcome of the 
    physical experiment efficiently with classical computers.
\end{itemize}
In this work we focus on the task of sampling outcomes of an energy measurements of a many-body quantum system, a problem we refer to as energy sampling. Naturally, the complexity this task depends on the different parameters characterizing the outcome probability distribution, such as the measurement resolution or other errors affecting the measurement device. Our main contribution is to present complexity theoretic arguments that likely exclude the existence of efficient classical simulators for the energy sampling problem in different parameter regimes. In particular, we demonstrate that for Hamiltonians that can be measured efficiently by quantum devices with very high resolution, it is possible to demonstrate quantum advantage for the energy sampling problem based on the widely believed conjecture of the non-collapse of the Polynomial Hierarchy, together with other standard assumptions \cite{bosonsampling, IQPapprox}. We give an explicit example of a simple family of Hamiltonians (e.g., with nearest neighbour interactions on the 2D square lattice) for which energy measurements are hard to simulate on classical computers, yet, should be relatively feasible to measure on a near-term quantum device, which is able to approximately sample from 2D circuits of commuting gates \cite{bermejo}. Interestingly, for this example, the correct functioning of the quantum measurement device can be efficiently verified using existing fidelity-witness methods~\cite{bermejo}, if reliable single-qubit measurements are available. This leads to a conceivable quantum advantage proposal based on measurements of many-body Hamiltonians. We further discuss limitations of current theoretical tools to prove quantum advantage for energy sampling in low resolution regimes and how it connects to the fundamental problem of proving that quantum computers are strictly more powerful than classical ones. 

For our proof of quantum advantage, we introduce the concept of quantum diagonalizable Hamiltonians. This defines the set of Hamiltonians for which one can efficiently obtain, using a quantum computer, a description of its diagonalizing unitary as a poly-size quantum circuit as well as efficiently compute its eigenvalues. We believe this concept can be of relevance outside the scope of this work and may lead to new investigations of speed-ups with respect to classical algorithms.
 \begin{figure*}
 \centering
 \includegraphics[width=0.8\textwidth]{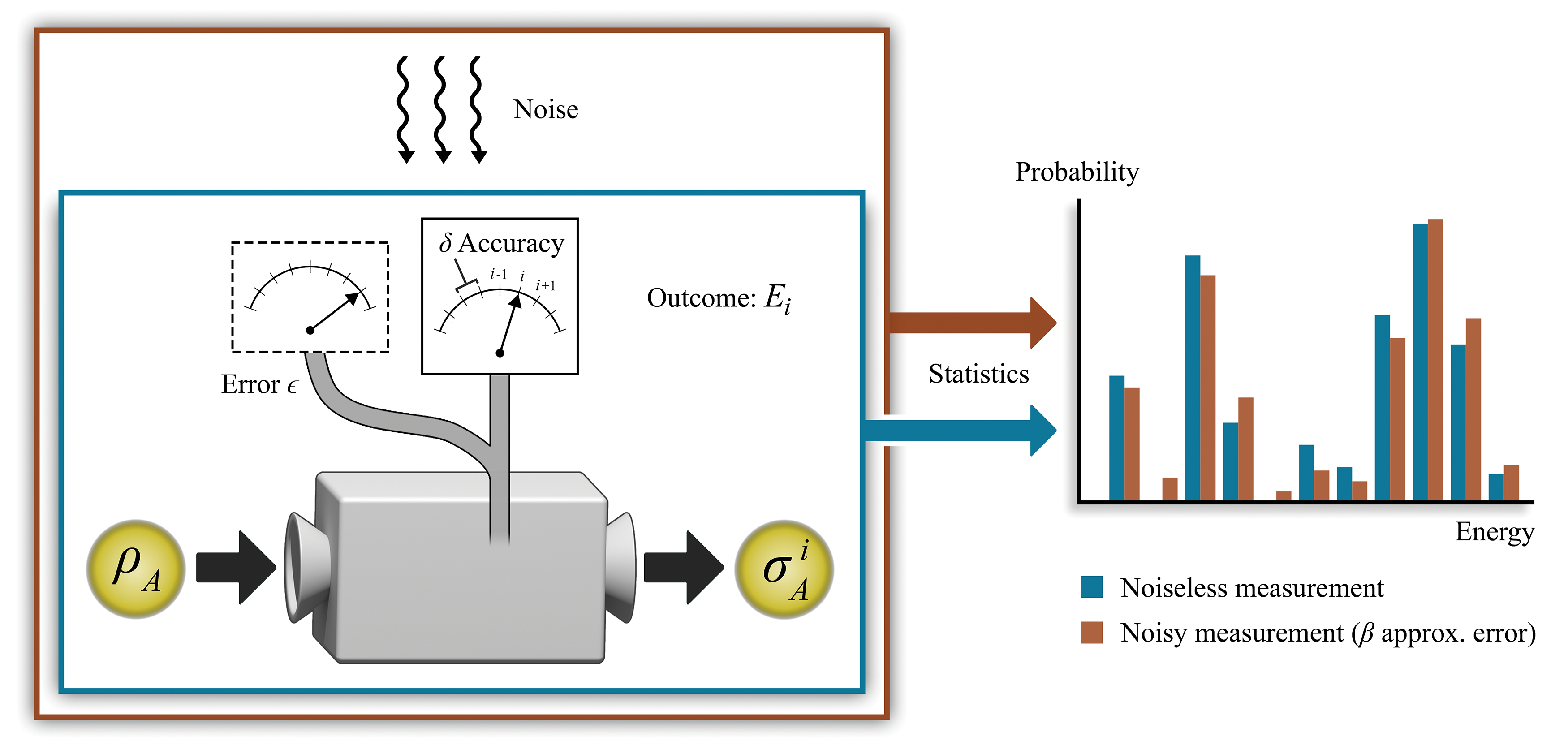}
 \caption{\small{Given a Hamiltonian $H$, an energy measurement can be modelled as a procedure that takes as input a quantum state $\rho_A$ and outputs a measurement outcome $E_i$, as well as a post-measurement state $\sigma_A^i$. We characterize a \emph{noiseless measurement} (inside the smaller blue box) by its measurement resolution $\delta$, which determines the accuracy of the output value $E_i$, and its failure probability $\epsilon$ (see Eqs.~\eqref{eq:measacc} and \eqref{eq:Ensamp}). Moreover, to characterize a \emph{noisy measurement} (inside larger brown box), we introduce an extra parameter $\beta$ that quantifies the sampling error, in total variation distance, between the noiseless probability distribution and the observed one (represented by the blue and brown histogram bars, respectively). These probability distributions are defined explicitly in Definitions~\ref{def:EnergySampling} and \ref{def:approxEnsamp}.} }
 \label{Fig:measurement}
 \end{figure*}
\subsection{Measurement statistics and parameter regimes}\label{sec:terminology}
Before summarizing our contributions in more detail we introduce some terminology regarding the parameters characterizing an energy measurement as well as the different measurement regimes achievable by quantum devices (a more detailed discussion is presented in Sec.~\ref{sec:setting}). 

A model of the measurement outcome statistics needs to take into consideration the imperfect nature of a realistic measurement. A schematic representation of an energy measurement and the parameters we use in this work to characterize it is depicted in Figure \ref{Fig:measurement}. Following Refs.~\cite{wocjan2006BQP,janzing2008BQP,aharonov2017}, we characterize the quality of a measurement by the \textit{measurement resolution} $\mathbf{\delta}$, which
sets the smallest measurement unit, and the \textit{measurement confidence} $\mathbf{\eta}$. For example, an energy measurement of an eigenstate $\ket{\psi_E}$ with energy $E$ is said to have resolution $\delta$ and confidence $\eta$ if it outputs an estimate $E'$ such that
\begin{equation}\label{eq:measacc}
\Pr(|E'-E|\leq \delta )\geq \eta.    
\end{equation}
It will be useful to also define the parameter $\epsilon=1-\eta$, denoting the probability of failure of the measurement. A generalization of Eq.~\eqref{eq:measacc} for arbitrary input states (see Sec.~\ref{sec:setting}) defines the target probability distribution we would like to sample from. 

The finite resolution and measurement confidence result from natural limitations such as a finite measurement time or energy, which are present even if we assume a noiseless measurement device. In addition, to take into account the unavoidable presence of noise in the implementation of a realistic measurement, we introduce the \textit{sampling error} parameter $\mathbf{\beta}$. This parameter quantifies the deviation 
in $\ell_1$-norm between the target outcome distribution and that of an ideal measurement of resolution $\delta$ and confidence $\eta$.

In order to achieve a certain measurement resolution $\delta$, widely used quantum measurement models, such as the von Neumann model or quantum phase estimation (Sec.~\ref{sec:measmodels}), require a 
scaling of the resources needed to perform the measurement which is polynomial in the inverse resolution i.e., $\poly(1/\delta)$. Typically, the resources are quantified by the time required to perform the experiment (assuming a fixed interaction strength between system and a pointer variable \cite{vonneumann}) or by the number of gates of a quantum circuit that implements the desired measurement. We will refer to such a measurement with this performance as a \emph{standard-resolution measurement}, since it can efficiently achieve what we refer to as \emph{standard resolution}, where $\delta=1/\poly(n)$. This can be seen as a coarse-grained energy measurement since, in general, it is not able to distinguish each of the exponentially many eigenvalues. Nevertheless, for unknown Hamiltonians it is the best that can be achieved efficiently. It has been demonstrated that if the Hamiltonian is unknown but its time-evolution can be implemented (as in an experimental setting), an energy-time uncertainty relation is obeyed implying that the measurement time will be inversely proportional to the targeted energy precision \cite{AMPmeasuring2002}. On the other hand, as discussed in \cite{aharonov2017} and in Sec.~\ref{sec:regimes}, in some specific situations one can exploit knowledge of the Hamiltonian to achieve what we refer to as a \emph{super-resolution measurement},
where the scaling of resources is $O(\poly(\log(1/\delta)))$. This allows us to perform a much more accurate measurement which achieves \emph{super-resolution} efficiently, i.e. an  exponentially small measurement resolution $\delta=1/\exp(n)$.

For our purpose it is sufficient to divide the sampling error parameter $\mathbf{\beta}$ into two regimes. We define the measurement as \emph{``approximate''} \cite{bosonsampling,IQPapprox} when the desired sampling error $\beta$ is required to be only some constant independent of the system size $n$. Our results on hardness of approximate sampling extend to the regime $\beta=1/\poly(n)$. Moreover, we define the  \emph{``near-exact''} sampling regime if the sampling error $\beta$ is required to be inverse-exponential in the input size. 
\subsection{Summary of results}\label{sec:summary}
Our results on classical hardness of simulating energy measurements concern the previously defined regimes of resolution and sampling errors as summarized in Table \ref{table:summary_results} and in more detail below. For the sake of clarity we omit the confidence parameter $\eta$, which can be taken to be $\eta=1-O(\beta)$. We provide complexity theoretic evidence that an efficient classical simulation of energy measurements should not be possible, and discuss how the latter provides a suitable test of quantum advantage for suitable resolution and sampling error regimes.  Due to their relevance in describing physical systems, we focus on measurements of \emph{$k$-local Hamiltonians} acting on $n$ qubits (two level quantum systems) i.e., Hamiltonians of the form $H=\sum_j H_j$ where each term $H_j$ acts on $k$ qubits, for constant $k\in O(1)$. Our main contributions are the following:

(i) We provide quantum advantage protocols for  \emph{approximate super-resolution} energy measurements (Sec.~\ref{sec:expaccuracy}). Specifically, we consider Hamiltonians with nearest-neighbor interactions on 2D lattices that can be efficiently diagonalized on a quantum computer. For the latter, we show, first, that approximate super-resolution measurements can be  implemented by building an approximate sampler from the diagonalizing quantum circuit  (Theorem \ref{thm:superres}). At the same time, we prove that these measurements are hard to simulate classically assuming plausible complexity-theoretic conjectures (Corollary \ref{cor:QASuperresolved}). This leads to a verifiable quantum advantage result based on energy measurements that could be feasibly implemented in available quantum simulators. These results exploit a connection between quantum advantage proposals based on simulating constant-time Hamiltonian dynamics \cite{bermejo,gao} and energy measurement problems.
\\
(ii) Super-resolution measurement procedures for \emph{arbitrary} Hamiltonians are unlikely to exist based on complexity theoretic evidence \cite{wocjan2003PSPACE,aharonov2017}. For this reason, we investigate the hardness of energy measurements with \emph{standard resolution}. In Sec.~\ref{sec:standardres_exact}, we give complexity-theoretic evidence that classical computers cannot efficiently simulate energy measurements with standard resolution, even for simple translation-invariant nearest-neighbor Hamiltonians on the square lattice (Theorem~\ref{thm:HardnessInaccurateMeasurement}). Analogously to results obtained in  Refs.\mot\cite{IQP,Morimae14HardnessDQC1,Morimae1704.03640} for other sampling problems, our hardness result is valid in the \emph{near-exact} sampling regime where $\beta=0$ or is inverse-exponential. We give two hardness proofs, one being based on the quantum advantage proposal of \cite{bermejo}, the other being based on  circuit-to-Hamiltonian constructions \cite{kitaev2002}.
\\
(iii) Ideally, one would like a physically motivated
quantum advantage experiment based on \emph{approximate sampling problems with standard resolution}, which are more resilient to imperfections. 
However, in Sec.~\ref{sec:standardres_approx}, we argue that, with current techniques, it is not possible to link the classical hardness of this problem to a Polynomial Hierarchy (PH) collapse as in Refs.\mot\cite{bosonsampling,IQPapprox}. As an intermediate step, we provide alternative hardness results inspired by the BQP-hardness of this problem \cite{wocjan2006BQP,janzing2008BQP}. Using circuit-to-Hamiltonian constructions \cite{kitaev2002}, we show that a hypothetical classical simulator for energy measurement of local Hamiltonians could be used to approximate arbitrary marginals of the output distribution of any poly-sized quantum circuit (Theorem~\ref{thm:polybox}). Based on  hardness of simulating universal quantum circuits \cite{boixo_characterizing_2016,Aaronson:2017:CFQ:3135595.3135617,bouland_quantum_2018}, these results give evidence that approximately measuring a local Feynman-Kitaev Hamiltonian in the standard resolution regime is classically intractable.  As we will discuss in our manuscript, an open challenge in complexity theory would be to tie these hardness results to a Polynomial Hierarchy (PH) complexity-theoretic collapse. Such a result could have additional implications for the development of quantum protocols exhibiting physically-motivated quantum advantages.
\begin{table}
\small
\centering
\setlength{\tabcolsep}{0pt}
\begin{tabular}{c|c|c|}
\cline{2-3}                                                                                                                                         & \begin{tabular}[c]{@{}c@{}}~~Super-resolution~~\\  $ \delta=1/\text{exp}(n)$\end{tabular} & \begin{tabular}[c]{@{}c@{}}~Standard-resolution~\\ $\delta=1/\text{poly}(n)$\end{tabular} \\ \hline
\multicolumn{1}{|c|}{\begin{tabular}[c]{@{}c@{}}~~Near-exact samp.~~\\ $ \beta=1/2^{\text{poly}(n)}$\end{tabular}} & (i) PH-collapse                                                                                       & (ii) PH-collapse                                                                                                \\ \hline

\multicolumn{1}{|c|}{\begin{tabular}[c]{@{}c@{}}~~Approx. samp.~~\\ $ \beta=\text{const.}$~~\end{tabular}}           & \cellcolor[gray]{0.9} (i) PH-collapse*                                                                                       & \cellcolor[gray]{0.9} (iii) BPP=BQP                                                                                                   \\ \hline
\end{tabular}

\caption{\small{Our results on classical hardness for the energy measurement problem, summarized in points (i)-(iii) in Sec.~\ref{sec:summary}. For the different regimes of resolution $\delta$ and sampling error $\beta$, we show the complexity theoretic implications of the existence of an efficient classical algorithm for sampling outcomes of energy measurements, corresponding to the local Hamiltonians we construct. The cells in grey correspond to problems that admit efficient quantum algorithms. In particular, in Sec.~\ref{sec:expaccuracy}, we describe an efficient quantum protocol for approximate super-resolution energy measurements, which could be used to demonstrate a quantum advantage. This result, marked by ``*'', requires plausible complexity theoretic assumptions other than the collapse of the Polynomial Hierarchy (PH).}} 
\label{table:summary_results} 
\end{table}
\section{Setting}\label{sec:setting}
In this section we set up the framework to discuss quantum advantage for measurements of many-body Hamiltonians. We start in Sec.~\ref{sec:measmodels} by discussing two ubiquitous quantum measurement protocols: the Von Neumann pointer, for analog devices, and quantum phase estimation, for digital quantum computers.
We discuss how our ability to measure is limited by experimental noise as well as  physical restrictions on available resources, such as  time,  energy, or quantum gates counts. These limitations motivate us to define the problem of approximate Energy Sampling in Sec.~\ref{sec:ensamp}, where we introduce precisely the parameters mentioned in Sec.~\ref{sec:terminology} characterizing the probability distribution of an imperfect energy measurement. Finally, we discuss in Sec.~\ref{sec:regimes} the different parameter regimes and how they can be achieved by quantum devices.
\subsection{Measurement models and their limitations: from the von Neumann pointer to quantum phase estimation}\label{sec:measmodels}
Let us consider a physical observable $\hat{O}_A$, with eigenvectors $\ket{\psi_i}$
and eigenvalues $\lambda_i$, and a quantum system $A$ in state
$\rho_A$, where both operators act on a finite dimensional Hilbert space $\mathcal{H}$. Upon an ideal measurement of this observable on system $A$, the probability of obtaining an outcome $\lambda_i$ is given by
\begin{equation}\label{eq:perfectmeas}
  p(\lambda_i)={\rm Tr}[\rho_A\Pi_i],
\end{equation}
where $\Pi_i$ is the spectral projection on the eigenstates with eigenvalues $\lambda_i$. In a realistic physical implementation of a quantum measurement, though, the outcome probability distribution deviates from the ideal one and is characterized by a finite resolution and other error parameters.
To understand the fundamental limitations of quantum measurements, let us take as an example the \emph{Von Nenmann pointer} model of quantum  measurement \cite{vonneumann}. In this model, the system under study interacts with a continuous-variable pointer register $R$ for a time $t$ through the unitary coupling
\begin{equation}
    U_{meas}=e^{-i\alpha t\left(\hat{O}_A\otimes\hat{p}_R\right)},
\end{equation}
where $\hat{p}_R$ is the momentum operator of the pointer register and $\alpha$ is a parameter
that captures the strength of the interaction. If $\rho_A$ corresponds to an  
eigenvector $\proj{\psi}_i$ of $\hat{O}_A$, the effect of 
$U_{meas}$ will be to apply a $\alpha\lambda_i t$ continuous shift of the pointer register $R$. Therefore, by having access to the value of the position of the pointer we can infer the outcome  $\lambda_i$. For example, the proposal for quantum non-demolition energy measurements of many-body systems from  \cite{Zoller2019} consists in an implementation of a von Neumann pointer.
Naturally, experimental constraints, such as the finite width of the initial pointer state, a limited accuracy of the measurement of the pointer position, or a finite interaction strength $\alpha$ and interaction time $t$, impose intrinsic limitations on the resolution of the measurement process and the probability that it succeeds. In addition, a realistic measurement process is only able to achieve a noisy approximation of the ideal unitary $U_{meas}$ which results in further errors in the outcome probability distribution.
These limitations justify the introduction of the parameters $\delta$, $\eta$ and $\beta$ presented in Section \ref{sec:summary} (see also Figure \ref{Fig:measurement}) and
rigorously defined in the next subsection.
\begin{figure}
\begin{center}
 \includegraphics[width=\linewidth, trim = 0pt 0cm 0pt 0pt ]{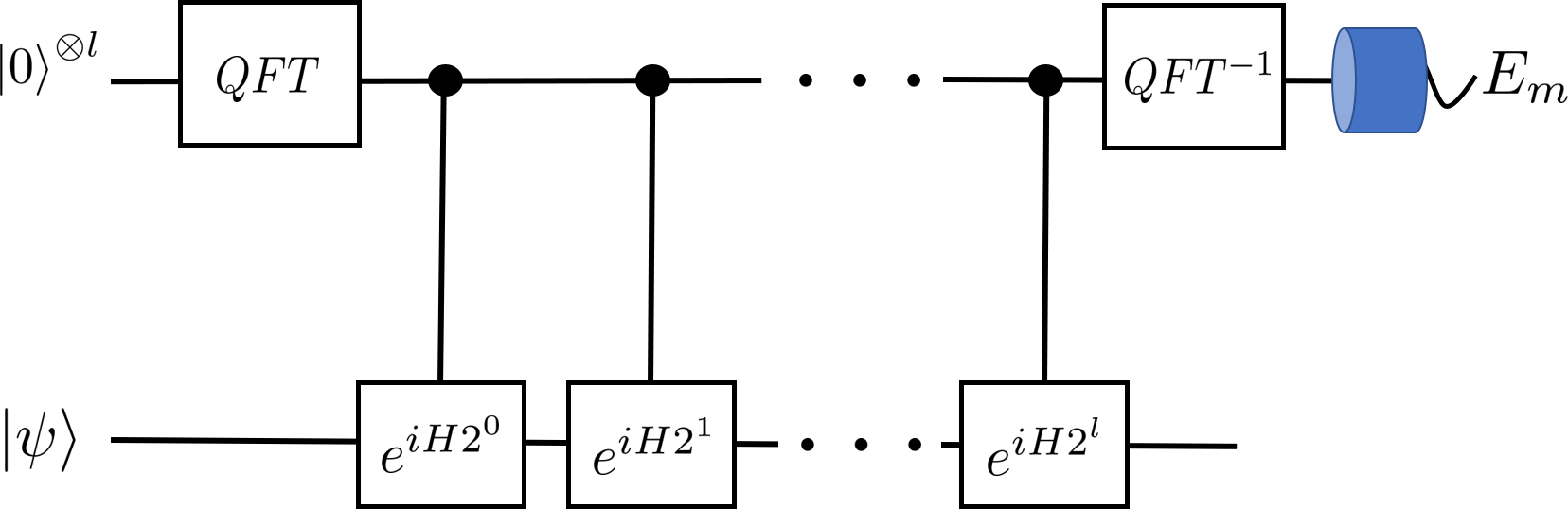}
 \end{center}
  \caption{\small{Scheme of a quantum circuit for energy measurements. An energy measurement on state $\ket{\psi}$ with Hamiltonian $H$ can be performed via the quantum phase estimation (QPE) algorithm, by estimating the eigenphases of the unitary $e^{iH}$. This algorithm exploits a superposition of different time evolutions as well as an efficient implementation of the quantum Fourier transform (QFT).}}
\label{fig:QPEscheme}
\end{figure}
Quantum algorithms for simulating measurements of physical observables also suffer from 
limitations and imperfections. In this case, the measurement resolution as well as the different measurement errors are determined by the finite number of quantum gates, as well as the noise and imperfections in the implementation of these gates.
It is well known that an observable $\hat{O}_{A}$ can be measured using the quantum phase estimation (QPE) algorithm \cite{QPEkitaev1995,abrams1999quantum}, represented in Figure~\ref{fig:QPEscheme}. The QPE algorithm is a quantum algorithm for estimating the eigenphases of a unitary matrix, which can easily be converted into an algorithm for estimating the eigenvalues of $\hat{O}_{A}$ by taking the unitary $\hat{U}=e^{i 2\pi \hat{O}_A/\Lambda}$, where $\Lambda\geq ||\hat{O}_A||$ is an upper bound on the norm of $\hat{O}_{A}$. This way, the eigenphases of $\hat{U}$ can be understood as a normalized outcome of the measurement of $\hat{O}_A$. In fact, it was discussed in Ref.~\cite{childsmeasurement} that the QPE algorithm can be seen as a discretized simulation of a Von-Neumann measurement. In this scenario, the ancillary register of phase-estimation plays the role of a discretized von Neumann pointer, where every ancillary qubit encodes an additional bit of precision. Therefore, a quantum simulation of the measurement process will be also prone to error and necessitate
the introduction of the parameters $\delta$, $\eta$ and $\beta$. The same holds for any potential classical algorithm trying to simulate the quantum measurement process.
\subsection{The Energy Sampling problem}\label{sec:ensamp}
Due to their simplicity and significance in condensed-matter physics, we will focus on measurement problems for local many-body Hamiltonians on two-level systems (qubits). 
Our framework can  be easily extended to measurements of general many-body observables. 
\begin{definition}[\textbf{Measurement resolution}] An energy measurement on a state $\rho$ is said to have resolution $\delta$ and confidence $\eta$ if the probability of outcome $E'$ is such that 
\begin{equation}\label{eq:Ensamp}
\Pr(E'\in[ E_A-\delta , E_B+\delta ])\geq \eta \text{\normalfont{tr}}(\Pi_{[E_A,E_B]}~\rho),
\end{equation}
where $\Pi_{[E_A,E_B]}$ is the spectral projection of $H$ in the energy interval $[E_A,E_B]$.
\end{definition}
We will also define the parameter $\epsilon=1-\eta$, representing the probability of failure of the measurement. 

Without loss of generality,  we assume all observables throughout the text to have spectra contained in $[0,1]$: any observable can be written in this form via a suitable rescaling.\footnote{Let $\kappa$ be an upper bound of $||H||$. Then, $(H':=H/\kappa + 1)/2$ has spectrum in $[0,1]$. Furthermore, such an upper bound can be efficiently computed for $k$-local Hamiltonians $H=\sum_{i=1}^{r} h_i$ with constant $k$, where the number of terms $r$ can be at most $O(n^k)$. Specifically, $\kappa \leq \max_{i}\|h_i\| r$.}  Moreover, although in general an energy $E$ can be any real number, any device performing an energy measurement has a discrete set of output values. For this reason, we discretize the real line into  steps of size $\delta >0$ and assume a measurement outcome is given by a value $E_m\in \{0,\delta ,...,1-\delta, 1\}$  (we take $\delta =1/K$ for some positive integer $K$).\footnote{Strictly speaking the constraints on the measurement outcome distribution from Eq.~\eqref{eq:Ensamp} could allow for outcomes $-\delta$ or $1+\delta$. We assume that these outcomes would be identified with outcome $0$ and $1$, respectively, via classical postprocessing. }

In principle, we could consider energy measurements on any state that can be efficiently prepared by a quantum device. However, we focus on measurements on \emph{product states}, since we are interested in energy measurement problems whose complexity comes from the Hamiltonian and not from the state to be measured. Also, considering measurements on more general quantum states would only increase the classical complexity of the problem. Hence, we define the problem of Energy Sampling as follows.  

\begin{definition}[\textbf{Energy Sampling}]\label{def:EnergySampling} Given an  $n$-qubit product quantum state $\rho$, an $n$-qubit local Hamiltonian $H=\sum_{i=1}^{M} h_i$, $M\in O(\poly(n))$, and parameters $\eta,\delta>0$,   output $E_m\in \{ 0,\delta ...,1-\delta, 1 \}$ with probabilities $q_m$ such that Eq.~\eqref{eq:Ensamp} is fulfilled.
\end{definition}
Such a sampler can be used to build an histogram containing information about how the state $\rho$ decomposes in the eigenbasis of the measured Hamiltonian and thus to learn about the Hamiltonian spectrum, as represented in Figure~\ref{Fig:measurement}. Namely, for a given outcome $E_m$, the probability $p(E_m)$ can be reconstructed up to 1/poly($n$) errors in probabilistic polynomial time. 

It is important to remark that Definition \ref{def:EnergySampling} is not robust to experimental imperfections, as the latter can introduce a sampling error in total variation distance. For this reason, our main interest will be the notion of \emph{approximate} energy sampling, which allows us to consider laboratory errors. 
\begin{definition}[\textbf{$\beta$-approximate Energy Sampling}]\label{def:approxEnsamp}
Given a $n$-qubit product quantum state $\rho$, an $n$-qubit local Hamiltonian $H=\sum_{i=1}^{M} h_i$, $M\in O(\poly(n))$, and parameters $\eta,\delta, \beta >0$, output $E_m\in \{ 0,\delta...,1-\delta,1\}$ with probabilities $q'_m$  such that this probability distribution is $\beta$-close in total variation distance to
the outcome probability distribution of an energy sampler (Definition~\ref{def:EnergySampling}). 
\end{definition}
The parameter $\beta$ quantifies how well the probabilities $q'_m$ approximate the probability distribution of an energy measurement with resolution $\delta$ and confidence $\eta$. Hence, we will refer to the parameter $\beta$ as the sampling error. 
\subsection{Regimes of resolution and error achievable by quantum devices}\label{sec:regimes}
As anticipated in the previous section, theoretical knowledge about the Hamiltonian as well as experimental restrictions lead to different regimes of resolution, confidence and sampling error.
In what follows we extend that presentation with some important remarks.

\emph{Standard-resolution measurements.} For many ubiquitous measurement procedures, the time necessary to achieve resolution $\delta$ grows as ${\rm poly}(1/\delta)$. Taking again as example the von Neumann model, if we assume that the pointer is prepared as a wavepacket of a fixed width $\sigma$ (fixed energy), it is possible to distinguish two consecutive eigenvalues $E_1$ and $E_2$ with high confidence by letting the system interact with the pointer for a time such that $\alpha t |E_2-E_1|\gg\sigma$. Hence, a scaling of $t=O(1/\delta)$ is needed to achieve resolution $\delta$, for a fixed value of the coupling $\alpha$. 

For quantum algorithms that simulate energy measurements, such as QPE, the natural way to quantify their running time is based on the number of quantum gates applied. It is known that an energy measurement protocol based on QPE achieves a resolution $\delta$ with a number of gates scaling as $\poly(1/\delta)$ \cite{abrams1999quantum}. This follows from the fact that the bottleneck for this algorithm is the implementation of an approximation of the time-evolution operator $\exp(i H T)$ for time $T=O(1/\delta)$ in terms of quantum gates, which takes time $\poly(1/\delta, ||H||)$ using standard quantum simulation methods \cite{lloyd1996universal, truncatedtaylor}. More generally, it was shown in Ref.~\cite{aharonov2017} that if the Hamiltonian is unknown and we can only access the Hamiltonian evolution as a black box, or even if its eigenstates are known but there is no information about its eigenvalues, a number of gates scaling as $\poly(1/\delta)$ is the best that can be achieved by any quantum algorithm. 

An alternative way to approximate the probability distribution of an energy measurement on a state $\rho$ is by considering its Fourier transform
\begin{equation}
   P(\omega)~=\sum_i p(\lambda_i)\delta(\omega-\lambda_i)
   =\int\chi(t)e^{i\omega t}dt,
\end{equation}
where $\lambda_i$ are the eigenvalues of the Hamiltonian, $p(\lambda_i)$ is defined in Eq.~\eqref{eq:perfectmeas} and $\chi(t)={\rm Tr}[\rho e^{-i Ht}]$ is the expected value of the time evolution operator. The function $P(\omega)$ can be approximated by measuring the correlators $\chi(t)$ at different times $t$ and calculating the discrete Fourier transform of these values \cite{somma2002,somma2019}. Such correlators can be measured via an many-body-interferometric experiment akin to, for example, Ref.~\cite{Roushanmanybody}.
In this case, if we want to increase the resolution of the approximation by a factor $1/x$, 
it is necessary to measure $x$ times more values of the correlator $\chi(t)$, where the longest time will be $x$ times larger than before. It is then expected that the time needed to run the experiment grows quadratically with the inverse resolution $\delta^{-1}$.

\emph{Super-resolution measurements.} By exploiting certain knowledge about the Hamiltonian it is sometimes possible to construct quantum algorithms for energy measurements whose running time is ${\rm poly}(\log\left(1/\delta\right))$, i.e., the cost of increasing the resolution (decreasing $\delta$) grows polynomially in the number of digits of $\delta$, instead of $\delta$ itself. We say that such a measurement procedure is a \emph{super-resolution measurement}, as it allows to resolve even exponentially small energy gaps of a Hamiltonian, with a cost increasing only polynomially in $n$. It was demonstrated in Ref.~\cite{aharonov2017} that this regime can be achieved by a quantum algorithm iff the corresponding Hamiltonian can be \emph{exponentially fast-forwarded} i.e., the time-evolution $\hat{U}=\exp(-i H T)$ can be implemented for exponentially large $T$ using only polynomially many quantum gates. It is important to note, however, that it is not known how to implement super-resolution measurements of all local or sparse Hamiltonians, and indeed there is strong complexity theoretic evidence that this is impossible \cite{wocjan2003PSPACE,aharonov2017} -- if such a quantum procedure existed, it would imply that quantum computers would be able to solve any problem in PSPACE, which is considered very unlikely. 

\emph{Near-exact sampling.} It is interesting to consider the regime where the sampling error $\beta$ is $0$ or inverse-exponential in $n$ ($\beta=1/2^{\poly(n)}$), in order to understand the hardness of nearly-exact simulations of \emph{ideal} (noiseless) quantum devices~\cite{constantdepth,IQPexact,IQP, bosonsampling}.
As we will show, classical hardness results can be demonstrated in this regime  using the widely believed computational complexity-theoretic conjecture that the Polynomial Hierarchy (PH) does not collapse (see Sec.~\ref{sec:clhardness}).

As an important side remark, we note that achieving the standard resolution and near-exact sampling regime via QPE is non-trivial, even assuming noiseless quantum gates. To achieve this regime, it is necessary to efficiently approximate the time-evolution operator $\hat{U}=e^{i H T}$ up to an inverse-exponential error, which is not possible with methods based on product formulas such as the original quantum simulation proposal of Lloyd\mot\cite{lloyd1996universal}. However, it is possible to solve this problem thanks to recently developed quantum simulation algorithms, which are exponentially more precise than the original proposal by Lloyd, and are applicable for most Hamiltonians of interest such as local, sparse, or low-rank ones \cite{BCK2015,truncatedtaylor,low2017,low2016,chakraborty2018}. 

\emph{Approximate sampling.} The sampling error considered in the near-exact sampling is extremely demanding and it is unknown to be reachable even by fault-tolerant universal quantum devices \cite{aharonov1999fault}.
For this reason, several efforts to develop quantum advantage protocols that are robust against certain sampling errors have been developed \cite{bosonsampling,IQPapprox,Qsup}.
The latter consider approximate sampling problems, where the sampling error $\beta$ is a small constant.
However, such proofs require the introduction of additional computational complexity conjectures, as will be discussed in more detail in Sec.~\ref{sec:conjectures}. Furthermore, we discuss in Secs.~\ref{sec:standardres_approx} that current techniques to demonstrate hardness of approximate sampling fail for energy sampling problems with standard resolution, as they become sampling problems with a small output space.

\emph{Confidence regimes.} For the purposes of our discussion on classical hardness of the energy sampling problem, we can take the value of the measurement failure probability $\epsilon=1-\eta$ to be of the same order of magnitude of the sampling error $\beta$, i.e., in the approximate sampling regime we can tolerate a small constant value of $\epsilon$ whereas in the near-exact sampling regime we require $\epsilon=1/2^{\poly(n)}$.

Using a standard energy measurement procedure such as quantum phase estimation, a resolution $\delta=2^{-l}$ can be achieved with failure probability $\epsilon$ by using $l+\log(2+(2\epsilon)^{-1})$ ancillary qubits \cite{nielsen2002quantum}. The scaling of the number of gates with this approach is $O(\poly(\delta^{-1},\epsilon^{-1}))$ and so a small constant value of $\epsilon$ can be achieved with a constant overhead. Furthermore, any quantum algorithm for energy measurements achieving a failure probability smaller than $\epsilon\leq 1/2-1/\poly(n)$, can be efficiently converted into a procedure achieving an \emph{exponentially small} failure probability $\epsilon=1/2^{\poly(n)}$ via confidence amplification methods~\cite{aharonov2017}. 
\section{Quantum advantage for super-resolution energy measurements}\label{sec:expaccuracy}
As we previously mentioned, \cite{aharonov2017} proves the connection 
between energy measurements with super-resolution and the capability to exponentially fast-forward the time evolution of the Hamiltonian. The intuition behind this result comes from 
the combination of three concepts: (i) the connection between energy measurements and QPE
(ii) the capability to exponentially fast-forward the time evolution of a Hamiltonian; (iii)  the quantum parallelization achieved in QPE that allows exponentially many more queries than a classical
Fourier transform. The authors of \cite{aharonov2017} provide some examples of Hamiltonians amenable to an exponential speed-up of the time evolution:
commuting local Hamiltonians, a Hamiltonian constructed from the modular exponentiation 
unitary in Shor's algorithm and free-fermions. 
The reason behind the capability to exponentially fast-forward the time evolution of free-fermions (which is also applicable in the case of free bosons) 
is the fact that the diagonalization of the Hamiltonian is known \cite{blaizot1986quantum, shchesnovich2013}, 
which allows to construct a quantum circuit that accelerates the simulation of the time evolution. 
Interestingly, this last set of examples can be generalized to define a potentially larger set 
that we name \textit{Quantum Diagonalizable Hamiltonians}.
\subsection{Quantum Hamiltonian Diagonalization}
\label{subsec:QDiag}
In full generality, we say that a Hamiltonian $\tilde{H}$ is quantum diagonalizable if 
\begin{equation}\label{eq:specialHam}
\tilde{H}=U^{\dagger}H_f U, 
\end{equation}
where $U$ is a poly-size quantum circuit and $H_f$ is a diagonal matrix in the computational basis written as
\begin{equation}\label{eq:fofz}
H_f=\sum_{z=0}^{2^n-1} f(z) \ket{z}\bra{z},
\end{equation} 
where the gate decomposition of the circuit $U$ can be obtained efficiently using a quantum computer and there is a quantum circuit that computes $f(z)$ up to a given number of digits of precision $l$ in time $\poly(l)$. The two conditions on $U$ and $f(z)$ guarantee the existence of a poly-size quantum circuit that can exponentially fast-forward the Hamiltonian time evolution (see Figure \ref{fig:FFscheme}).

Hybrid quantum-classical algorithms for finding  quantum circuits for approximate diagonalization of a Hamiltonian have been proposed, as a way to develop more efficient Hamiltonian simulation procedures~\cite{variationalFF}. In this work we will restrict to cases where both the gate decomposition for $U$ and the function $f(z)$ can be computed efficiently \emph{classically}, which allows us to consider simple energy measurement procedures in Sec.~\ref{sec:knowndiag}.
\begin{figure}
\begin{subfigure}{\columnwidth}
  \centering
  \includegraphics[width=0.9\textwidth]{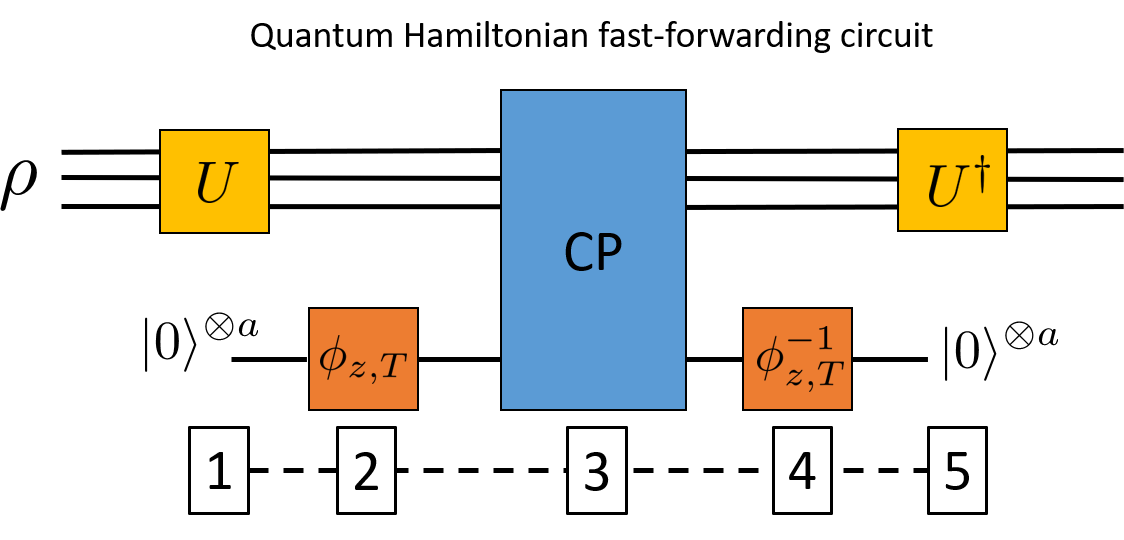}
   \caption{~}
  \label{fig:FFscheme}
\end{subfigure}\\
\begin{subfigure}{\columnwidth}
  \centering
  \includegraphics[width=.75\linewidth]{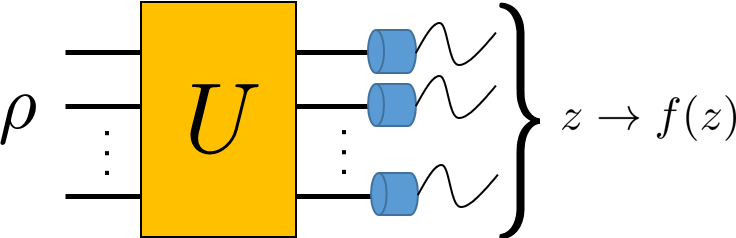}
  \caption{~}
  \label{fig:Usampler}
\end{subfigure}
\caption{\small{a) Representation of the five steps of the quantum circuit for implementing the time-evolution $e^{i \tilde{H} T}$ of a Hamiltonian with a known quantum diagonalization, i.e. of the form $\tilde{H}=U^{\dagger} H_f U$ discussed in Sec.~\ref{subsec:QDiag}. Exploiting the structure of $\tilde{H}$ the time evolution can be implemented efficiently even for exponentially large time $T$. b) If the Hamiltonian has a quantum diagonalization, the energy measurements can be performed by sampling the outcomes $z$ from the quantum circuit implementing $U$ and computing the eigenvalues via the function $f(z)$. This procedure is simpler than QPE and exponentially precise energy measurements can be achieved efficiently.}}
 \label{fig:qdiag}
 \end{figure}
Bellow, we explain how to build a circuit for fast-forwarding a quantum diagonalizable Hamiltonian $\tilde{H}$ by analyzing how the circuit acts on an eigenstate, which has the form $\ket{\psi_z}= U^{\dagger}\ket{z}$. The steps of the circuit are the following:
\begin{enumerate}
    \item Apply $U$ to $\ket{\psi_z}$ to obtain the state $\ket{z}$.
    \item For a given desired evolution time $T$, compute an $a$-bit approximation of the phase $\phi_{z,T}=f(z) T ~(\text{mod}~2\pi)$, where  $a=O(\log(n))$, on an ancillary register. This takes time $O(\poly(\log(T),\log(n)))$ since we assume that $l$-bits of $f(z)$ can be computed in $O(\poly(\log(l)))$ time. This creates a state $\ket{z}\ket{\tilde{\phi}_{z,T}}$, where $\tilde{\phi}_{z,T}$ is the $a$-bit approximation of $\phi_{z,T}$.   
    \item Apply the controlled phase $e^{-i \tilde{\phi}_{z,T}}\ket{z}\ket{\tilde{\phi}_{z,T}}$
    \item Undo the computation of $\tilde{\phi}_{z,T}$ in the ancillary register to obtain the state $e^{-i \tilde{\phi}_{z,T}}\ket{z}\ket{0}^{\otimes a}$.
    \item Apply $U^{\dagger}$ to obtain the state $e^{-i \tilde{\phi}_{z,T}}\ket{\psi_z}$. 
\end{enumerate}
This quantum circuit, sketched in Figure \ref{fig:qdiag}, implements an approximation $U'$ of the time-evolution operator in time $O(\poly(\log(n), \log(T)))$. The choice $a=\log(n)$ ensures that $||U'-\exp(-i \tilde{H} T) ||=O(2^{-a})=O(1/\poly(n))$,  which shows that $\tilde{H}$ is exponentially fast-forwardable, according to the definition in Ref.~\cite{aharonov2017} (see Appendix A).

We leave open the problem of how quantum diagonalizable Hamiltonians relate to the potentially larger class of exponentially fast-forwardable Hamiltonians. Also, it is important to remark that, since the quantum diagonalization assumptions imply the ability to fast-forward, it is unlikely that arbitrary Hamiltonians are quantum diagonalizable -- otherwise, this would imply a general procedure for exponentially precise energy measurements of arbitrary Hamiltonians which, combined with the results in \cite{aharonov2017,wocjan2003PSPACE}, implies BQP=PSPACE. 

\subsection{Super-resolution energy measurements for Hamiltonians with known diagonalization}\label{sec:knowndiag}
The motivation to restrict $f(z)$ to functions that can be computed efficiently classically, instead of the more general case where $f(z)$ can be computed by a quantum circuit, is that it allows us to connect super-resolution energy measurements of $\tilde{H}$ to the problem of sampling from $U$. In fact, given that the quantum diagonalization of the Hamiltonian is known, super-resolution can be achieved by sampling from the quantum circuit $U$, together with some classical post-processing to compute $f(z)$, as schematically depicted in Fig.~\ref{fig:Usampler} . Furthermore, the generated distribution can be characterized by the parameters $\delta, \eta$ and $\beta$ defined in Sec.~\ref{sec:ensamp}. 

In order to show this, let us assume we have access to an approximate sampler from outputs of $U$, i.e. a device that samples approximately from the probabilities $P_z=|\bra{z}U\ket{\psi}|^2$. More precisely, we define the problem of approximate $U$-sampling as follows. 
\begin{definition}[\textbf{$\beta$-approximate $U$-sampling}]
Given an initial state $\ket{\psi}$ and a unitary $U$,
sample outcomes $z$ with probabilities $P'_z$ such that $\sum_z|P_z-P'_z|\leq \beta$. 
\end{definition}
We will refer to such a sampler as a \emph{$\beta$-approximate sampler} for $U$. We can now demonstrate the following 
\begin{thm}[\bf Quantum algorithm for super-resolution energy measurements]\label{thm:superres} 
Consider any quantum diagonalizable Hamiltonian $\tilde{H}=U^{\dagger}H_f U$ as in (\ref{eq:specialHam}). Then, the following quantum algorithm efficiently solves the $\beta-$approximate Energy Sampling problem for Hamiltonian $\tilde{H}$, with the initial state $\ket{\psi}$ and parameters $\eta=1$ and $\delta =2^{-l}$:
\begin{itemize}
\item Query a $\beta$-approximate sampler for $U$, with initial state $\ket{\psi}$.
\item Given an outcome $z$, output an $l$-digit approximation of the value $f(z)$ .
\end{itemize} 
\end{thm}
Theorem~\ref{thm:superres} provides a simple quantum procedure for super-resolution energy measurements, since $l$ digits of precision can be achieved in $\poly(l)$ time. The procedure is represented schematically in Fig.~\ref{fig:qdiag} and can be used to bypass the QPE algorithm, which requires further overhead. The details of the proof of Theorem~\ref{thm:superres} are given in Appendix~\ref{app:proofsthm}, but the result can be understood intuitively. 
 As expected, an $l$-digit accuracy in the computation of $f(z)$ translates into an $l$-digit resolution
 of the energy measurement $\delta$. Moreover, the finite total variation distance $\beta$ of the approximate sampler for $U$ implies that the output distribution of the algorithm solves a $\beta$-approximate energy sampling problem. Finally, since we have assumed $f(z)$ can be computed deterministically, the
confidence $\eta$ is $1$. 

This provides a reinterpretation of the result in \cite{vibronicspectra} as a proposal for super-resolution energy measurements of the vibronic spectra of a molecule (described by a quadratic bosonic Hamiltonian) via a more economical quantum circuit than the traditional quantum phase-estimation algorithm.\footnote{In a nutshell, the Franck-Condon profile at zero temperature which, under certain approximations, gives the vibronic spectrum of a molecule, can be seen as the distribution obtained by measuring the energy of the vaccuum state of a given bosonic quadratic Hamiltonian $H$, according to a different bosonic quadratic Hamiltonian $H'$. If we choose our basis as the Fock basis of $H$, it is possible to compute classically the parameters describing the gaussian transformation $U$ that diagonalizes $H'$ and write $H'= U^{\dagger} H_f U$, i.e. find a quantum diagonalization for $H'$. The proposal of Ref.~\cite{vibronicspectra} for sampling from the Franck-Condon profile can thus be seen as an instance of the scheme for energy measurements discussed in Theorem \ref{thm:superres} and represented in Fig.~\ref{fig:Usampler}.} The discussion above provides also a generalization of that result to any Hamiltonian with a quantum diagonalization.

\subsection{Classical hardness of super-resolution Energy Sampling}\label{sec:clhardness}

\newcommand{\HamiltonianFamily}[1]{H_{2D}\left[#1\right]}
\newcommand{\HamiltonianFamilyDiag}[1]{\tilde{H}_{2D}\left[#1\right]}

\newcommand{\squarelatt}{\mathcal{L}_{\mathrm{2D}}}
In this subsection, we present our quantum advantage result based on super-resolution energy measurements of local Hamiltonians. After introducing the diagonalizable local Hamiltonians of interest for our proof, we show how the existence of an efficient classical (quantum) algorithm for the energy sampling problem implies also an efficient classical (quantum) solution to the problem of sampling from its diagonalization unitary (see Theorem \ref{thm:clhardnesssuperres}). Exploiting known results on the hardness of sampling unitary circuits, we obtain as a corollary (Corollary \ref{cor:QASuperresolved}) the existence of a simple class of local Hamiltonians for which super-resolution measurements can be feasibly implemented in a quantum device but not efficiently classically simulated (assuming plausible complexity theoretic statements).
Our proof is constructive and applies to a family of Hamiltonians diagonalizable by IQP circuits \cite{IQPapprox,bermejo} (i.e.,  quantum circuits that are diagonal in the $X$ Pauli basis). However, the ideas behind the proof are general and can be applied to find other hard energy sampling problems, by considering Hamiltonians that are diagonalized by quantum circuits from other quantum advantage proposals.

\emph{The Hamiltonian.} Specifically, our Hamiltonian family is defined by conjugating a diagonal Hamiltonian of form (\ref{eq:fofz}) with an IQP unitary. For physical reasons, we further focus on a specific family of nearest neighbor IQP circuits on the 2D square lattice $\squarelatt$ \cite{bermejo}. The latter implements a unitary
\begin{equation}\label{U_2D}
U_{2D}=\exp\left[i \frac{\pi}{4}\left(\sum_{(j,k)\in \squarelatt} X_j X_k + \sum_k X_k\right)\right],
\end{equation} 
corresponding to a constant time-evolution of a translation-invariant nearest-neighbor Hamiltonian on $\squarelatt$. The Hamiltonian we use to define an energy measurement problem is of the form \begin{equation}\label{eq:H2d}
\HamiltonianFamily{\vec{w}}=\sum_{j=1}^n w_j U^{\dagger}_{2D} \hat{n}_j U_{2D}, \quad \hat{n}_j := (\mathbb{I}_j-Z_j)/2
\end{equation}
where $\hat{n}_j$ acts locally on qubit $j$ and $\vec{w}=(w_1,...,w_n)$ is a set of real-valued weights. It is easy to see that the unitary (\ref{U_2D}) implements a product  of controlled-Z\mot\cite{nielsen2002quantum} gates in the $X$ basis, whose action on Pauli operators can be analyzed in the stabilizer formalism\mot\cite{GottesmanThesis}. Using this property, we arrive at the final  expression for our Hamiltonian:
\begin{equation}\label{eq:H2dlocal}
\HamiltonianFamily{\vec{w}}= \sum_{j=1}^n w_j X_j \prod_{j:(j,l)\in \squarelatt} Z_l.
\end{equation} 
This Hamiltonian is 5-local, with local terms represented in Fig.~\ref{Fig:lattice}. It is a ``weighted'' non-translation-invariant  variation of the 2D cluster state Hamiltonian used in measurement based quantum computation \cite{Briegel-PRL-2001,RaussendorfPhysRevLett.86.5188}.
\begin{figure}
\includegraphics[scale=0.7]{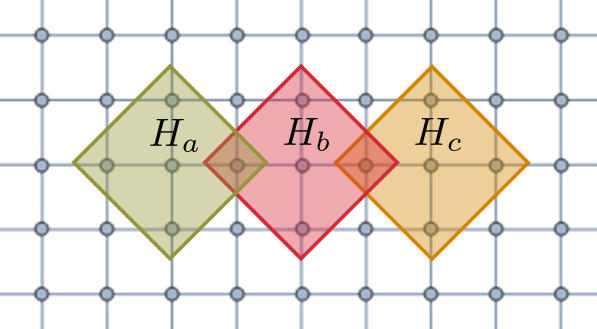}
\caption{ \small{Schematic representation of the Hamiltonian $H_{2D}[\vec{w}]$, from Eq.~\eqref{eq:H2dlocal} and \eqref{eq:H2d}, and three of its local terms $H_a$, $H_b$ and $H_c$.
Each of these terms is a 5-local Hamiltonian acting on a given qubit ($a$, $b$, $c$) and its nearest neighbors on the 2D lattice (represented inside each one of the three squares).
We remark that each term can have a different weight. When all the weights are the same, as in Eq.~\eqref{eq:H2d}, the Hamiltonian is translationally invariant.}}
\label{Fig:lattice}
\end{figure}

\emph{Quantum advantage result.} Our next result gives complexity theoretic evidence that energy measurements of Hamiltonians of form (\ref{eq:H2dlocal}) on product states cannot be efficiently classically simulated. To demonstrate our result, we exploit recent quantum advantage results \cite{bermejo}. Namely, we reduce the problem of measuring these Hamiltonians with super resolution, to that of sampling from the output distribution of a constant-time Hamiltonian evolution of form (\ref{U_2D}) given an input product states of form
\begin{equation}\label{eq:randomstates}
\ket{\psi_{\theta,x}}=\bigotimes_{j=1}^n(\ket{+}+(-1)^{x_j} e^{i \theta_j}\ket{-}),
\end{equation}
where $\theta_j$ and $x_j$ are chosen uniformly at random from the set $\Theta=\{0,\pi/4\}$ and the set $\{0,1\}$, respectively \cite{bermejo}\footnote{We assume measurements are performed in the standard computational basis.}. Precisely, we prove the following reduction between these two problems.

\begin{thm}[Circuit sampling from  energy sampling]\label{thm:clhardnesssuperres}
Consider the problem of measuring 
the Hamiltonians $\HamiltonianFamily{\vec{w}}$ on input product states $\ket{\psi_{\theta,x}}$ with $\theta_j$ and $x_j$ are chosen uniformly at random from the set $\Theta=\{0,\pi/4\}$ and the set $\{0,1\}$, respectively. Assume the existence of an efficient classical  (quantum) algorithm for the associated approximate energy sampling problem with resolution $\delta =O(2^{-n})$, confidence $\eta=1-\epsilon$ and  approximation error $\beta$. Then, there exists an efficient classical (quantum) algorithm for $\gamma-$approximate sampling from the unitary $U_{2D}$ (\ref{U_2D}), acting on the same inputs, with $\gamma=2 \epsilon+\beta$.
\end{thm}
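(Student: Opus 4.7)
\emph{Proof plan.} The starting observation is that $\HamiltonianFamily{\vec{w}} = U_{2D}^{\dagger} H_f\, U_{2D}$ with diagonal part $H_f = \sum_j w_j \hat{n}_j$, so it is quantum-diagonalizable in the sense of Sec.~\ref{subsec:QDiag}: its eigenvectors are $U_{2D}^{\dagger}\ket{z}$ and the corresponding eigenvalues are $f(z) = \vec{w}\cdot z$. An ideal energy measurement on the product input $\ket{\psi_{\theta,x}}$ therefore outputs $f(z)$ with probability $P_z = |\bra{z} U_{2D} \ket{\psi_{\theta,x}}|^2$, which is precisely the $U_{2D}$-output distribution we wish to sample from. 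The plan is to run the construction of Theorem~\ref{thm:superres} \emph{in reverse}: decode $z$ from a measured energy $E'$ by classical post-processing, and then propagate the confidence and sampling errors into a total-variation bound.

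To make the decoding unambiguous I would pick weights $w_j = 2^{-j}$. The map $z \mapsto f(z)$ is then the standard binary encoding $\{0,1\}^n \to \{0, 2^{-n}, \dots, (2^n-1)\,2^{-n}\}$, the spectrum of $\HamiltonianFamily{\vec{w}}$ lies in $[0,1]$, and distinct eigenvalues are separated by at least $2^{-n}$. Since $\delta = O(2^{-n})$, I may assume (decreasing $\delta$ by a constant factor if necessary) that $\delta < 2^{-n-1}$, so that the windows $I_z := [f(z) - \delta, f(z) + \delta]$ are pairwise disjoint. The reduction then reads: query the assumed $\beta$-approximate energy sampler on input $(\HamiltonianFamily{\vec{w}}, \ket{\psi_{\theta,x}})$ to obtain an outcome $E'$; if $E' \in I_z$ for some (necessarily unique) $z$, output $z$; otherwise output a fixed default $z_0$. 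Both computing $f$ and locating the correct window are efficient classically, so a classical (resp.\ quantum) energy sampler yields a classical (resp.\ quantum) $U_{2D}$-sampler.

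The remaining work is the total-variation bookkeeping. Let $R$ denote the distribution on $\{0,1\}^n$ produced by this decoder when fed by an \emph{ideal} energy sampler satisfying Definition~\ref{def:EnergySampling}, and let $P$ be the ideal $U_{2D}$-output distribution. Applying Eq.~\eqref{eq:Ensamp} with $E_A = E_B = f(z)$ shows that the in-window mass $R_z^{\mathrm{win}} := \Pr(E' \in I_z)$ satisfies $R_z^{\mathrm{win}} \geq (1-\epsilon)\,P_z$; the disjointness of the $I_z$ then forces the trash mass $R_{\mathrm{trash}} := 1 - \sum_z R_z^{\mathrm{win}}$ to be at most $\epsilon$. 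Using $P_z - R_z^{\mathrm{win}} \leq \epsilon\,P_z$ and summing over $z$, one bounds $\sum_z |R_z^{\mathrm{win}} - P_z|$ by $2\epsilon - R_{\mathrm{trash}}$, and folding the trash mass onto $z_0$ yields total-variation distance at most $\epsilon$ between $R$ and $P$. Because classical post-processing is contractive in total variation, feeding the $\beta$-approximate sampler in place of the ideal one produces a distribution $R'$ with $d_{\mathrm{TV}}(R', R) \leq \beta$, and the triangle inequality gives $d_{\mathrm{TV}}(R', P) \leq \epsilon + \beta \leq 2\epsilon + \beta$, as claimed.

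The only delicate step is the last one: the reduction itself and the choice of weights are routine, but one must be careful not to double-count the trash mass when collecting the confidence contribution, and to exploit the contraction of TV under (classical, deterministic) post-processing when handling the $\beta$-approximate input. I do not foresee a genuine obstacle beyond this, as the construction is essentially the sampling-side dual of the quantum-diagonalization picture underlying Theorem~\ref{thm:superres}.
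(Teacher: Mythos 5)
Your proof is correct and takes essentially the same approach as the paper's: choose weights $w_j = 2^{-j}$ so $f$ is injective, decode $z$ by locating the (disjoint) window containing the measured energy, and propagate the confidence loss $\epsilon$ and the sampling error $\beta$ through a total-variation triangle inequality to get the bound $2\epsilon + \beta$. The only cosmetic differences are that the paper's Algorithm 1 fixes $\delta = 2^{-n}/3$ and maps every possible outcome $E_m$ to a unique $z$ (so no trash mass or default $z_0$ is needed), and that you mix the normalized TV convention ($\tfrac{1}{2}\ell_1$) with the paper's $\ell_1$ convention in the intermediate steps — your ``at most $\epsilon$'' is really an $\ell_1$ bound of $2\epsilon$ — but this resolves correctly in the final bound.
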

Ref.\mot\cite{bermejo} rules out the existence of efficient classical simulations for short-time Hamiltonian evolutions of form (\ref{U_2D}) based on three plausible complexity-theoretic conjectures: (C1) the non-collapse of the Polynomial Hierarchy; (C2) an approximate average-case hardness conjecture; (C3) an anticoncentration conjecture. These conjectures are  similar to those in \cite{bosonsampling,IQPapprox} and are reviewed in Sec.~\ref{sec:conjectures}. Here, we demonstrate a hardness result for super-resolution Hamiltonian measurements, which is a corollary of Theorem \ref{thm:clhardnesssuperres} and the hardness results in Refs.\mot\cite{bermejo}.
\begin{corollary}[Quantum advantage for approximate super-resolution energy measurements]\label{cor:QASuperresolved}
There cannot exist an efficient classical algorithm for simulating measurements of the Hamiltonian (\ref{eq:H2dlocal}) with super-resolution on input product states as in Theorem \ref{thm:clhardnesssuperres}, for any $\eta=1-\epsilon$, $\beta$ such that $2\epsilon +\beta\leq 1/22$, assuming  complexity-theoretic conjectures C1-C3  below, in section \ref{sec:conjectures}.
\end{corollary}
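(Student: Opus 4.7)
The plan is to deduce Corollary \ref{cor:QASuperresolved} by chaining Theorem \ref{thm:clhardnesssuperres} with the classical hardness of sampling from the IQP unitary $U_{2D}$ on random product states of form (\ref{eq:randomstates}) established in Ref.~\cite{bermejo}. The overall logic is a contrapositive: if efficient classical simulation of the energy measurement existed within the stated resolution, confidence and sampling-error regime, then Theorem~\ref{thm:clhardnesssuperres} would convert it into an efficient classical $\gamma$-approximate sampler for $U_{2D}$ with $\gamma=2\epsilon+\beta$, which is excluded under conjectures C1--C3.

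In more detail, first I would invoke Theorem~\ref{thm:clhardnesssuperres} with the specified parameters: resolution $\delta=O(2^{-n})$ (which qualifies as super-resolution, since $\log(1/\delta)\in\poly(n)$), confidence $\eta=1-\epsilon$, and sampling error $\beta$. The theorem's conclusion produces, from the hypothesized classical energy sampler, an efficient classical algorithm that approximately samples from the output distribution of $U_{2D}|\psi_{\theta,x}\rangle$ in the computational basis, with total-variation error at most $\gamma=2\epsilon+\beta$. The reduction is, by hypothesis, efficient and preserves the class of the simulator (classical maps to classical), so this step is essentially a direct citation plus bookkeeping of parameters; no new ingredients are needed here.

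Next I would invoke the quantum advantage theorem of Ref.~\cite{bermejo} for the 2D IQP family generating $U_{2D}$. That result states that the existence of an efficient classical $\gamma$-approximate sampler for the distribution obtained by measuring $U_{2D}\ket{\psi_{\theta,x}}$, with $\ket{\psi_{\theta,x}}$ drawn as in (\ref{eq:randomstates}), would imply a collapse of the Polynomial Hierarchy, provided $\gamma$ is below the specific constant threshold singled out by the Stockmeyer-type argument in \cite{bermejo} --- namely $1/22$, under the anticoncentration conjecture (C3) and the average-case hardness conjecture (C2). Choosing $2\epsilon+\beta\leq 1/22$ is exactly what is needed to meet this threshold. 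Combining this with the reduction of the previous paragraph contradicts C1--C3; hence no such classical energy-sampling algorithm can exist.

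The only non-routine step is ensuring the numerical match between the constant $1/22$ appearing in the statement and the constant explicitly produced by the IQP hardness argument of \cite{bermejo}: the factor of $2$ in $\gamma=2\epsilon+\beta$ from Theorem~\ref{thm:clhardnesssuperres} must be tracked carefully through the Stockmeyer counting argument so that the regime $2\epsilon+\beta\le 1/22$ truly implies the PH-collapse conclusion of \cite{bermejo}. I expect this parameter-matching, rather than any conceptual difficulty, to be the main thing to verify in a careful write-up; once it is pinned down, Corollary~\ref{cor:QASuperresolved} follows immediately from the two citations with no further argument.
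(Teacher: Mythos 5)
Your proposal is correct and follows essentially the same route as the paper's own proof: apply Theorem~\ref{thm:clhardnesssuperres} to convert a hypothetical classical energy sampler into a classical $\gamma$-approximate sampler for $U_{2D}$ with $\gamma=2\epsilon+\beta$, then invoke Theorem~1 of Ref.~\cite{bermejo} which rules out such a sampler for $\gamma\le 1/22$ under C1--C3. Your closing worry about tracking the factor of $2$ ``through the Stockmeyer argument'' is slightly misplaced --- the $1/22$ threshold is already the output of that argument in \cite{bermejo}, so one simply plugs $\gamma=2\epsilon+\beta$ into it --- but this does not affect the correctness of the deduction.
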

Corollary \ref{cor:QASuperresolved} leads to a natural quantum advantage proposal, since the problem can be solved via the quantum algorithm in Theorem \ref{thm:superres}, by realizing the unitary (\ref{U_2D}), which can conceivably be implemented in several quantum simulation platforms, e.g., cold-atomic ones \cite{bermejo}. Furthermore, an efficient \emph{quantum verification} protocol for the required quantum sampler exists, which only requires reliable single-qubit measurements\mot\cite{bermejo,hangleiter_direct_2017}. Additionally, the result provides a connection between the quantum advantage protocol of \cite{bermejo}, based on sampling measurements of a quantum state in a fixed basis, and a high-precision spectroscopy problem. This relates these  Hamiltonian quantum advantage proposals to a physical problem of interest, complementing  previous work on vibronic spectra~\cite{vibronicspectra}. 

To prove Theorem\mot\ref{thm:clhardnesssuperres}, we choose an appropriate set of weights $\vec{w}$ that establishes a one-to-one map between the eigenvalues of $H_{2D}[\vec{w}]$ and its eigenvectors. The proof technique (detailed next) generally yields quantum advantage results for measurements of Hamiltonians that are diagonalized by ``Ising-type'' evolutions that implement IQP circuits\mot\cite{IQP,IQPexact,IQPapprox}, if the associated diagonalizing unitary $U$ is hard to simulate classically. For instance, we could replace  $U_{2D}$ with  the long-ranged local IQP circuits of \cite{IQPapprox}, or the nearest-neighbor translation invariant Hamiltonian evolutions of \cite{bermejo,Hangleiter2018anticoncentration,gao}. 
The resulting Hamiltonian for an IQP circuit on a $k$-degree interaction graph would be $k+1$ local (see Appendix~\ref{app:hamcalculations} for details). These alternative constructions yield quantum advantage results analogous to Corollary~\ref{cor:QASuperresolved} for Hamiltonians that are $n$-body long-ranged  considering the quantum circuits from \cite{IQPapprox}; 6-local nearest-neighbor on the dangling-bond square lattice for \cite{bermejo}; and 4 or 5-local for the brickwork lattices of\mot\cite{gao,Hangleiter2018anticoncentration}. Alternatively, the diagonalizing unitaries could also be chosen from other families of quantum circuits that lead to quantum advantage beyond IQP circuits, such as random quantum circuits \cite{boixo_characterizing_2016}. In the latter case, however, the resulting Hamiltonian family would be $n$-local and hence of less physical interest.
\\
\emph{Proof of Theorem \ref{thm:clhardnesssuperres}:~}The Hamiltonian $H_{2D}[\vec{w}]$ is local and admits a diagonalization of the form (\ref{eq:specialHam}),  since it is diagonalized by the matrix $U_{2D}$ and its eigenvalues can be efficiently computed classically via the function $f(z)=\sum_i w_i z_i$, where $z_i$ denotes the bit decomposition of the integer $z\in\{0,2^n-1\}$. We will consider the case where $f(z)$ is an invertible function such that the value of $z$ can be efficiently computed from the energy value $f(z)$. In this case the probability of observing a particular energy value (up to error $\delta$) can directly be related to a certain output probability of the quantum circuit $U_{2D}$. For simplicity, we will consider the choice of weights $u_j=2^{-j}$, in which case $f(z)\equiv Id(z)= z 2^{-n}$, for $z\in\{0,...,2^n-1\}$ is a rescaled identity function, which is clearly invertible. Other choices could be possible without necessarily requiring exponentially decaying weights \footnote{Another possible choice of weights which leads to an invertible function $f(z)$ is $u_i=\log(p_i)/C$, where the $p_i$'s are different prime numbers and $C$ is a large enough positive number that guarantees the maximum energy to be less or equal to one. In this case $f(z)=\log(\prod_i p_i^{z_i})$ and given that each number has an unique prime decomposition and the values of $p_i$ are known, $z$ can be computed efficiently from the value of $f(z)$.}.

Let us consider a $\beta$-approximate energy sampler for the Hamiltonian $H_{2D}[\vec{u}]$, given by 
\begin{equation}\label{HId}
H_{2D}[\vec{u}]= \sum_j 2^{-j} U_{2D}^{\dagger} \hat{n}_j U_{2D}.
\end{equation}
To show how an efficient classical sampler from $U_{2D}$ can be constructed from a classical algorithm for super-resolution energy measurements, we consider the following sampling algorithm
\\~\\
\textbf{Algorithm 1}
\begin{itemize}
\item Query a $\beta-$approximate energy sampler with input Hamiltonian $H_{2D}[\vec{u}]$, input state $\rho=\ket{\psi_{\theta,x}}\bra{\psi_{\theta,x}}$ and parameters $\delta =2^{-n}/3$, $\eta=1-\epsilon$.

\item  Given an output $E_m=m\delta $, output the unique $z$ satisfying $E_m\in\{z/2^n-\delta ,z/2^n,z/2^n+\delta \}$.
\end{itemize}
The algorithm queries an energy sampler with a resolution one third smaller than the separation between consecutive eigenvalues of $\HamiltonianFamily{\vec{u}}$, which is $2^{-n}$, in order to guarantee that
the spectral projection in the interval $[z 2^{-n}-\delta, z 2^{-n}+\delta]$ is simply given by $\Pi_{z 2^{-n}}$. As shown below, this allows us to relate the probability that Algorithm 1 outputs $z$ to the quantity $P_z=|\bra{z}U\ket{\psi_{\theta,x}}|^2$,  the latter being the probability of observing the computational basis state $\ket{z}$ from a sampler from $U_{2D}$ with initial state $\ket{\psi_{\theta,x}}$.

To demonstrate this relation, we first analyze the case when $\beta=0$. Using the definition of measurement resolution from Eq.~\eqref{eq:Ensamp}, we obtain that the probability that Algorithm 1 outputs $z$ is given by 
\begin{align}
p_z&= \Pr(|E_m-z/2^n|\leq \delta )\\
&\geq (1-\epsilon)\bra{\psi_{\theta,x}}\Pi_{z 2^{-n}}\ket{\psi_{\theta,x}}\\
&\geq (1-\epsilon)|\bra{z}U_{2D}\ket{\psi_{\theta,x}}|^2\label{eq:ensampvsUsamp}
\end{align} 
The equality follows from the constraints on the probability distribution of an energy sampler, defined by Eq.~\eqref{eq:Ensamp}.
The first inequality results from the spectral projection in the interval $[z 2^{-n}-\delta, z 2^{-n}+\delta]$ being simply given by $\Pi_{z 2^{-n}}$. 
The second inequality follows from the fact that, by construction of $\HamiltonianFamily{\vec{u}}$, we have that
\begin{equation}
\Pi_{z 2^{-n}}=U^{\dagger}_{2D}\ket{z}\bra{z}U_{2D}.
\end{equation} 
From Eq.~\eqref{eq:ensampvsUsamp} we can define $dp_z\geq 0$ such that $p_z=(1-\epsilon)P_z+dp_z$. In addition, it can be seen, from the construction of Algorithm~1, that the probabilities $p_z$ are normalized, which implies that 
\begin{align}
&\sum_z p_z=1\Leftrightarrow \sum_z (1-\epsilon)P_z+dp_z=1\\
&\Rightarrow \sum_z dp_z=\epsilon.
\end{align}
Hence, it follows that 
\begin{align}
\sum_z |P_z-p_z|&=\sum_z |\epsilon P_z -dp_z| \nonumber\\
&\leq\epsilon+\sum_z dp_z
\leq 2\epsilon, \label{probbound2}
\end{align}
which shows that Algorithm~1 is an $2\epsilon$-approximate sampler for $U_{2D}$.

For the case where $\beta\geq 0$, Algorithm~1 outputs $z$ with probability $p'_z$
\begin{equation}\label{probbound3}
\sum_z|p'_z-p_z|\leq \beta.
\end{equation}
Consequently, the bounds from Eqs.~\eqref{probbound2} and \eqref{probbound3} imply that 
\begin{equation}
\sum_z |p'_z-P_z|\leq \beta+2\epsilon.
\end{equation}
This implies that Algorithm~1 is an efficient $\gamma$-approximate sampler for unitary $U_{2D}$ and initial state $\ket{\psi_{\theta,x}}$, with $\gamma=2\epsilon+\beta$.
\QEDB 

\begin{proof}[Proof of corollary \ref{cor:QASuperresolved}]
The classical hardness result for this energy measurement problem follows directly from the application of Theorem \ref{thm:clhardnesssuperres}, which maps this problem to that of sampling from $U_{2D}$, together with the quantum advantage result of Ref.~\cite{bermejo}. Specifically, Theorem~1 from Ref.~\cite{bermejo} states that a classical algorithm cannot approximately sample from outputs of the circuit $U_{2D}$, up to a total variation distance of $\beta=1/22$, in polynomial time. This result is based on three complexity theoretic assumptions which we summarize in section \ref{sec:conjectures}. 
\end{proof}
We remark that a crucial point of the proof of Theorem 2 is to choose the weights $w_i$ so the spectrum of the Hamiltonian is non-degenerate. It can be seen that this requires the weights $w_i$ to be defined up to $O(n)$ bits of precision. Such precision in controlling the parameters of a Hamiltonian is hard to achieve in an experimental setting. Nevertheless, measuring the energy of quantum states according to such Hamiltonians is a valid theoretical question that can be explored using quantum devices beyond what is likely to be simulable efficiently classically given Corollary~\ref{cor:QASuperresolved} . We leave as an open problem whether a similar hardness result to Corollary~\ref{cor:QASuperresolved} can be achieved for measurements of Hamiltonians whose parameters are defined up to constant precision.
\subsection{Complexity theoretic assumptions needed for quantum advantage via sampling problems}\label{sec:conjectures}
The quantum advantage proposal of Corollary \ref{cor:QASuperresolved} relies on complexity-theoretic conjectures. Specifically,  these are the same conjectures\footnote{This follows, from the proofs of Theorem \ref{thm:clhardnesssuperres} and Corollary \ref{cor:QASuperresolved}, since we exploit a reduction to the quantum sampling problem therein.} underlying  the quantum advantage proposal of Ref.~\cite{bermejo}; the latter are, in turn, analogous to those involved in hardness proofs for random universal quantum circuits \cite{boixo_characterizing_2016,bouland_quantum_2018} and slightly weaker than those in the seminal boson sampling and IQP proposals of Refs.\mot\cite{bosonsampling,IQPapprox}. Any progress towards proving the conjectures in Ref.~\cite{bermejo} would simultaneously improve our results.

For the sake of completeness we here summarize the conjectures that enter the quantum advantage results in \cite{bermejo}. The first (C1) is the non-collapse of the Polynomial Hierarchy, a widely believed generalization of the  $\textsf{P}\neq\textsf{NP}$ conjecture \cite{karp1980,fortnow2005beyond,aaronson2016},  first linked to hardness of classical simulation of quantum circuits in \cite{constantdepth}. This assumption alone rules out the existence of near-exact classical simulators for a large family of quantum devices, including the ones we study: specifically, this is the case for quantum devices with output probabilities that are \#P-hard to approximate up to constant relative errors in worst case (see \cite{bosonsampling,IQPexact} and appendix\mot\ref{app:Stockmeyer}). 

Following an approach pioneered in \cite{bosonsampling,IQPapprox}, this classical hardness result can be made noise-robust up to a constant sampling error in total variation distance\mot$\beta$ assuming two additional conjectures about the output probabilities of the quantum device. Specifically, let the set of all output probabilities of our device be $\mathcal{P}_n=\{p_x,x\in\{0,1\}^n\}$, where $n$ is the number of qubits. Then, we require: (C2) an approximate average-case hardness conjecture, which states that a constant fraction of the output probabilities in $\mathcal{P}_n$ are \#P-hard to approximate (up to a constant relative error); (C3)  an anticoncentration conjecture, stating that the output distribution is ``sufficiently flat'', in the sense that $\mathrm{prob}_{p_x\sim \mathcal{P}_n } \left[p_x > \alpha/2^n\right]> \gamma$, for some constants $\alpha, \gamma\in O(1)$. Assuming (C2)-(C3), it can be shown that the existence of an efficient classical algorithm for $\beta$-approximate sampling from the unitary $U_{2D}$ implies the collapse of the Polynomial Hierarchy to its 3rd level. The central technique in this argument is Stockmeyer's counting algorithm \cite{stockmeyer1985approximation} (Sec.~\ref{sec:Stockmeyer} and Appendix  \ref{app:Stockmeyer}), which shows that such a classical sampler implies the existence of an algorithm 
(inside the third level of the Polynomial Hierarchy) for estimating the probabilities in $\mathcal{P}_n$ on average with high accuracy, a \#P-hard problem. This then implies the aforementioned collapse of complexity classes by Toda's theorem\cite{Toda:1991:PHP:125944.125952} (see appendix\mot\ref{app:Stockmeyer} for more details). We note that the specific total variation distance $\beta$ tolerated for the classical sampler depends on the choices of the constant parameters in the statement of the conjectures; the values chosen in Ref.~\cite{bermejo} lead to the threshold $\beta=1/22$.

There has been steady progress towards proving the complexity theoretic assumptions made. In Ref.~\cite{bermejo}, numerical evidence was presented which supports the anticoncentration conjecture (C3), for the choice of angles of the input state from the set $\Theta=\{0,\pi/4\}$. Moreover, this conjecture was recently proven for a uniform random choice of angles of the input state (see Eq.~\ref{eq:randomstates}) from the set $\Theta=[0,2\pi]$ \cite{haferkamp2019closing}. Furthermore, positive evidence for conjecture (C2) is given by approximate worst-case hardness results in \cite{bermejo} (case $\Theta=\{0,\pi/4\}$), as well as by recent proofs of exact average-case hardness and anticoncentration theorems given in \cite{haferkamp2019closing} for the larger set of angles (case $\Theta=[0,2\pi]$). The results in \cite{bermejo,haferkamp2019closing} are analogous to approximate worst-case hardness results in  \cite{bosonsampling,IQPapprox,boixo_characterizing_2016}; proofs of anti-concentration of the output distribution in \cite{IQPapprox,Hangleiter2018anticoncentration}; and exact average-case hardness results in \mot\cite{Aaronson-ProcRS-2005,bouland_quantum_2018}. For all known quantum advantage proposals, including ours, proving approximate average-case hardness remains an open question.
%
\section{Near-exact simulation of energy measurements with standard resolution}\label{sec:standardres_exact}
The previous result establishes a complexity-theoretic obstruction for classical algorithms to simulate  energy measurements of certain local Hamiltonians in the super-resolution regime. In this section, we investigate whether the less restrictive standard resolution regime can be efficiently reached classically. Our main result shows this problem remains hard, with high complexity theoretic evidence (specifically, a collapse of the Polynomial Hierarchy, introduced in section \ref{sec:clhardness}), if we demand exponentially small error parameters.
\begin{thm}[\textbf{Hardness of near-exact standard-resolution energy measurements}]\label{thm:HardnessInaccurateMeasurement}
Let $H$ be a local Hamiltonian acting on $n$ qubits. If there exists an efficient classical algorithm for the energy sampling problem for any such $H$ with product-state inputs; resolution $\delta \in O(1/\poly(n))$; confidence $\eta = 1-O(1/2^{\poly(n)})$; and sampling error $\beta\in O(1/2^{\poly(n)})$; then the Polynomial Hierarchy collapses to the 3rd level. Furthermore, the same holds if $H$ is a nearest-neighbor, translation invariant 5-local Hamiltonian on a 2d lattice.
\end{thm}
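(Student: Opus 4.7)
The plan is to mimic the PH-collapse template of \cite{constantdepth,IQP,IQPexact,bosonsampling,IQPapprox}: a near-exact classical sampler together with Stockmeyer's counting algorithm \cite{stockmeyer1985approximation} furnishes a $\mathsf{BPP}^{\mathsf{NP}}$ machine that approximates each outcome probability $q_m$ of the sampler to multiplicative $1+1/\poly(n)$ accuracy; if any such $q_m$ is worst-case \#P-hard to approximate to relative error, Toda's theorem \cite{Toda:1991:PHP:125944.125952} forces a collapse of $\mathsf{PH}$ to its third level. Under the parameter regime in the statement, $\delta\in O(1/\poly(n))$, $\eta\geq 1-1/2^{\poly(n)}$ and $\beta\leq 1/2^{\poly(n)}$, each $q_m$ is inverse-exponentially close to an aggregated spectral probability $\tr(\Pi_{[E_m-\delta,E_m+\delta]}\,\rho)$, which is precisely what Stockmeyer would access.

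For the general $5$-local statement I would build a Feynman--Kitaev history Hamiltonian \cite{kitaev2002}. Given a polynomial-size quantum circuit $V$ whose acceptance probability on input $\ket{0^n}$ is worst-case \#P-hard to approximate up to constant relative error, I would construct the associated $5$-local $H_{FK}(V)$ with spectral gap $\Delta=1/\poly(n)$ above the ground space, choose a product initial state on the data and clock registers whose overlap with the ground space is proportional (by a known, computable constant) to the acceptance probability of $V$, and fix $\delta<\Delta/2$ so that the lowest energy bin of the sampler isolates precisely this ground-space overlap. Stockmeyer applied to this bin then converts the hypothetical classical sampler into a $\mathsf{BPP}^{\mathsf{NP}}$ approximator for a \#P-hard quantity, closing the argument.

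For the translation-invariant, nearest-neighbor $5$-local case on $\squarelatt$ I would instead recycle the family $\HamiltonianFamily{\vec{w}}$ of Eq.~\eqref{eq:H2dlocal} with all weights $w_j$ equal (so $H$ becomes, up to a scalar, the TI $5$-local cluster-state Hamiltonian), feed the sampler the product inputs of Eq.~\eqref{eq:randomstates}, and reinterpret the outcome distribution as a coarse-grained functional---concretely, the Hamming-weight distribution---of the output of $U_{2D}$ from Eq.~\eqref{U_2D}, in the spirit of Theorem~\ref{thm:clhardnesssuperres}. Stockmeyer applied to this yields a multiplicative-precision approximation in $\mathsf{BPP}^{\mathsf{NP}}$ of these aggregated IQP probabilities, and combining with the \#P-hardness statements underlying \cite{bermejo} would again force the PH collapse.

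The hardest part is matching the \emph{coarseness} of a standard-resolution measurement to a genuinely worst-case \#P-hard quantity. At super-resolution each $q_m$ is a single squared amplitude and \#P-hardness imports verbatim from \cite{bermejo,IQPapprox}; at standard resolution $q_m$ sums all eigenspaces inside a $\delta$-window, so one must either engineer the Hamiltonian so that only one eigenspace contributes to the target bin---the role of the polynomial FK gap in the first reduction---or prove that the aggregated probabilities themselves inherit \#P-hardness, which is the main technical step in the second reduction. I expect the FK route to be the cleaner of the two, and the bulk of the technical effort to lie in choosing $V$, the input state, and the target bin so that the resulting probability is provably \#P-hard to approximate to relative error.
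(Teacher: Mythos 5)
Your high-level template is the paper's: isolate the ground-state bin with a polynomially gapped Hamiltonian, show the bin probability is inverse-exponentially close to a product-state ground-space overlap (the paper packages this as Lemma~\ref{thm:GSprob}, with $\delta=\Delta/3$), and feed the resulting classical sampler to Stockmeyer to estimate a \#P-hard quantity in $\mathsf{BPP}^\mathsf{NP}$, collapsing $\mathsf{PH}$. Your Feynman--Kitaev route for the general $5$-local case matches the paper's Section~\ref{sec:FKhamiltonians}: $H=H_{\mathrm{init}}+H_{\mathrm{prop}}$ has unique zero-energy ground state $\ket{\psi^{(0)}}$ and gap $O(1/T^3)$, the initial state $\ket{y}\ket{T}_c$ is a product state, and $P_{GS}=|\braket{y|U|0}|^2/(T+1)$ inherits \#P-hardness from the chosen circuit family.

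The gap is in your treatment of the translation-invariant nearest-neighbor case. You propose reinterpreting the outcome distribution of $\HamiltonianFamily{\vec{v}}$ as the Hamming-weight distribution of $U_{2D}$ and then invoking ``\#P-hardness of these aggregated IQP probabilities.'' No such aggregated hardness is in \cite{bermejo,IQPapprox}, and proving it is a serious open problem (summing $\binom{n}{k}$ squared amplitudes could, a priori, wash out hardness). But you do not need it: the trick you correctly identify for the FK route---engineer the Hamiltonian so that a single eigenspace occupies the target bin---applies verbatim to the TI Hamiltonian as well. With uniform weights $v_j=1/n$, the eigenvalues of $\HamiltonianFamily{\vec{v}}$ are $|z|/n$, the gap is exactly $1/n$, and the $E=0$ bin contains \emph{only} $z=0^n$, so the bin probability collapses to the single squared amplitude $P_{GS}=|\bra{0}^{\otimes n}U_{2D}\ket{\psi_{\theta,x}}|^2$ of Eq.~\eqref{eq:p_gs1}. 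The worst-case \#P-hardness of approximating this amplitude to relative (or inverse-exponential additive) error then imports directly from \cite{IQPapprox,bermejo}, exactly as in the FK route. In short: the TI case does not require the harder of your two options, because the Hamming-weight-zero eigenspace is non-degenerate; recognizing this closes the only real hole in your argument.
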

To demonstrate this theorem, the idea is to show that such a classical algorithm would efficiently sample an outcome whose probability is $\#$P-hard to approximate. 
This is achieved by constructing local Hamiltonians with two properties:
\begin{itemize}
    \item[(i)] there is a unique ground state with energy 0, and a polynomially small gap to the first excited state.
    \item[(ii)] the overlap of this ground state with a product state is $\#$P-hard to calculate up to an inverse-exponential additive error. 
\end{itemize}
The first condition ensures that a standard-resolution energy measurement protocol is capable of efficiently discriminating the ground state from the rest of the spectra, while the second ensures that a near-exact sampling measurement samples from a $\#$P-hard probability.

We provide two examples of families of local Hamiltonians that fulfill these properties in Secs.~\ref{sec:physicalexamp}. The first example is presented in Secs.~\ref{sec:TIHam} and is a  translation-invariant version of the 5-local  cluster state Hamiltonian from Eq.~\eqref{eq:H2dlocal}. By construction, the ground state is related to an output state of the quantum circuit $U_{2D}$ and hence property (ii) follows directly from the results of Ref.~\cite{bermejo}.
The second example, presented in Sec.~\ref{sec:FKhamiltonians}, is a 4-local Hamiltonian based on Feynman-Kitaev (FK) circuit-to-Hamiltonian constructions \cite{kitaev2002,aharonov2008adiabatic}, used in the proof of equivalence between adiabatic and circuit model quantum computation \cite{aharonov2008adiabatic}. This construction, although more complicated than the first example, gives a general technique to relate the output state of an arbitrary poly-size quantum circuit to the ground state of a local Hamiltonian. Consequently, all the results on $\#$P-hardness of output probabilities of quantum circuits can be translated to results on hardness of Energy Sampling with standard resolution.

Using these examples of local Hamiltonians, we present the proof of Theorem~\ref{thm:HardnessInaccurateMeasurement} in Sec.~\ref{sec:proofthm3}, following standard arguments from the quantum advantage literature. 
\subsection{Physical examples}\label{sec:physicalexamp}
\subsubsection{A 5-local translationally invariant Hamiltonian}\label{sec:TIHam}
We consider the Hamiltonian family defined in Eq.~\eqref{eq:H2dlocal}. If we pick a uniform choice of weights $v_j:=1/n,j=1,\ldots,n$, the resulting Hamiltonian is 5-local nearest-neighbor and translation-invariant. In particular, up to single-qubit rotations, it is the well-known 2D cluster state Hamiltonian \cite{Briegel-PRL-2001,RaussendorfPhysRevLett.86.5188}:
\begin{equation}\label{eq:H2dunif}
\HamiltonianFamily{\vec{v}}=\frac{1}{n}\sum_{j=1}^n X_j \prod_{j:(j,l)\in E_{2D}} Z_l.
\end{equation}
This model is mapped to a trivial unentangled one via the unitary (\ref{U_2D}), which has constant depth.  In a condensed matter sense, this implies that $\HamiltonianFamily{\vec{v}}$ can be seen as the energy density operator (the Hamiltonian divided by the number of particles in the lattice) of a gapped model in the trivial phase \cite{sachdev11,YichenChen15}. In particular, $\HamiltonianFamily{\vec{v}}$ has an inverse polynomial gap $\Omega(1/n)$, which ensures that a standard-resolution measurement can efficiently discriminate its ground state from the rest of the spectra.

Upon an ideal energy measurement of a state $\ket{\psi_{\theta,x}}$ from Eq.~\eqref{eq:randomstates}, the probability of obtaining outcome $E=0$, which is the ground-state energy, is given by 
\begin{equation}\label{eq:p_gs1}
P_{GS}=|\bra{0}^{\otimes n} U_{2D} \ket{\psi_{\theta,x}}|^2.
\end{equation} 
Such probabilities are related to partition functions of Ising models and were shown to be $\#$P-hard to approximate to relative error or to inverse-exponential additive error \cite{IQPapprox, bermejo}. This is a crucial ingredient for the proof of Theorem~\ref{thm:HardnessInaccurateMeasurement}, presented in Sec.~\ref{sec:proofthm3}. 
\subsubsection{ Circuit-to-Hamiltonian constructions}\label{sec:FKhamiltonians}
So far, we have only considered Hamiltonians whose diagonalization is known. Here we present a more general strategy to relate probabilities of outputs of \emph{arbitrary quantum circuits} to probabilities of outcomes of energy measurements, while keeping the Hamiltonian local. To do so, we use the so-called circuit-to-Hamiltonian constructions, based on Feynman clocks, that have been widely used in Hamiltonian complexity and adiabatic quantum computation \cite{kitaev2002, aharonov2008adiabatic}.
In fact, the Hamiltonian that has the properties we require is used in the proof that the adiabatic model of quantum computation is equivalent to the circuit model \cite{aharonov2008adiabatic}. More precisely, we consider the final Hamiltonian at the end of the adiabatic schedule, since it has polynomially small gap and its ground state contains a information about the final state of a quantum computation, as we explain in what follows. We describe this construction in more detail since we build upon it to demonstrate our main result in Sec.~\ref{sec:polybox_enmeas}.

Let us consider a quantum circuit of $T=\poly(n)$ gates $U=U_T U_{T-1}...U_1$ and the propagation Hamiltonian
\begin{multline}\label{eq:Hprop}
H_{prop}=\frac{1}{2}\sum_{t=1}^T\mathbbm{1}\otimes\ket{t}\bra{t}_c+\mathbbm{1}\otimes\ket{t-1}\bra{t-1}_c\\- U_t\otimes \ket{t}\bra{t-1}_c - U_t^{\dagger}\otimes \ket{t-1}\bra{t}_c,
\end{multline}
which is defined on a Hilbert space with $T+1$ clock states $\ket{t}_c$ and $n$ qubits. We will describe the case, where the clock is implemented with $O(\log(T+1))$ qubits, in which case the Hamiltonian $H_{prop.}$ is $O(\log(n))$-local. Nevertheless our result extends to 5-local Hamiltonians using the unary clock implementation of Ref.~\cite{kitaev2002}, or 4-local Hamiltonians using a clock implementation based on the hopping of a excitation in a unidimensional spin-chain (see \cite{caha2018} for a recent discussion on different clock implementations). We define the states 
\begin{align} 
\ket{\eta_y(0)}&=\ket{y}\ket{0}_c\nonumber\\
\ket{\eta_y(t)}&= U_t U_{t-1}...U_1\ket{y}\ket{t}_c,  t\in \{1, ..., T\},
\label{eq:stateseta}
\end{align}
and the subspaces 
\begin{equation}\label{eq:subspaces}
\Omega^{(y)}=\text{span}\{\ket{\eta_y(t)}, t= 0, ..., T\}.
\end{equation}
We note that for each $y$ the subspace $\Omega^{(y)}$ is invariant under the action of the Hamiltonian $H_{prop}$. Hence, we can diagonalize the Hamiltonian in each of these subspaces, obtaining the $2^n$ degenerate ground states 
\begin{equation}\label{eq:historystate}
\ket{\psi^{(y)}}=\frac{1}{\sqrt{T+1}}\sum_{t=0}^T \ket{\eta_y(t)}
\end{equation} 
which have energy $0$. These are called history states as they contain information about the whole history of a quantum computation. In particular, we have that $\bra{x}\bra{t}_c\ket{\psi^{(y)}} = \frac{1}{\sqrt{T+1}} \bra{x}U_t...U_1 \ket{y}$, which is proportional to the transition amplitude from state $y$ to state $x$ after $t$ steps of the computation. It can be shown that $H_{prop}$ is positive semidefinite and has a gap of $O(1/T^2)$ with respect to the first excited state~\cite{kitaev2002}. 

To fix the initial state of the computation to be $\ket{0}$ it is necessary to add an energy penalty Hamiltonian of the form 
\begin{equation}
    H_{init}= \sum_{i=1}^n\ket{1}\bra{1}_i\otimes \ket{0}\bra{0}_c,
\end{equation}
where the projector $\ket{1}\bra{1}_i$ acts on the $i$\emph{th} qubit, ensuring that when the clock is in its initial state $\ket{0}_c$, any computational basis state which is not the $\ket{0}$ state is energetically penalized. The total Hamiltonian 
\begin{equation}\label{eq:HamFK}
    H=H_{init}+H_{prop}
\end{equation}
has a single ground state $\ket{\psi^{(0)}}$ with energy $0$ and gap $\Delta=O(1/T^3)$ \cite{kitaev2002}, ensuring the discernibility of the groundstate via a standard-resolution measurement. Hence, the probability of observing the ground state of $H$ upon an ideal energy measurement of a state $\ket{y}\ket{T}_c$ is given by 
\begin{align}
    P_{GS}&= |\bra{\psi^{(0)}}(\ket{y}\ket{T}_c)|^2\\
    &=\frac{1}{T+1} |\braket{y|U|0}|^2\label{eq:PGSFKHam}
\end{align}
This quantity is $\#$P-hard to estimate to relative error or inverse-exponential additive error for several families of quantum circuits such as IQP~\cite{IQPapprox, ni2012commuting}, or boson sampling \cite{bosonsampling} (which can also be implemented in the circuit model \cite{Qcircuitbosonsampling}), among others. Depending on the family of circuits we choose, this defines a family of local Hamiltonians of the form given by Eq.~\eqref{eq:HamFK} for which Theorem~\ref{thm:HardnessInaccurateMeasurement} applies. 
\subsection{Proof of Theorem \ref{thm:HardnessInaccurateMeasurement}}\label{sec:proofthm3}
Before we proceed, let us prove the following technical lemma that will be useful in what follows. 
\begin{lem}\label{lem:probGS}
Let $H$ be a Hamiltonian with eigenvalues in the interval $[0,1]$, a ground state energy $E_{GS}=0$ and a gap $\Delta$ to the first excited state. Given this Hamiltonian and an initial state $\ket{\psi}$, consider a $\beta-$approximate Energy sampler with $\delta=\Delta/3$ and confidence $\eta=1-\epsilon$. Let $q'_{GS}$ be the probability that this sampler outputs a value $E_m$ in the interval $[0,\Delta/3]$.
We have the following bounds on $q'_{GS}$: 
\begin{equation}\label{eq:bounds}
|q'_{GS}-P_{GS}| \leq \epsilon+\beta. 
\end{equation}
where 
\begin{equation}
P_{GS}=\braket{\psi|\Pi_{GS}|\psi}
\end{equation}
and $\Pi_{GS}$ is the spectral projection in the ground state space of $H$.
\label{thm:GSprob}
\end{lem}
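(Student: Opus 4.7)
The plan is to split the bound via the triangle inequality: introduce the outcome probability $q_{GS}$ of the corresponding \emph{ideal} (noiseless) energy sampler, i.e.\ the sampler of Definition~\ref{def:EnergySampling} with the same resolution $\delta=\Delta/3$ and confidence $\eta=1-\epsilon$. Then I would write
\begin{equation*}
|q'_{GS}-P_{GS}|\leq |q'_{GS}-q_{GS}|+|q_{GS}-P_{GS}|,
\end{equation*}
and bound the two pieces separately. The first piece is immediate: since $q'_{GS}$ and $q_{GS}$ are the probabilities assigned by two distributions that are $\beta$-close in total variation to the same event (the outcome lying in $[0,\Delta/3]$), we get $|q'_{GS}-q_{GS}|\leq\beta$ for free from the definition of $\beta$-approximate energy sampling (Definition~\ref{def:approxEnsamp}).

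The heart of the proof is bounding $|q_{GS}-P_{GS}|\leq\epsilon$ using only the resolution property~\eqref{eq:Ensamp}. First I would apply~\eqref{eq:Ensamp} with $E_A=E_B=0$ (the ground-state energy), which gives the lower bound
\begin{equation*}
q_{GS}=\Pr(E'\in[-\delta,\delta])\geq(1-\epsilon)\,\mathrm{tr}(\Pi_{GS}\ket{\psi}\!\bra{\psi})=(1-\epsilon)P_{GS},
\end{equation*}
where I used that the admissible outputs in $[-\delta,\delta]$ are exactly those in $[0,\Delta/3]$ because $E_m\geq 0$. For the upper bound I apply~\eqref{eq:Ensamp} to the complementary spectral window $[E_A,E_B]=[\Delta,1]$, which contains all excited-state energies by the gap assumption. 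This gives
\begin{equation*}
\Pr(E'\geq \Delta-\delta)=\Pr(E'\geq 2\Delta/3)\geq (1-\epsilon)(1-P_{GS}).
\end{equation*}

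Finally I would use that the events $\{E'\in[0,\Delta/3]\}$ and $\{E'\geq 2\Delta/3\}$ are disjoint (the discretization grid $\{0,\delta,2\delta,\dots,1\}$ with $\delta=\Delta/3$ cleanly separates them, with no grid point between $\Delta/3$ and $2\Delta/3$), so their probabilities sum to at most $1$. This yields
\begin{equation*}
q_{GS}\leq 1-(1-\epsilon)(1-P_{GS})=P_{GS}+\epsilon(1-P_{GS})\leq P_{GS}+\epsilon.
\end{equation*}
Combining both inequalities gives $|q_{GS}-P_{GS}|\leq\epsilon$, and adding the $\beta$ bound from the triangle inequality closes the argument. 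The whole proof is essentially bookkeeping; the only subtle step, which I flagged as the main pitfall, is the choice $\delta=\Delta/3$ and its role in making the ground-state and excited-state output windows provably disjoint on the discrete grid of outcomes, so that disjointness rather than a more pessimistic inclusion-exclusion argument suffices.
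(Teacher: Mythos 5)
Your proof is correct and follows essentially the same route as the paper: lower-bound $q_{GS}$ via the resolution property applied to the ground-state window, upper-bound it by applying the same property to the excited-state window $[\Delta,1]$ and invoking disjointness of the two output windows, then pass from $q_{GS}$ to $q'_{GS}$ using the $\beta$-closeness in total variation. The only cosmetic difference is that you invoke disjointness ("sum to at most $1$") whereas the paper observes the two output windows exhaust the grid and hence sum to exactly $1$; both give the same bound.
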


\begin{proof} 
An energy sampler with parameters $\eta=1-\epsilon$ and $\delta=\Delta/3$ outputs a value in the interval $[0,\Delta/3]$ with probability 
\begin{equation}\label{eq:lowbound}
q_{GS}\geq (1-\epsilon) P_{GS}\geq P_{GS}-\epsilon,
\end{equation}
which follows from the constraints on the outcome probability distribution defined by Eq.~\eqref{eq:Ensamp}. On the other hand, the probability of obtaining a value in the interval $[2\Delta/3,1]$ obeys the bound 
\begin{align}
\bar{q}&\geq (1-\epsilon) \bra{\psi}\Pi_{[\Delta,1]}\ket{\psi}\\
&=(1-\epsilon) (1-P_{GS}) \label{eq:boundqbar},
\end{align}
where we used Eq.~\eqref{eq:Ensamp} in the first step and the fact that $\Pi_{GS}+\Pi_{[\Delta,1]}=I$ in the second step, which follows from the assumption that the Hamiltonian has a gap $\Delta$. In addition, the $\bar{q}$ and $q_{GS}$ probabilities sum up to 1, which implies that 
\begin{align}
q_{GS}&=1-\bar{q}\\
&\leq 1- (1-\epsilon)(1-P_{GS})\\
&= (1-\epsilon) P_{GS}+\epsilon\leq P_{GS}+\epsilon,\label{eq:upbound}
\end{align}
where in the second step we used Eq.~\eqref{eq:boundqbar}. Combining equations (\ref{eq:lowbound} and \eqref{eq:upbound}) we obtain the inequality
\begin{equation}
|q_{GS}-P_{GS}| \leq \epsilon.
\label{eq:prebound}
\end{equation}

A $\beta$-approximate Energy sampler with parameters $\eta=1-\epsilon$ and $\delta=\Delta/3$ outputs a value in the interval $[0,\Delta/3]$ with probability $q'_{GS}$, where $|q'_{GS}-q_{GS}|\leq \beta$. Hence, using \eqref{eq:prebound} and the triangle inequality we conclude the proof. 
\end{proof}

 With this lemma we are ready to prove Theorem \ref{thm:HardnessInaccurateMeasurement}.
\begin{proof}[Proof of theorem \ref{thm:HardnessInaccurateMeasurement}]
Lemma~\ref{thm:GSprob} implies that an approximate energy sampler with $\delta =\Delta/3=1/(3n+3)$, $\epsilon=1-\eta=1/2^{\poly(n)}$ and $ \beta=1/2^{\poly(n)}$ would output $E_m\in [0,\Delta/3]$ with a probability $q_{GS}=P_{GS}+\epsilon'$, where $|\epsilon'|$ is an exponentially small number. Furthermore, we have seen two constructions of local Hamiltonians for which $P_{GS}$ is $\#$P-hard to estimate with inverse-exponential additive error (Eqs.~\eqref{eq:p_gs1} and \eqref{eq:PGSFKHam}).
Let us assume there is an efficient classical energy sampler with the parameters defined in Theorem~\ref{thm:HardnessInaccurateMeasurement} for these Hamiltonians. Following standard arguments in the literature of quantum advantage \cite{bosonsampling, IQPapprox}, this would imply the probability $q'_{GS}$ could be estimated up to an inverse-exponential additive error via Stockmeyer's algorithm, an algorithm in the third level of the polynomial hierarchy (PH). This implies that a $\#$P-hard problem could be solved in the third level of the PH and hence the PH would collapse to the third level. Stockmeyer's algorithm and its connection to quantum advantage is reviewed in more detail in Sec.~\ref{sec:Stockmeyer} and Appendix~\ref{app:Stockmeyer}.
\end{proof}
This gives strong evidence to the impossibility for classical computers to efficiently simulate energy sampling problems with confidence exponentially close to optimal, i.e., $\eta=1-1/2^{\poly(n)}$, inverse-exponential $\beta=1/2^{\poly(n)}$, and standard resolution $\delta=1/\poly(n)$. This can be seen as a classical hardness result for the problem of simulating an ideal implementation of the quantum phase estimation algorithm (with confidence amplification~\cite{aharonov2017}), for measuring the energy of a local Hamiltonian with standard resolution.
\section{Computational Complexity of approximate energy sampling with standard resolution}\label{sec:standardres_approx} 
The previous section provided evidence that classical algorithms for near-exact simulations of energy measurements of many-body Hamiltonians cannot efficiently simulate measurements with standard resolution ($\delta=1/\poly(n)$). From a physical perspective, it is natural to ask whether the previous complexity theoretic results involving collapses of the Polynomial Hierarchy can be extended to approximate simulations.  Specifically, we are interested in extending Theorem \ref{thm:HardnessInaccurateMeasurement} to the regime where the measurement failure probability $\epsilon$ and the sampling error $\beta$ are small constants. We present a no-go lemma and one positive result.  

Approximate sampling measurements of Hamiltonians in the standard resolution regime can be interpreted as examples of quantum sampling problems with a small number of output qubits. It would thus be tempting to apply the Stockmeyer-based techniques of Refs.\mot\cite{IQPapprox,bosonsampling} (cf. section \ref{sec:conjectures}) to study the complexity theory of classically simulating such measurements. Unfortunately, we first point out in Lemma \ref{lemma:ObstructionBMS} (section \ref{sec:Stockmeyer}) that the Stockmeyer argument cannot meaningfully link the hardness of approximate sampling problems with few outputs to a Polynomial Hierarchy collapse.  This is due to an error parameter in such proofs that becomes too large precisely for quantum computations where the number of measured output qubits is ``small'': constant or $O(\log(n))$. This issue is generic and affects, e.g., existing quantum advantage proposals based on variations of the one-clean-qubit (DQC1) model\mot\cite{Morimae14HardnessDQC1,Morimae1704.03640}.

In spite of the above hurdle, our second result (Theorem \ref{thm:polybox}, section \ref{sec:polybox_enmeas} below) gives complexity theoretic evidence of the classical hardness of approximate standard-resolution energy sampling. The result links the worst-case complexity of this problem to that of classically simulating universal quantum computers. Specifically, we prove that the existence of an efficient classical algorithm for this problem would imply the ability to  efficiently classically compute arbitrary marginal output probabilities of  universal poly-sized quantum circuits (a BQP-hard task, in the language of complexity theory\mot\cite{nielsen2002quantum}). This provides evidence against an efficient classical simulation of energy measurements with standard resolution.

The first result in this section highlights the existence of a gap in the complexity theoretic understanding of quantum approximate sampling problems with small output support. The second opens the possibility to develop quantum advantage tests based on such problems. This is however complicated by the lack of tools to study the average-case hardness of this problem. We discuss this latter possibility and associated open challenges in section \ref{sec:OpenQuestions}.
\subsection{Sampling problems with small support ``do not simply'' collapse the Polynomial Hierarchy}\label{sec:Stockmeyer}
Here, we point out a technical obstruction towards extending available approaches in quantum advantage proofs \cite{IQPapprox,bosonsampling} to the standard-resolution approximate energy sampling problems. First, we remark that any algorithm for energy measurements with standard resolution samples from a probability distribution with $\poly(n)$ outcomes. This is in contrast with most quantum advantage proposals, which have an outcome space that is exponentially large. This fact constitutes a roadblock for the application of the proof technique of  Refs.~\cite{IQPapprox,bosonsampling}.

To understand the limitation, we recall (section \ref{sec:conjectures}) that the traditional approach to prove quantum advantage results via sampling problems heavily relies on  Stockmeyer's algorithm Refs.~\cite{IQPapprox,bosonsampling}. The goal there is to induce a Polynomial Hierarchy (PH) collapse assuming, among other assumptions, that it is \#P-hard  to approximate the output probabilities of a quantum device up to very small errors: specifically, a constant relative one if we have anticoncentration. Unfortunately, as shown next, if we tried to adapt the same argument to rule out classical algorithms for sampling problems with poly-sized support, we would have to adopt an analogous average-case conjecture where the error is too large for the assumption to be plausible. Below, we characterize these errors for circuits with an arbitrary number of output bits. Let $q_U$ be the output probability distribution of a quantum circuit $U$, and $0_m$ be the string with $m$  zeroes.

\begin{lem}[\textbf{Stockmeyer error}]\label{lemma:ObstructionBMS}
Let $\mathcal{Q}_n,{n\in\mathbb{N}}$ be a family of uniformly-generated poly-size $n$-qubit quantum circuits  with $m$ output bits and the hiding property
$$\forall U \in \mathcal{Q}_n, x\in\{0,1\}^m, \exists  U_x\in \ \mathcal{Q}_n : q_U(x)=q_{U_x}(0_m).$$
Assume there exists a classical algorithm $\mathcal{A}$ that samples from $q_U$ with $\ell_1$ error $\beta$ in $O(\poly(n,1/\beta))$ time given $U\in\mathcal{Q}_n$. Then, for any $0<\nu<1$, there is an $\textnormal{FBPP}^\textnormal{NP}$ algorithm which, given access to $\mathcal{A}$, approximates $q_U(x),x\in\{0,1\}^m$ up to additive error $\varepsilon$
\begin{equation}
\varepsilon\in O \left( \frac{q_U(x)}{\poly(n)}+ \frac{\beta}{2^m\nu}\left(1+\frac{1}{\poly(n)} \right)\right).\label{eq:Stockmeyer}
\end{equation}
with probability $1-\nu$ over the choice of~$x\in\{0,1\}^m$. 
\end{lem}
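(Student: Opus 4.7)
The plan is to follow the standard Stockmeyer-based argument from Refs.~\cite{bosonsampling,IQPapprox}, paying careful attention to how the errors scale with the size $2^m$ of the output space. Let $q'_U$ denote the distribution of outputs actually produced by the classical sampler $\mathcal{A}$ on circuit $U$, so by hypothesis $\sum_{x\in\{0,1\}^m} |q_U(x) - q'_U(x)| \le \beta$.

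First I would turn the $\ell_1$ guarantee into a pointwise one by averaging. Since the total error is at most $\beta$ spread over $2^m$ outcomes, we have $\mathbb{E}_x[|q_U(x) - q'_U(x)|] \le \beta/2^m$ for a uniformly random $x\in\{0,1\}^m$. Applying Markov's inequality, for any $0<\nu<1$,
\begin{equation}
|q_U(x) - q'_U(x)| \le \frac{\beta}{2^m \nu}
\end{equation}
holds with probability at least $1-\nu$ over $x$. For any such ``typical'' $x$, I would next invoke the hiding property to build a circuit $U_x \in \mathcal{Q}_n$ with $q_U(x) = q_{U_x}(0_m)$. The task of estimating $q_U(x)$ is thereby reduced to the task of estimating a single, fixed output probability $q_{U_x}(0_m)$ of another circuit in the family, to which $\mathcal{A}$ can be applied out of the box.

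Next I would apply Stockmeyer's approximate counting algorithm~\cite{stockmeyer1985approximation} to the probabilistic polynomial-time computation that, on input $U_x$, runs $\mathcal{A}$ and returns $1$ iff the outcome equals $0_m$. Stockmeyer's algorithm, which is in $\textnormal{FBPP}^\textnormal{NP}$, yields a number $\tilde q$ with
\begin{equation}
|\tilde q - q'_{U_x}(0_m)| \le \frac{q'_{U_x}(0_m)}{\poly(n)}.
\end{equation}
Combining this with the pointwise bound from the Markov step and the triangle inequality gives, for typical $x$,
\begin{equation}
|\tilde q - q_U(x)| \le \frac{q'_{U_x}(0_m)}{\poly(n)} + \frac{\beta}{2^m \nu}.
\end{equation}
Finally, substituting $q'_{U_x}(0_m) = q'_U(x) \le q_U(x) + \beta/(2^m \nu)$ into the first term and rearranging produces exactly the form stated in Eq.~\eqref{eq:Stockmeyer}, establishing the lemma.

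The only step requiring real care is the bookkeeping around the $1/(2^m \nu)$ factor from the Markov bound, since this is precisely what renders the lemma useless for sampling problems with a small output space: for $m\in O(\log n)$ the additive error $\beta/(2^m\nu)$ becomes polynomially large, which is too coarse to yield $\#$P-hardness of approximation, and the usual route to a Polynomial Hierarchy collapse breaks down. The remaining ingredients (Markov's inequality, Stockmeyer's algorithm, and the hiding property) are black-box, so I expect no technical obstacle beyond tracking these error terms.
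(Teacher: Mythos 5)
Your overall architecture (Markov to turn the $\ell_1$ guarantee into a pointwise one, Stockmeyer for relative-error estimation, then triangle inequality) matches the paper's, but there is a genuine gap in the middle: the step where you route the estimation through $U_x$ and then assert $q'_{U_x}(0_m)=q'_U(x)$. The hiding property says $q_U(x)=q_{U_x}(0_m)$ for the \emph{true} quantum output distributions of the circuits; it says nothing about the distributions $q'_V$ produced by the classical sampler $\mathcal{A}$ on those circuits. The sampler's guarantee is only $\|q'_V - q_V\|_1\le\beta$ separately for each $V$, so $q'_{U_x}(0_m)$ and $q'_U(x)$ are independent quantities with no identity relating them. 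Your Markov step controls $|q'_U(x)-q_U(x)|$ over random $x$, but what you would need after re-routing through $U_x$ is control of $|q'_{U_x}(0_m)-q_{U_x}(0_m)|$ at the fixed outcome $0_m$, and the only bound available there is the crude $\beta$, not $\beta/(2^m\nu)$. So the triangle inequality chain breaks.

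The fix is to not re-route at all: Stockmeyer's counting algorithm can estimate $\Pr_r[\mathcal{A}(U;r)=x]=q'_U(x)$ to relative error for \emph{any} fixed $x$, not just $0_m$ (hardcode $x$ into the predicate being counted). The paper's proof does exactly this: apply Stockmeyer directly to $p_U(x)$, apply the triangle inequality $|\tilde p_U(x)-q_U(x)|\le q_U(x)/\poly(n)+|p_U(x)-q_U(x)|\bigl(1+1/\poly(n)\bigr)$, and then Markov over $x$ bounds $|p_U(x)-q_U(x)|$ by $\beta/(2^m\nu)$ with probability $1-\nu$. The hiding property appears in the lemma's \emph{hypotheses} only because it is needed in the downstream $\#$P-hardness argument (it is what lets one reinterpret ``estimating $q_U(x)$ for random $x$'' as ``estimating $q_{U_x}(0_m)$ for a randomly drawn circuit''), but it plays no role in the proof of the lemma's conclusion itself. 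Your closing discussion of why the $\beta/(2^m\nu)$ term kills the argument for small $m$ is correct and matches the paper's intent.
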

We provide the proof of this lemma in Appendix \ref{app:Stockmeyer}. There, we also argue in detail how the Stockmeyer argument fails to provide a plausible collapse of PH in the $m<\log(n)$ regime. The basic intuition is as follows. In Refs. \cite{bosonsampling,IQPapprox}, where $m=n$, the algorithm provides a relative error estimation of the output probabilities in average case if we assume anticoncentration. This problem is then conjectured to be $\#$P-hard. Evidence for this conjecture is provided by worst-case results and near-exact worst-to-average reductions (section \ref{sec:conjectures}). Because of Toda's theorem, that provides a collapse of PH, since PH$\subset$P$^\textrm{\#\textrm{P}}$.
In the $m<\log(n)$ regime, the right hand side of Eq. (\ref{eq:Stockmeyer}) has  a term that can only be upper bounded by an inverse polynomial, which limits the accuracy $\varepsilon$ of the algorithm that estimates probabilities in $\textnormal{FBPP}^\textnormal{NP}$. Unfortunately, to induce the same collapse of PH in the case $m<\log(n)$, we would need to show that it is $\#$P-hard to estimate quantum output probabilities with an inverse polynomial error. This is however implausible because, if  it was true, then quantum computers could efficiently solve $\#$P-hard problems, which is believed to be impossible~\cite{Bennet97StrenghtsWeaknnesses_QC,Aaronson-ProcRS-2005}.
\subsection{Hardness of approximate energy measurements with standard resolution}\label{sec:polybox_enmeas}
In the previous section, we discussed obstructions towards proving quantum advantage results based on known complexity theoretic conjectures for approximate standard-resolution energy sampling, as well as quantum sampling problems with small support. This points towards a tension between having practical physically-motivated quantum advantage schemes and strong complexity theoretic proofs of classical hardness. 

This motivates us to consider  different approaches to prove physically-motivated quantum advantage results, which do not rely on the Stockmeyer argument of Refs.\mot\cite{IQPapprox,bosonsampling}. 
In fact, alternative evidence for the impossibility of developing efficient classical algorithms to simulate approximate energy measurements with standard resolution can be drawn from the work of Refs.~\cite{wocjan2006BQP, janzing2008BQP}. Therein, the authors show that a procedure for energy measurements of local Hamiltonians achieving a resolution of $1/\poly(n)$ and confidence $\eta=1-\epsilon$, where $\epsilon$ is a small constant, can be used to decide any problem in BQP, the class of decision problems that can be efficiently solved by a quantum computer\mot\cite{nielsen2002quantum}. The Hamiltonians considered therein are $4$-local non-nearest neighbor Hamiltonians in \cite{wocjan2006BQP} and translational invariant chains of qudits in \cite{janzing2008BQP}. Although these works did not explicitly consider $\beta$ sampling errors, they can be easily  extended to the approximate energy sampling regime where $\beta$ is a small constant.
Consequently, the existence of a classical algorithm for standard resolution approximate energy sampling problems, would imply that classical computers could solve efficiently any problem in BQP.

In this section, we provide additional complexity-theoretic evidence that classical computers cannot efficiently simulate energy measurements with standard resolution. We do so by showing that this problem is at least as hard as estimating marginals of output probability distributions of universal circuits, a problem that is more general than considering only decision problems solvable by quantum circuits. Specifically, our main result (Theorem \ref{thm:polybox}) shows that the ability to simulate the approximate energy sampling problem efficiently would imply the existence of a ``poly box'', in the notation of \mot\cite{pashayan2017}: i.e., an efficient algorithm to estimate any marginal output probability of any poly-size quantum circuit up to a polynomially small error,
a BQP-hard task.
\begin{definition}[Probability estimator or  ``poly-box'']\label{def:polybox} Let $U$ be a poly-size quantum circuit acting on $n$-qubits. An algorithm is said to be a probability estimator or poly-box for $U$ if it can compute an estimate $\hat{p}$ of any marginal probability $p$ of the distribution $|\bra{x}U\ket{0}|^2$ such that
\begin{equation}\label{eq:defpolybox}
\Pr(|p-\hat{p}|\leq \delta_p)\geq 1- \epsilon_p
\end{equation}
in time $O(poly(n, \delta_p^{-1}, \log({\epsilon_p}^{-1})))$.
\end{definition}
The connection between standard-resolution energy measurements and probability estimators is stated precisely in the following theorem.

\begin{thm}[\bf Hardness of approximate standard-resolution energy measurements]\label{thm:polybox}
Let us assume the existence of a classical algorithm for approximate energy sampling for 4-local Hamiltonians on product states, reaching a resolution $\delta$, confidence $\eta=1-\epsilon$ and sampling error $\beta$,
 with a running time of $O(\poly(n,\delta^{-1},\beta^{-1},\epsilon^{-1}))$. This implies existence of a classical poly-box for arbitrary poly-size quantum circuits.
\end{thm}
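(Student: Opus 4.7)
\emph{Proof sketch.} The plan is to reduce the computation of an arbitrary marginal of an arbitrary poly-size quantum circuit to the estimation of a single ground-state probability of a Feynman-Kitaev (FK) Hamiltonian, and then to estimate that probability by repeatedly calling the hypothetical classical energy sampler. Given a circuit $V$ on $n$ qubits, a subset $S$ of the output qubits, and a target assignment $x_S^{\mathrm{target}}$, I build an augmented circuit $U$ on $n+|S|$ qubits that (i) applies $V$ to the main register, (ii) CNOTs each qubit of $S$ onto a fresh ancilla qubit, and (iii) applies $V^\dagger$ to the main register. Writing $\alpha_x:=\bra{x}V\ket{0^n}$, a direct amplitude calculation gives
\begin{equation}
\bra{0^n,x_S^{\mathrm{target}}}U\ket{0^{n+|S|}}=\sum_{x_{\bar S}}|\alpha_{x_S^{\mathrm{target}},x_{\bar S}}|^2=p_V(x_S^{\mathrm{target}}),
\end{equation}
so the probability of the specific output $(0^n,x_S^{\mathrm{target}})$ under $U$ is $p_V(x_S^{\mathrm{target}})^2$. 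The uncomputation $V^\dagger$ is the essential trick: it is what turns the sum of $|\alpha_x|^2$ that defines the marginal into a single outcome amplitude.

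Next, I apply the FK construction of Section~\ref{sec:FKhamiltonians} to $U$, using the spin-chain clock of Ref.~\cite{caha2018} to ensure that $H=H_{\mathrm{init}}+H_{\mathrm{prop}}$ is 4-local and that the clock basis states $\ket{t}_c$ remain product states. This produces a Hamiltonian with unique ground state $\ket{\psi^{(0)}}$ of energy $0$ and spectral gap $\Delta=\Omega(1/T_U^3)=\Omega(1/\poly(n))$, where $T_U$ is the gate count of $U$. Feeding the product state $\rho=\ket{0^n,x_S^{\mathrm{target}}}\bra{0^n,x_S^{\mathrm{target}}}\otimes\ket{T_U}\bra{T_U}_c$ to the classical sampler, Eq.~\eqref{eq:PGSFKHam} yields
\begin{equation}
P_{GS}=\frac{\lvert\bra{0^n,x_S^{\mathrm{target}}}U\ket{0^{n+|S|}}\rvert^2}{T_U+1}=\frac{p_V(x_S^{\mathrm{target}})^2}{T_U+1}.
\end{equation}

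Estimation of $P_{GS}$ then proceeds as follows. Choose $\delta=\Delta/3=\Theta(1/\poly(n))$ and $\epsilon,\beta=\Theta(\delta_p^2/T_U)$. By Lemma~\ref{lem:probGS}, each sampler call returns an outcome in $[0,\Delta/3]$ with probability differing from $P_{GS}$ by at most $\epsilon+\beta$. Averaging $N=O(\poly(n)\log(1/\epsilon_p)/\delta_p^4)$ independent calls and applying Hoeffding's inequality produces an estimator $\hat P$ with $\lvert(T_U+1)\hat P-p_V(x_S^{\mathrm{target}})^2\rvert\le\delta_p^2$ with probability at least $1-\epsilon_p$. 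A truncated square-root rule (output $0$ if $(T_U+1)\hat P<2\delta_p^2$, and $\sqrt{(T_U+1)\hat P}$ otherwise), combined with the elementary bound $|\sqrt{a}-\sqrt{b}|\le|a-b|/\sqrt{\max(a,b)}$, then returns an estimate $\hat p$ with $|\hat p-p_V(x_S^{\mathrm{target}})|=O(\delta_p)$ in total running time $\poly(n,1/\delta_p,\log(1/\epsilon_p))$, matching Definition~\ref{def:polybox}.

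The main obstacle is not the Hamiltonian construction itself, which is a direct application of Feynman-Kitaev to the uncomputation circuit $U$, but the fact that the outcome probability encodes the \emph{square} of the target marginal. One has to verify that the additive-error budget of the poly-box survives taking a square root; this works cleanly precisely because Definition~\ref{def:polybox} demands only inverse-polynomial additive error, so that the amplification $\delta_p^2\mapsto\delta_p$ is harmless, and the below-threshold case can be handled by the trivial estimate $\hat p=0$. A multiplicative-accuracy version of the definition would not survive the squaring and would require a qualitatively different Hamiltonian encoding of the marginal.
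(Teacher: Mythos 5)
Your proof is correct, but it takes a genuinely different route than the paper's. Where the paper modifies the Feynman--Kitaev penalty term into a new $H_{\mathrm{pen}}$ that zeroes out the energy of every history state $\ket{\psi^{(y)}}$ with $y\in S^*$ (so that $P_{GS}=p/(T{+}1)$, \emph{linear} in the marginal $p$, at the cost of a non-trivial spectral-gap argument via the geometrical lemma), you leave the Feynman--Kitaev Hamiltonian in its standard form and instead pre-process the \emph{circuit}: the compute--copy--uncompute circuit $U=V^\dagger\,\mathrm{CNOT}_S\,V$ collapses the marginal into a single transition amplitude, $\bra{0^n,x_S^{\mathrm{target}}}U\ket{0^{n+|S|}}=p$, so the standard construction with unique ground state $\ket{\psi^{(0)}}$ and known gap applies off the shelf. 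The price is that the ground-state probability is $p^2/(T_U{+}1)$, \emph{quadratic} in the marginal, which you handle with a truncated square root. This trade-off is genuine: your reduction avoids the ground-space/gap analysis entirely (the only new spectral ingredient is a citation of the known FK gap), but it incurs a polynomial overhead in sample complexity ($O(1/\delta_p^4)$ rather than the paper's $O(1/\delta_p^2)$) and, as you correctly flag, it hinges on Definition~\ref{def:polybox} asking only for additive inverse-polynomial error -- a multiplicative version would break under the squaring, whereas the paper's linear encoding would survive it unchanged. Both proofs produce a 4-local Hamiltonian via a spin-chain clock and a product-state input, and both close via Lemma~\ref{lem:probGS} plus Hoeffding, so the endgame is the same; the difference is purely in whether the marginal is encoded at the Hamiltonian level or at the circuit level.
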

Theorem\mot\ref{thm:polybox} shows that standard-resolution energy measurements can be used to estimate arbitrary output probabilities of quantum circuits, and not just of single-qubit measurements,  generalizing the results of~\cite{wocjan2006BQP,janzing2008BQP}.

To prove theorem \ref{thm:polybox}, we show that it is possible to encode any marginal probability of a quantum circuit's output distribution as the probability of measuring the ground state energy of a certain 4-local Feynman-Kitaev Hamiltonian, which has a polynomially small gap. Hence, with a polynomial number of energy measurements, the marginal probability can be estimated with a polynomially small error via the Hoeffding bound. 
 
\begin{proof}[Proof of Theorem \ref{thm:polybox}]
Let $p$ be an output probability or a marginal probability of a poly-size quantum circuit $U$ from a family of quantum circuits $\mathcal{C}$ acting on $n$-qubits. For a fixed computational basis state input of the circuit $\ket{x}$, we can write the marginal probability as
 \begin{equation}
p=\sum_{y\in S^*} |\bra{y}U\ket{x}|^2.
\end{equation}  
for a given set of bit strings $S^*$. More precisely, we define $S^*$ has a set of $2^{n-l}$ bit strings of size $n$ where $l$ bits are fixed. We pick the bits at different positions $k_i$, with $k_i\in\{1,...,n\}$, $i\in\{1,...,l\}$, such that the $k_i$\emph{th} bit is fixed to a chosen value $b_i\in \{0,1\}$ i.e., 
\begin{equation}
S^*=\{y~: y_{k_i}=b_i, i\in \{1,..,l\}\}.
\end{equation}
To demonstrate the theorem we will first show that it is possible to construct a local Hamiltonian with two properties: the ground state energy is $E_{GS}=0$ and there is a polynomially small gap to the first excited state. Moreover, the probability of observing the outcome $E_{GS}=0$ after an energy measurement of this Hamiltonian on a product state is given by $P_{GS}=p/(T+1)$, where $T$ is the number of gates of circuit $U$. Such Hamiltonian can be constructed by a simple modification of the circuit-to-Hamiltonian construction mentioned in Sec.~\ref{sec:FKhamiltonians} in the following way.

Let us consider the gate decomposition of circuit $U=U_T U_{T-1}...U_1$ and the propagation Hamiltonian from Eq.~\eqref{eq:Hprop}. Similarly to Sec.~\ref{sec:FKhamiltonians}, we will prove our result for the simplest case, where the clock is implemented with $O(\log(T+1))$ qubits and the Hamiltonian $H_{prop.}$ is $O(\log(n))$-local. The physicality of the Hamiltonian can be improved to 4-local or 5-local, using standard clock implementations~\cite{kitaev2002, caha2018}.

We recall that the subspaces 
\begin{equation}\label{eq:subspaces2}
\Omega^{(y)}=\text{span}\{\ket{\eta_y(t)}, t= 0, ..., T\},
\end{equation}
with the states $\ket{ \eta_y(t)}$ defined in Eq.~\eqref{eq:stateseta},
are invariant under the action of the Hamiltonian $H_{prop}$. Furthermore, $H_{prop}$ has $2^n$ degenerate ground states $\ket{\psi^{(y)}}$ (see Eq.~\eqref{eq:historystate}) with energy zero. Another important property that will be used in the following proof is that $H_{prop}$ is positive semidefinite and has a gap of $O(1/T^2)$ with respect to the first excited state \cite{kitaev2002}. 

To relate the probability of observing the ground state to the marginal probability $p$ we need to lift the ground state such that $\ket{\psi^{(y)}}$ is a ground state only for $y\in S^{*}$. With this aim, we introduce the following penalty Hamiltonian
\begin{equation}
H_{pen}=\sum_{i=1}^l \left(\ket{\bar{b_i}}\bra{\bar{b_i}}\right)_{k_i}\otimes \ket{0}\bra{0}_c,
\end{equation}
where $\bar{b_i}$ denotes the NOT of the bit $b_i$ and the projector $\left(\ket{\bar{b_i}}\bra{\bar{b_i}}\right)_{k_i}$ 
acts non-trivially only on the $k_i$\emph{th} qubit (and as identity in the other qubits). It can easily be checked that $\ket{\eta_y(t)}$ are eigenstates of $H_{pen}$ with eigenenergies 
\begin{equation}\label{eq:enHpen}
\bra{\eta_y(t)}H_{pen}\ket{\eta_y(t)}=
\begin{cases}
0, \text{~~~if~} t\neq 0\\
0, \text{~~~if~} t= 0 \wedge y\in S^*\\
C_y, \text{~if~} t= 0 \wedge y\notin S^*\\
\end{cases}
\end{equation}
where $C_y\geq 1$, since if $y\notin S^*$ then at least one bit of $y$ in one of the positions $k_i$ is in state $\bar{b_i}$. From Eq.~\eqref{eq:enHpen} it can be seen that $H_{pen}$ has no effect in the subspaces $\Omega^{(y)}$ with $y\in S^*$. 

Let us now determine the ground states of $H=H_{prop}+H_{pen}$ as well as the gap to the first excited state. First, let us note that the subspaces $\Omega^{(y)}$ from Eq.~\eqref{eq:subspaces2} are also invariant under the action of $H_{pen}$, which trivially follows from the fact that $\ket{\eta_y(t)}$ are eigenstates of $H_{pen}$. Furthermore, since both $H_{prop}$ and $H_{pen}$ are positive semidefinite matrices, $H$ is also positive semidefinite.

Let us denote as $H^{(y)}$ and $H_{prop}^{(y)}$ the Hamiltonian $H$ and $H_{prop}$ restricted to the subspace $\Omega^{(y)}$, respectively. Then we have
\begin{align}
H^{(y)}&=H_{prop}^{(y)}, \text{~~~if~} y\in S^*\\
H^{(y)}&=H_{prop}^{(y)}+ C_y \ket{\eta_y(0)}\bra{\eta_y(0)}, \text{~~~if~} y\notin S^*,
\end{align} 
which follows from Eq.~\eqref{eq:enHpen}. Hence, for $y\in S^*$ the state of $H^{(y)}$ with the lowest energy is $\ket{\psi^{(y)}}$, which has energy $0$, and the first excited state has energy $O(1/T^2)$. 

The final step needed to demonstrate that these are the only ground states of $H$ is to show that the state with the lowest energy of $H^{(y)}$, with $y\notin S^*$, has an energy at least of $O(1/T^3)$. This implies that no state belonging to the subspace $\Omega^{(y)}$ with $y\notin S^*$ is a ground state of the whole Hamiltonian $H$ and that this Hamiltonian has indeed a gap of $1/\poly(n)$. To show this we use the geometrical lemma \cite{kitaev2002, aharonovsurvey}.
\begin{lem}\textbf(Geometrical Lemma)
Let $H_1$ and $H_2$ be two Hamiltonians with ground state energies $g_1$ and $g_2$, respectively. Also, let $\Delta_1$ and $\Delta_2$ be the their respective gaps to their first excited states. Then the ground state energy of $H$ is
$g\geq g_1+g_2+\Delta(1-\cos(\theta))$, where $\Delta=\text{min}(\Delta_1,\Delta_2)$ and $cos(\theta)$ is the maximum possible absolute value of the overlap between a ground state of $H_1$ with a ground state of $H_2$.   
\end{lem}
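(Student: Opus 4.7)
The plan is to prove the Geometrical Lemma by a variational argument: take an arbitrary normalized state $|\phi\rangle$ in the combined Hilbert space and lower bound $\langle\phi|H|\phi\rangle = \langle\phi|H_1|\phi\rangle+\langle\phi|H_2|\phi\rangle$ by $g_1+g_2+\Delta(1-\cos\theta)$, so that the bound on the ground-state energy $g$ of $H$ follows by minimization. Let $P_1,P_2$ denote the projectors onto the ground spaces of $H_1$ and $H_2$. Since $H_i \succeq g_i\, \id +\Delta_i (\id-P_i)$ on the range of each $P_i$ and its orthogonal complement (using the spectral gap assumptions), one obtains
\begin{equation}
\langle\phi|H_i|\phi\rangle \;\geq\; g_i + \Delta_i\,\langle\phi|(\id-P_i)|\phi\rangle, \quad i=1,2.
\end{equation}
Summing and using $\Delta=\min(\Delta_1,\Delta_2)$ yields
\begin{equation}
\langle\phi|H|\phi\rangle \;\geq\; g_1+g_2 + \Delta\bigl(2-\langle\phi|(P_1+P_2)|\phi\rangle\bigr).
\end{equation}

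The whole estimate therefore reduces to an upper bound on the operator norm of $P_1+P_2$. My plan is to show $\|P_1+P_2\| \leq 1+\cos\theta$, where $\cos\theta$ is exactly the maximum overlap between unit vectors in the two ground spaces (as defined in the lemma statement). This is a classical fact about sums of orthogonal projectors: on the sum of the two subspaces, $P_1+P_2$ can be block-diagonalized into $2\times 2$ (or smaller) blocks associated with pairs of principal vectors, whose eigenvalues are $1\pm\cos\theta_k$ for the principal angles $\theta_k$ between the ground spaces. The largest eigenvalue is thus $1+\cos\theta_{\min}$, and $\cos\theta_{\min}$ is precisely the maximum overlap in the statement. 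Substituting gives $\langle\phi|(P_1+P_2)|\phi\rangle \leq 1+\cos\theta$, hence $\langle\phi|H|\phi\rangle \geq g_1+g_2+\Delta(1-\cos\theta)$, and minimizing over $|\phi\rangle$ yields the claimed bound on $g$.

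I expect the main obstacle, or at least the only nontrivial step, to be the spectral analysis of $P_1+P_2$. The clean route is via Jordan's lemma: the Hilbert space decomposes into $P_1,P_2$-invariant subspaces of dimension at most two, on each of which $P_1+P_2$ acts as a $2\times 2$ matrix whose eigenvalues are readily computed to be $1\pm\cos\theta_k$, with $\cos\theta_k$ the cosine of the principal angle between the two one-dimensional projection directions in that block. An alternative, slightly more elementary route is to write $P_1+P_2 = \id + (P_1+P_2-\id)$ and bound the self-adjoint operator $P_1+P_2-\id$, which squares to $P_1P_2P_1 + P_2P_1P_2 - \id + \{P_1,P_2\}$ modulo some simplification, giving the same spectrum. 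Either way, identifying $\cos\theta$ with the largest singular value of $P_1P_2$ (equivalently, the maximum overlap between unit vectors in the two ground spaces) closes the argument and matches the hypothesis of the lemma.
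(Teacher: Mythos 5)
Your proof is correct, and the approach — lower-bounding $H_i$ by $g_i\,\id + \Delta_i(\id - P_i)$, reducing the problem to bounding $\|P_1 + P_2\|$, and invoking Jordan's lemma to show $\|P_1 + P_2\| \le 1 + \cos\theta$ (equivalently, that the smallest eigenvalue of $(\id-P_1)+(\id-P_2)$ is $1-\cos\theta$) — is exactly the standard proof of Kitaev's geometrical (projection) lemma. The paper itself does not prove this lemma; it imports it from \cite{kitaev2002, aharonovsurvey}, where the argument is the same as yours modulo notation (Kitaev works directly with the complementary projectors onto $\mathcal{L}_i^\perp$ and writes the bound as $2\sin^2(\theta/2)$ rather than $1-\cos\theta$). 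Your identification of $\cos\theta$ with $\|P_1 P_2\|$, the largest singular value of the product of ground-space projectors, is the right way to match the ``maximum overlap'' phrasing in the statement.
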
  
We will use this lemma considering $H_1=H_{prop}^{(y)}$ and $H_2=C_y \ket{\eta_y(0)}\bra{\eta_y(0)}$. In this case, we have $g_1=0$, $\Delta_1=O(1/T^2)$ and $g_2=0$, $\Delta_1=C_y\geq 1$. Hence we can take $\Delta=O(1/T^2)$. Moreover, the ground state of $H_{prop}^{(y)}$ is $\ket{\psi^{(y)}}$ whereas the ground state space of $C_y \ket{\eta_y(0)}\bra{\eta_y(0)}$ is spanned by the states $\ket{\eta_y(t)}$, for $t=1,...,T$.

In order to calculate the maximum overlap between the two ground spaces, let us define $\Pi_2=\sum_{t=1}^T \ket{\eta_y(t)}\bra{\eta_y(t)}$. The state belonging to the ground state space of $H_2$ with the maximum overlap with $\ket{\psi^{(y)}}$ is 
$\ket{v_2}=\Pi_2\ket{\psi^{(y)}}/\sqrt{\bra{\psi^{(y)}}\Pi_2\ket{\psi^{(y)}}}$. Hence, we obtain  
\begin{align}
\cos(\theta)&=|\braket{v_2|\psi^{(y)}}|=\sqrt{\bra{\psi^{(y)}}\Pi_2\ket{\psi^{(y)}}}\\&=\sqrt{\frac{T}{T+1}}\leq 1-\frac{1}{2T}.
\end{align}
Hence, the geometrical lemma implies that the lowest energy state of $H^{(y)}$ for $y\notin S^*$ is $O(1/T^3)$.

This shows that the states $\ket{\psi^{(y)}}$ for $y\in S^*$ are the ground states of $H=H_{prop}+H_{pen}$. Consequently, the probability of observing $0$ upon an ideal energy measurement of a quantum state $\ket{x}\ket{T}$ is given by 
\begin{align}
P_{GS}&=\frac{1}{T+1}\sum_{y\in S^*} |\bra{\psi^{(y)}}(\ket{x}\ket{T})|^2\nonumber \\ & = \frac{1}{T+1}\sum_{y\in S^*} |\bra{y}U_T U_{T-1}...U_1\ket{x}|^2. \nonumber\\
&=\frac{p}{T+1}
\end{align}
Let us assume now that we have a classical algorithm for approximate energy sampling for Hamiltonian $H=H_{prop}+H_{pen}$ and initial state $\ket{x}\ket{T}$, with a running time $O(\poly(n,\delta^{-1},\beta^{-1},\epsilon^{-1}))$. In what follows we demonstrate that we can estimate $p$ from such energy sampler by making use of Lemma~\ref{lem:probGS} together with Hoeffding inequality.
First, we choose the parameters of the energy sampler to be $\delta = \Delta/3=O(1/T^3)$, where $\Delta$ is the gap of Hamiltonian $H$ and 
\begin{equation}
\epsilon+\beta=\frac{\delta_p}{2(T+1)},
\end{equation}
where $\delta_p$ is defined in Eq.~\eqref{eq:defpolybox}. By assumption, the energy sampling algorithm would output one sample in time $\poly(n, \delta_p^{-1})$. Given the choice of parameters $\epsilon,\beta$ we obtain from Lemma~\ref{lem:probGS} that the probability of obtaining an outcome $E_m\in [-\Delta/3, \Delta/3]$ is given by $q'_{GS}$ such that 
\begin{equation}\label{eq:errorq}
|q'_{GS}-P_{GS}|\leq \frac{\delta_p}{2(T+1)}.
\end{equation}
We now demonstrate that we can estimate $q'_{GS}$ within an additive error $\delta_p/(2T+2)$ by querying the energy sampler $s$ times and computing the average number of times an event in the interval $[-\Delta/3, \Delta/3]$ is observed. Let us denote this estimator by $\hat{q}_s$. By Hoeffdings inequality we have that
\begin{equation}\label{eq:hoeffdings}
\Pr\left( |q'_{GS}-\hat{q}_s|\geq \frac{\delta_p}{2(T+1)}\right)
\leq 2\exp\left(-\frac{2s{\delta_p}^2}{4(T+1)^2}\right)
\end{equation}
 In order to reach an error of $\epsilon_p$ we choose 
\begin{align}
\exp\left(- \frac{s{\delta_p}^2}{2(T+1)^2}\right)=\epsilon_p\\
\Leftrightarrow s=\log{\left(\frac{2}{\epsilon_p}\right)}\frac{2(T+1)^2}{{\delta_p}^2}\label{eq:nsamples}
\end{align}
Hence, with $s=O(poly(n, \delta_p^{-1}, \log({\epsilon_p}^{-1})))$ number of queries to the energy sampler we can obtain the estimator $\hat{q}_s$ within the desired error bound. 

Finally we can construct our estimator for $p$ as $\hat{p}=(T+1)\hat{q}_s$. Given the choice of $s$ from Eq.~\eqref{eq:nsamples} and using Eqs.~\eqref{eq:errorq} and \eqref{eq:hoeffdings} we obtain 
\begin{equation}
\Pr(|p-\hat{p}|\leq \delta_p)\geq 1-\epsilon_p
\end{equation} 
as desired. The number of samples needed is $s=O(poly(n, \delta_p^{-1}, \log({\epsilon_p}^{-1})))$ and for each sample we require time $\poly(n,\delta_p^{-1})$, which shows that the total running time to compute $\hat{p}$ is $O(poly(n, \delta_p^{-1}, \log({\epsilon_p}^{-1})))$ as required by Definition~\ref{def:polybox}.
\end{proof}

\subsection{Random Energy Measurement (REM) Test}
\label{sec:OpenQuestions}

Given that standard-resolution energy measurements are BQP-hard to simulate, this problem has the potential to be a suitable physically motivated test at which quantum devices can outperform classical simulations. In particular, this suggests the following quantum advantage experiments where one measures the energy of a random local Hamiltonian on an input product state.
\vspace{4pt}
\noindent\hrule
\vspace{5pt}
\noindent\textbf{Random Energy Measurement (REM) Test:}
\begin{enumerate}[leftmargin=13pt]
\item A classical user picks a random many-body local Hamiltonian $H=\sum_i J_i h_i$, where the local terms $\{h_i\}_i$ and couplings $\{J_j\}_j$ are chosen from a target class at random, according to a distribution that is efficient to sample from classically. The latter ensemble is picked so that complexity theoretic evidence against an efficient classical simulation is available.
\item The experimenter performs an approximate standard-resolution measurement of the energy of the Hamiltonian $H$  picked from the ensemble.
\item The test is to produce samples from the output distribution of the above protocol in $O(\poly{(n,\delta^{-1},\beta^{-1}, \epsilon^{-1})})$ time, within a $\beta$ error in the total variation distance.
\end{enumerate}
\hrule
\vspace{5pt}
As discussed in section \ref{sec:Stockmeyer}, this type of test is radically different than usual sampling problems \cite{bosonsampling,IQPapprox}. Further research is thus deemed necessary to fully understand its classical simulability. Below, we discuss open directions for future investigations.

\emph{Complexity of the REM Test.} Theorem \ref{thm:polybox} provides worst-case evidence against the classical simulability of standard-resolution energy measurements. Yet, it provides no insight into the hardness of simulating a \emph{typical} instance of this problem for different ensembles of random local Hamiltonians. Natural candidates that could lead to hard problems on average are, for example, Feynman-Kitaev Hamiltonians encoding random quantum circuits, frustrated spin systems \cite{diep_frustrated_2004} and universal quantum Hamiltonians \cite{cubitt_universal_2018}. However, in order to develop higher confidence against the classical simulability of the REM Test, it would be required to develop new tools to study average-case complexity of problems in BQP. This is because known polynomial-interpolation techniques used in worst-to-average reductions are rather sensitive to noise \cite{bouland_quantum_2018,movassagh_cayley_2019,haferkamp2019closing,movassagh_efficient_2018} and cannot be readily applied in the standard-resolution regime.

\emph{Is the REM test easy to verify?} Commonly-studied quantum sampling problems, with an exponentially large output space, are difficult to verify\mot\cite{Qsup}. Verifying statistical closure in the total variation distance to the ideal distribution based on a single-round of classical post-processing requires exponentially many experimental samples\mot\cite{hangleiter_sample_2019}. Although sample-efficient verification approaches have been proposed\mot\cite{boixo_characterizing_2016,bouland_quantum_2018,Aaronson:2017:CFQ:3135595.3135617}, the verification takes exponential time and works under circuit-level assumptions on noise \cite{boixo_characterizing_2016,Aaronson:2017:CFQ:3135595.3135617,bouland_quantum_2018} or new complexity conjectures \cite{Aaronson:2017:CFQ:3135595.3135617}. If reliable single-qubit measurements are available, a polynomial-time verification is sometimes possible\mot\cite{hangleiter_direct_2017,bermejo,miller_quantum_2017}.

On the other hand, measurements with standard resolution could potentially be easier to verify than commonly-studied sampling problems. Indeed, it is easy to see that they bypass the no-go theorem in Ref.\mot\cite{hangleiter_sample_2019} because of the polynomial size of the output space: 
via the Hoeffding bound, collecting statistics and computing the variation distance to the ideal distribution gives a trivial brute-force exponential-time verification method with polynomial sample complexity, which could be applicable in near-term experiments of limited size. In this context, available  verification methods for BQP-complete problems \cite{verificationElham,verificationfitzsimons2018post,verificationmahadev2018classical,hangleiter_direct_2017} could potentially be useful.
\section{Discussion}\label{sec:discussion}
In this work, we have established energy measurements of many-body Hamiltonians as a problem that can show a reliable quantum advantage based on complexity theoretic arguments. We thus make a key step towards bringing quantum advantage demonstrations closer to physically-motivated questions.

We have analyzed two different regimes regarding the scaling of the cost of performing the measurement, which can be quantified either by the evolution time of the experiment or the number of quantum gates applied in a quantum algorithm such as quantum phase estimation. We have defined a standard-resolution measurement as a measurement where the cost in increasing the resolution is polynomial in $1/\delta$, which is the standard performance of quantum devices for general local Hamiltonians; and super-resolution measurements, where the measurement cost scales as  $\poly(\log(1/\delta))$, which can be achieved by a quantum device if we exploit certain knowledge about the Hamiltonian, such as its diagonalization (or in general, the ability to exponentially fast-forward its time-evolution \cite{aharonov2017}).

We prove that for \emph{super-resolution measurements} it is possible to achieve a quantum advantage demonstration even when the measurement is approximate  (with a system-size-independent sampling error), based on plausible complexity-theoretic assumptions similar to ones used in the "quantum computational supremacy"
literature \cite{bosonsampling,IQPapprox,bermejo,boixo_characterizing_2016,bouland_quantum_2018,haferkamp2019closing}.  
The quantum advantage originates in the super-resolution measurement of a simple 5-local cluster state Hamiltonian on the 2D square lattice on product state inputs. The protocol can be implemented using the  quantum simulation scheme of Ref.~\cite{bermejo} and requires the short time-evolution of a nearest-neighbor on a 2D square lattice, suitable for implementations in, for example, optical lattices. Moreover, this scheme can be efficiently certified using reliable single-qubit measurements. These results open up the possibility of near-term experimental demonstrations of quantum advantage via energy sampling.

In the \emph{standard-resolution regime}, we find two types of complexity-theoretic evidence against the efficient classical simulation of measuring local Hamiltonians. First, in a reminiscent fashion to early work on IQP circuits \cite{IQP}, 
we find a classical simulation to be impossible for simple 2D translation-invariant Hamiltonians in the near-exact sampling regime with inverse-exponential sampling errors, unless the Polynomial Hierarchy collapses. 
Additionally, we point out limitations of available techniques \cite{bosonsampling,IQPapprox} to extend this quantum advantage result to an approximate-sampling one, based on Polynomial Hierarchy collapses. Second, using circuit-to-Hamiltonian constructions and connections to random universal quantum circuits \cite{boixo_characterizing_2016,Aaronson:2017:CFQ:3135595.3135617,bouland_quantum_2018}, we give alternative complexity-theoretic evidence that approximate standard-resolution measurements of 4-local Hamiltonians can show a quantum advantage: a classical simulation here would lead to an efficient classical estimator of marginal probabilities of universal quantum circuits, a BQP-hard task \cite{wocjan2006BQP,janzing2008BQP}.

Three potential improvements related to the technical results in our work are: Firstly, a major challenge would be to tie the hardness of simulating approximate standard-resolution energy measurements to well-known complexity-theoretic conjectures beyond BPP$\neq$BQP. This program would require techniques beyond the Stockmeyer-method and
Polynomial-Hierarchy collapses\cite{bosonsampling,IQPapprox}; Secondly, in this manuscript we have not investigated the verifiability of the standard resolution proposals. However, due to the small size of the energy output space,  classical verification methods similar to those in Refs.\mot\cite{boixo_characterizing_2016,Aaronson:2017:CFQ:3135595.3135617,bouland_quantum_2018} could be developed; Thirdly, it would be interesting to improve the locality of our Hamiltonians. The locality of the examples based on 5-local Hamiltonians on 2D lattices could, in principle, be improved using the general techniques presented in Ref.~\cite{cubitt_universal_2018}, which show that there exist simple 2-local universal Hamiltonians that can reproduce the physics of other Hamiltonians, including the energy spectrum and measurement statistics. The examples based on circuit-to-Hamiltonian constructions could be improved using techniques such as, e.g., perturbation gadgets or space-time circuit-to-Hamiltonian constructions~\cite{oliveira2005, gosset2015universal}.

We have also introduced the concept of 
\textit{quantum Hamiltonian diagonalization} which, up to our knowledge, is a new concept which can be of interest beyond the scope of this work. 
It characterizes a class of Hamiltonians for which there exists a 
polynomial-size quantum circuit $U$ mapping its eigenbasis to the computational basis
and whose eigenvalues can be computed efficiently by a function $f(z)$ on a quantum computer.
This guarantees the exponential fast-forwarding of the dynamics of the Hamiltonian. 


For the purposes of demonstrating quantum advantage, we  restricted ourselves to examples where $f(z)$ can be computed efficiently classically -- this simplifies the protocol for super-resolution measurements so it can be potentially implemented in near-term devices. In this case, the reason why the energy measurement problem is hard to simulate classically results from the fact that the populations of the different eigenstates are \# P-hard to approximate. 
It would be interesting to construct new examples of quantum advantage for super-resolution energy measurements where the classical hardness results from the need to sample from the right eigenvalues (to exponential accuracy) and not only from the right eigenstate populations.

Indeed, Ref.~\cite{aharonov2017} shows that such constructions are, in principle, possible. Therein, the authors present an academic example of a Hamiltonian which can be measured by a quantum algorithm (Shor's algorithm) with super resolution, even though it is not known how to compute its eigenvalues efficiently classically. This Hamiltonian is given by $\hat{H}=U_{ME} + U_{ME}^{\dagger}$, where $U_{ME}$ is the unitary implementation of the modular exponentiation used in Shor's algorithm.  It is interesting to point out that it is possible to find a quantum diagonalization for the aforementioned Hamiltonian using existing quantum algorithms for decomposing finite commutative groups\mot\cite{cheung_decomposing_2001,bermejo-vega_computational_2014}. This academic example shows how quantum algorithms
could potentially be helpful for expanding our knowledge of the inner structure of a given Hamiltonian (here its quantum diagonalization), which can later be exploited to answer specific questions about a given physical system more accurately (here obtaining its spectra). Finding families of Hamiltonians with stronger physical motivation than this example, for which its quantum diagonalization could be learned thanks to a quantum algorithm, would offer a new and interesting application for quantum computers, potentially leading to new exponential speed-ups over known classical algorithms.

Overall, we believe this work brings a new perspective into questions related to Hamiltonian complexity \cite{HamcomplexityCS} by focusing on problems that can be solved efficiently by quantum devices, unlike
problems such as the QMA-complete ground state problem  \cite{QPEkitaev1995}.
Furthermore, we believe it could inspire new demonstrations of quantum advantage for measuring other quantities of interest in quantum many-body physics, which would strengthen the belief that quantum computers and simulators can answer problems about quantum matter beyond the power of any present or future classical algorithms.  
%
%
%
\section*{Acknowledgements}
We acknowledge discussions with Shantanav Chakraborty, Ashley Montanaro, Alex Grilo and Johannes Bausch. Furthermore, we thank Anthony Leverrier, Jelmer Renema and Dominik Hangleiter for valuable feedback on this manuscript. We also thank Mayec Rancel for designing Figure~\ref{Fig:measurement}. LN acknowledges funding from Wiener-Anspach Foundation and F.R.S.-FNRS. J.B.V.\mot acknowledges funding from the European Union’s Horizon 2020 research and innovation programme under the Marie Skłodowska-Curie grant agreement Nº 754446 and UGR Research and Knowledge Transfer Found – Athenea3i. Part of this project took place during J.B.V.'s stay at Freie Universität Berlin, supported by ERC (TAQ). R.G.-P. is a Research Associate of the F.R.S.-FNRS and acknowledges funding from Wiener-Anspach foundation. 
\bibliographystyle{unsrtnat}
\bibliography{Bibliography}

\begin{thebibliography}{102}
\providecommand{\natexlab}[1]{#1}
\providecommand{\url}[1]{\texttt{#1}}
\expandafter\ifx\csname urlstyle\endcsname\relax
  \providecommand{\doi}[1]{doi: #1}\else
  \providecommand{\doi}{doi: \begingroup \urlstyle{rm}\Url}\fi

\bibitem[Gross and Bloch(2017)]{Bloch2017}
Christian Gross and Immanuel Bloch.
\newblock Quantum simulations with ultracold atoms in optical lattices.
\newblock \emph{Science}, 357\penalty0 (6355):\penalty0 995--1001, 2017.
\newblock ISSN 0036-8075.
\newblock \doi{10.1126/science.aal3837}.

\bibitem[Bernien et~al.(2017)Bernien, Schwartz, Keesling, Levine, Omran,
  Pichler, Choi, Zibrov, Endres, Greiner, Vuletić, and Lukin]{Lukin2017}
Hannes Bernien, Sylvain Schwartz, Alexander Keesling, Harry Levine, Ahmed
  Omran, Hannes Pichler, Soonwon Choi, Alexander~S. Zibrov, Manuel Endres,
  Markus Greiner, Vladan Vuletić, and Mikhail~D. Lukin.
\newblock Probing many-body dynamics on a 51-atom quantum simulator.
\newblock \emph{Nature}, 551:\penalty0 579--584, 2017.
\newblock \doi{10.1038/nature24622}.

\bibitem[Zhang et~al.(2017)Zhang, Pagano, Hess, Kyprianidis, Becker, Kaplan,
  Gorshkov, Gong, and Monroe]{Monroe2017}
J.~Zhang, G.~Pagano, P.~W. Hess, A.~Kyprianidis, P.~Becker, H.~Kaplan, A.~V.
  Gorshkov, Z.-X. Gong, and C.~Monroe.
\newblock Observation of a many-body dynamical phase transition with a 53-qubit
  quantum simulator.
\newblock \emph{Nature}, 551:\penalty0 601--604, 2017.
\newblock \doi{10.1038/nature24654}.

\bibitem[Friis et~al.(2018)Friis, Marty, Maier, Hempel, Holzäpfel, Jurcevic,
  Plenio, Huber, Roos, Blatt, and Lanyon]{Lanyon2017}
Nicolai Friis, Oliver Marty, Christine Maier, Cornelius Hempel, Milan
  Holzäpfel, Petar Jurcevic, Martin~B. Plenio, Marcus Huber, Christian Roos,
  Rainer Blatt, and Ben Lanyon.
\newblock Observation of entangled states of a fully controlled 20-qubit
  system.
\newblock \emph{Phys. Rev. X}, 8:\penalty0 021012, 2018.
\newblock \doi{10.1103/PhysRevX.8.021012}.

\bibitem[Neill et~al.(2018)Neill, Roushan, Kechedzhi, Boixo, Isakov,
  Smelyanskiy, Barends, Burkett, Chen, Chen, Chiaro, Dunsworth, Fowler, Foxen,
  Graff, Jeffrey, Kelly, Lucero, Megrant, Mutus, Neeley, Quintana, Sank,
  Vainsencher, Wenner, White, Neven, and Martinis]{Martinis2018}
C.~Neill, P.~Roushan, K.~Kechedzhi, S.~Boixo, S.~V. Isakov, V.~Smelyanskiy,
  R.~Barends, B.~Burkett, Y.~Chen, Z.~Chen, B.~Chiaro, A.~Dunsworth, A.~Fowler,
  B.~Foxen, R.~Graff, E.~Jeffrey, J.~Kelly, E.~Lucero, A.~Megrant, J.~Mutus,
  M.~Neeley, C.~Quintana, D.~Sank, A.~Vainsencher, J.~Wenner, T.~C. White,
  H.~Neven, and J.~M. Martinis.
\newblock A blueprint for demonstrating quantum supremacy with superconducting
  qubits.
\newblock \emph{Science}, pages 195--199, 2018.
\newblock \doi{10.1126/science.aao4309}.

\bibitem[Arute et~al.(2019)Arute, Arya, Babbush, Bacon, Bardin, Barends,
  Biswas, Boixo, Brandao, Buell, et~al.]{googlesupremacy2019}
Frank Arute, Kunal Arya, Ryan Babbush, Dave Bacon, Joseph~C Bardin, Rami
  Barends, Rupak Biswas, Sergio Boixo, Fernando~GSL Brandao, David~A Buell,
  et~al.
\newblock Quantum supremacy using a programmable superconducting processor.
\newblock \emph{Nature}, 574\penalty0 (7779):\penalty0 505--510, 2019.
\newblock \doi{10.1038/s41586-019-1666-5}.

\bibitem[Islam et~al.(2015)Islam, Ma, Preiss, Tai, Lukin, Rispoli, and
  Greiner]{Greiner2015}
Rajibul Islam, Ruichao Ma, Philipp~M. Preiss, M.~Eric Tai, Alexander Lukin,
  Matthew Rispoli, and Markus Greiner.
\newblock Measuring entanglement entropy in a quantum many-body system.
\newblock \emph{Nature}, 528:\penalty0 77–83, 2015.
\newblock \doi{10.1038/nature15750}.

\bibitem[Brydges et~al.(2019)Brydges, Elben, Jurcevic, Vermersch, Maier,
  Lanyon, Zoller, Blatt, and Roos]{Roos2019}
Tiff Brydges, Andreas Elben, Petar Jurcevic, Benoit Vermersch, Christine Maier,
  Ben~P. Lanyon, Peter Zoller, Rainer Blatt, and Christian~F. Roos.
\newblock Probing rényi entanglement entropy via randomized measurements.
\newblock \emph{Science}, 364:\penalty0 260--263, 2019.
\newblock \doi{10.1126/science.aau4963}.

\bibitem[Kaufman et~al.(2016)Kaufman, Tai, Lukin, Rispoli, Schittko, Preiss,
  and Greiner]{Greiner2016}
Adam~M. Kaufman, M.~Eric Tai, Alexander Lukin, Matthew Rispoli, Robert
  Schittko, Philipp~M. Preiss, and Markus Greiner.
\newblock Quantum thermalization through entanglement in an isolated many-body
  system.
\newblock \emph{Science}, 353:\penalty0 794--800, 2016.
\newblock \doi{10.1126/science.aaf6725}.

\bibitem[Keesling et~al.(2019)Keesling, Omran, Levine, Bernien, Pichler, Choi,
  Samajdar, Schwartz, Silvi, Sachdev, Zoller, Endres, Greiner, Vuletić, and
  Lukin]{Lukin2019}
Alexander Keesling, Ahmed Omran, Harry Levine, Hannes Bernien, Hannes Pichler,
  Soonwon Choi, Rhine Samajdar, Sylvain Schwartz, Pietro Silvi, Subir Sachdev,
  Peter Zoller, Manuel Endres, Markus Greiner, Vladan Vuletić, and Mikhail~D.
  Lukin.
\newblock Quantum kibble–zurek mechanism and critical dynamics on a
  programmable rydberg simulator.
\newblock \emph{Nature}, 568:\penalty0 207–211, 2019.
\newblock \doi{10.1038/s41586-019-1070-1}.

\bibitem[Choi et~al.(2016)Choi, Hild, Zeiher, Schau\ss, Rubio-Abadal, Yefsah,
  Khemani, Huse, Bloch, and Gross]{choi_exploring_2016}
Jae-yoon Choi, Sebastian Hild, Johannes Zeiher, Peter Schau\ss, Antonio
  Rubio-Abadal, Tarik Yefsah, Vedika Khemani, David~A. Huse, Immanuel Bloch,
  and Christian Gross.
\newblock Exploring the many-body localization transition in two dimensions.
\newblock \emph{Science}, 352\penalty0 (6293):\penalty0 1547--1552, June 2016.
\newblock \doi{10.1126/science.aaf8834}.

\bibitem[Rispoli et~al.(2019)Rispoli, Lukin, Schittko, Kim, Tai, Léonard, and
  Greiner]{Greiner2019}
Matthew Rispoli, Alexander Lukin, Robert Schittko, Sooshin Kim, M.~Eric Tai,
  Julian Léonard, and Markus Greiner.
\newblock Quantum critical behaviour at the many-body localization transition.
\newblock \emph{Nature}, 2019.
\newblock \doi{10.1038/s41586-019-1527-2}.

\bibitem[Gärttner et~al.(2017)Gärttner, Bohnet, Safavi-Naini, Wall, J., and
  Rey]{Rey2017}
Martin Gärttner, Justin~G. Bohnet, Arghavan Safavi-Naini, Michael~L. Wall,
  Bollinger~John J., and Ana~Maria Rey.
\newblock Measuring out-of-time-order correlations and multiple quantum spectra
  in a trapped-ion quantum magnet.
\newblock \emph{Nature Physics}, 13:\penalty0 781–786, 2017.
\newblock \doi{10.1038/nphys4119}.

\bibitem[Vermersch et~al.(2019)Vermersch, Elben, Sieberer, Yao, and
  Zoller]{Zoller2019}
B.~Vermersch, A.~Elben, L.~M. Sieberer, N.~Y. Yao, and P.~Zoller.
\newblock Probing scrambling using statistical correlations between randomized
  measurements.
\newblock \emph{Phys. Rev. X}, 9:\penalty0 021061, Jun 2019.
\newblock \doi{10.1103/PhysRevX.9.021061}.

\bibitem[Satzinger et~al.(2021)Satzinger, Liu, Smith, Knapp, Newman, Jones,
  Chen, Quintana, Mi, Dunsworth, et~al.]{topological1}
KJ~Satzinger, Y~Liu, A~Smith, C~Knapp, M~Newman, C~Jones, Z~Chen, C~Quintana,
  X~Mi, A~Dunsworth, et~al.
\newblock Realizing topologically ordered states on a quantum processor.
\newblock \emph{arXiv preprint arXiv:2104.01180}, 2021.

\bibitem[Semeghini et~al.(2021)Semeghini, Levine, Keesling, Ebadi, Wang,
  Bluvstein, Verresen, Pichler, Kalinowski, Samajdar, et~al.]{topological2}
Giulia Semeghini, Harry Levine, Alexander Keesling, Sepehr Ebadi, Tout~T Wang,
  Dolev Bluvstein, Ruben Verresen, Hannes Pichler, Marcin Kalinowski, Rhine
  Samajdar, et~al.
\newblock Probing topological spin liquids on a programmable quantum simulator.
\newblock \emph{arXiv preprint arXiv:2104.04119}, 2021.

\bibitem[Trotzky et~al.(2012)Trotzky, Chen, Flesch, McCulloch, Schollw\"ock,
  Eisert, and Bloch]{Trotzky}
S.~Trotzky, Y.-A. Chen, A.~Flesch, I.~P. McCulloch, U.~Schollw\"ock, J.~Eisert,
  and I.~Bloch.
\newblock Probing the relaxation towards equilibrium in an isolated strongly
  correlated one-dimensional {B}ose gas.
\newblock \emph{Nature Phys.}, 8:\penalty0 325--330, 2012.
\newblock \doi{doi:10.1038/nphys2232}.

\bibitem[Braun et~al.(2015)Braun, Friesdorf, Hodgman, Schreiber, Ronzheimer,
  Riera, del Rey, Bloch, Eisert, and Schneider]{Schreiber-pnas-2015}
S.~Braun, M.~Friesdorf, J.~S. Hodgman, M.~Schreiber, J.~P. Ronzheimer,
  A.~Riera, M.~del Rey, I.~Bloch, J.~Eisert, and U.~Schneider.
\newblock Emergence of coherence and the dynamics of quantum phase transitions.
\newblock \emph{PNAS}, 112:\penalty0 3641--3646, March 2015.
\newblock \doi{10.1073/pnas.1408861112}.

\bibitem[Tang(2019)]{tang2019}
Ewin Tang.
\newblock A quantum-inspired classical algorithm for recommendation systems.
\newblock In \emph{Proceedings of the 51st Annual ACM SIGACT Symposium on
  Theory of Computing}, pages 217--228. ACM, 2019.
\newblock \doi{10.1145/3313276.3316310}.

\bibitem[Aaronson and Arkhipov(2011)]{bosonsampling}
Scott Aaronson and Alex Arkhipov.
\newblock The computational complexity of linear optics.
\newblock In \emph{Proceedings of the Forty-Third Annual ACM Symposium on
  Theory of Computing}, STOC '11, page 333–342, New York, NY, USA, 2011.
  Association for Computing Machinery.
\newblock ISBN 9781450306911.
\newblock \doi{10.1145/1993636.1993682}.

\bibitem[Bremner et~al.(2016)Bremner, Montanaro, and Shepherd]{IQPapprox}
Michael~J. Bremner, Ashley Montanaro, and Dan~J. Shepherd.
\newblock Average-case complexity versus approximate simulation of commuting
  quantum computations.
\newblock \emph{Phys. Rev. Lett.}, 117:\penalty0 080501, Aug 2016.
\newblock \doi{10.1103/PhysRevLett.117.080501}.

\bibitem[Boixo et~al.(2018)Boixo, Isakov, Smelyanskiy, Babbush, Ding, Jiang,
  Bremner, Martinis, and Neven]{boixo_characterizing_2016}
Sergio Boixo, Sergei~V. Isakov, Vadim~N. Smelyanskiy, Ryan Babbush, Nan Ding,
  Zhang Jiang, Michael~J. Bremner, John~M. Martinis, and Hartmut Neven.
\newblock Characterizing quantum supremacy in near-term devices.
\newblock \emph{Nature Physics}, page~1, April 2018.
\newblock ISSN 1745-2481.
\newblock \doi{10.1038/s41567-018-0124-x}.

\bibitem[Gao et~al.(2017)Gao, Wang, and Duan]{gao}
Xun Gao, Sheng-Tao Wang, and L.-M. Duan.
\newblock Quantum supremacy for simulating a translation-invariant ising spin
  model.
\newblock \emph{Phys. Rev. Lett.}, 118:\penalty0 040502, Jan 2017.
\newblock \doi{10.1103/PhysRevLett.118.040502}.

\bibitem[Bermejo-Vega et~al.(2018)Bermejo-Vega, Hangleiter, Schwarz,
  Raussendorf, and Eisert]{bermejo}
Juan Bermejo-Vega, Dominik Hangleiter, Martin Schwarz, Robert Raussendorf, and
  Jens Eisert.
\newblock Architectures for quantum simulation showing a quantum speedup.
\newblock \emph{Phys. Rev. X}, 8:\penalty0 021010, Apr 2018.
\newblock \doi{10.1103/PhysRevX.8.021010}.

\bibitem[Harrow and Montanaro(2017)]{Qsup}
Aram~W Harrow and Ashley Montanaro.
\newblock Quantum computational supremacy.
\newblock \emph{Nature}, 549\penalty0 (7671):\penalty0 203, 2017.
\newblock \doi{10.1038/nature23458}.

\bibitem[Preskill(2018)]{preskill}
John Preskill.
\newblock Quantum {C}omputing in the {NISQ} era and beyond.
\newblock \emph{{Quantum}}, 2:\penalty0 79, August 2018.
\newblock ISSN 2521-327X.
\newblock \doi{10.22331/q-2018-08-06-79}.

\bibitem[Bremner et~al.(2017)Bremner, Montanaro, and
  Shepherd]{bremner_achieving_2017}
Michael~J. Bremner, Ashley Montanaro, and Dan~J. Shepherd.
\newblock Achieving quantum supremacy with sparse and noisy commuting quantum
  computations.
\newblock \emph{Quantum}, 1:\penalty0 8, April 2017.
\newblock \doi{10.22331/q-2017-04-25-8}.

\bibitem[Miller et~al.(2017)Miller, Sanders, and Miyake]{miller_quantum_2017}
J.~Miller, S.~Sanders, and A.~Miyake.
\newblock Quantum supremacy in constant-time measurement-based computation: {A}
  unified architecture for sampling and verification.
\newblock \emph{Phys. Rev. A}, 96\penalty0 (6):\penalty0 062320, December 2017.
\newblock \doi{10.1103/PhysRevA.96.062320}.

\bibitem[Aaronson and Chen(2017)]{Aaronson:2017:CFQ:3135595.3135617}
Scott Aaronson and Lijie Chen.
\newblock Complexity-theoretic foundations of quantum supremacy experiments.
\newblock In \emph{Proceedings of the 32Nd Computational Complexity
  Conference}, CCC '17, pages 22:1--22:67, Germany, 2017. Schloss
  Dagstuhl--Leibniz-Zentrum fuer Informatik.
\newblock ISBN 978-3-95977-040-8.
\newblock \doi{10.4230/LIPIcs.CCC.2017.22}.

\bibitem[Hangleiter et~al.(2017)Hangleiter, Kliesch, Schwarz, and
  Eisert]{hangleiter_direct_2017}
D.~Hangleiter, M.~Kliesch, M.~Schwarz, and J.~Eisert.
\newblock Direct certification of a class of quantum simulations.
\newblock \emph{Quantum Sci. Technol.}, 2\penalty0 (1):\penalty0 015004, 2017.
\newblock ISSN 2058-9565.
\newblock \doi{10.1088/2058-9565/2/1/015004}.

\bibitem[Bouland et~al.(2019)Bouland, Fefferman, Nirkhe, and
  Vazirani]{bouland_quantum_2018}
A.~Bouland, B.~Fefferman, C.~Nirkhe, and U.~Vazirani.
\newblock Quantum {supremacy} and the {complexity} of {random} {circuit}
  {sampling}.
\newblock \emph{Nature Phys.}, 15:\penalty0 159--163, March 2019.
\newblock \doi{10.1038/s41567-018-0318-2}.

\bibitem[Deshpande et~al.(2018)Deshpande, Fefferman, Tran, Foss-Feig, and
  Gorshkov]{deshpande_dynamical_2018}
Abhinav Deshpande, Bill Fefferman, Minh~C. Tran, Michael Foss-Feig, and
  Alexey~V. Gorshkov.
\newblock Dynamical {Phase} {Transitions} in {Sampling} {Complexity}.
\newblock \emph{Physical Review Letters}, 121\penalty0 (3):\penalty0 030501,
  July 2018.
\newblock \doi{10.1103/PhysRevLett.121.030501}.

\bibitem[Muraleedharan et~al.(2019)Muraleedharan, Miyake, and
  Deutsch]{muraleedharan_quantum_2019}
Gopikrishnan Muraleedharan, Akimasa Miyake, and Ivan~H. Deutsch.
\newblock Quantum computational supremacy in the sampling of bosonic random
  walkers on a one-dimensional lattice.
\newblock \emph{New Journal of Physics}, 21\penalty0 (5):\penalty0 055003, May
  2019.
\newblock ISSN 1367-2630.
\newblock \doi{10.1088/1367-2630/ab0610}.
\newblock arXiv: 1805.01858.

\bibitem[Maskara et~al.(2019)Maskara, Deshpande, Tran, Ehrenberg, Fefferman,
  and Gorshkov]{maskara_complexity_2019}
Nishad Maskara, Abhinav Deshpande, Minh~C. Tran, Adam Ehrenberg, Bill
  Fefferman, and Alexey~V. Gorshkov.
\newblock Complexity phase diagram for interacting and long-range bosonic
  {Hamiltonians}.
\newblock \emph{arXiv:1906.04178}, June 2019.

\bibitem[Kitaev(1995)]{QPEkitaev1995}
A~Yu Kitaev.
\newblock Quantum measurements and the abelian stabilizer problem.
\newblock \emph{arXiv:quant-ph/9511026}, 1995.

\bibitem[Abrams and Lloyd(1999)]{abrams1999quantum}
Daniel~S. Abrams and Seth Lloyd.
\newblock Quantum algorithm providing exponential speed increase for finding
  eigenvalues and eigenvectors.
\newblock \emph{Phys. Rev. Lett.}, 83:\penalty0 5162--5165, Dec 1999.
\newblock \doi{10.1103/PhysRevLett.83.5162}.

\bibitem[Yang et~al.(2020)Yang, Grankin, Sieberer, Vasilyev, and
  Zoller]{QNDzoller}
Dayou Yang, Andrey Grankin, Lukas~M Sieberer, Denis~V Vasilyev, and Peter
  Zoller.
\newblock Quantum non-demolition measurement of a many-body hamiltonian.
\newblock \emph{Nature communications}, 11\penalty0 (1):\penalty0 1--8, 2020.
\newblock \doi{10.1038/s41467-020-14489-5}.

\bibitem[Huh et~al.(2015)Huh, Guerreschi, Peropadre, McClean, and
  Aspuru-Guzik]{vibronicspectra}
Joonsuk Huh, Gian~Giacomo Guerreschi, Borja Peropadre, Jarrod~R McClean, and
  Al{\'a}n Aspuru-Guzik.
\newblock Boson sampling for molecular vibronic spectra.
\newblock \emph{Nature Photonics}, 9\penalty0 (9):\penalty0 615, 2015.
\newblock \doi{10.1038/nphoton.2015.153}.

\bibitem[Hamilton et~al.(2017)Hamilton, Kruse, Sansoni, Barkhofen, Silberhorn,
  and Jex]{Hamilton2017}
Craig~S. Hamilton, Regina Kruse, Linda Sansoni, Sonja Barkhofen, Christine
  Silberhorn, and Igor Jex.
\newblock Gaussian boson sampling.
\newblock \emph{Phys. Rev. Lett.}, 119:\penalty0 170501, Oct 2017.
\newblock \doi{10.1103/PhysRevLett.119.170501}.

\bibitem[(Editor)(2012)]{Barone12}
Vincenzo~Barone (Editor).
\newblock \emph{Computational Strategies for Spectroscopy: from Small Molecules
  to Nano Systems}.
\newblock Wiley, Hoboken, New Jersey, USA, 2012.
\newblock ISBN 0470470178.

\bibitem[Santoro et~al.(2007)Santoro, Improta, Lami, Bloino, and
  Barone]{Santoro2007}
Fabrizio Santoro, Roberto Improta, Alessandro Lami, Julien Bloino, and Vincenzo
  Barone.
\newblock Effective method to compute franck-condon integrals for optical
  spectra of large molecules in solution.
\newblock \emph{The Journal of Chemical Physics}, 126\penalty0 (8):\penalty0
  084509, 2007.
\newblock \doi{10.1063/1.2437197}.

\bibitem[Wocjan and Zhang(2006)]{wocjan2006BQP}
Pawel Wocjan and Shengyu Zhang.
\newblock Several natural bqp-complete problems.
\newblock \emph{arXiv:quant-ph/0606179}, 2006.

\bibitem[Janzing et~al.(2008)Janzing, Wocjan, and Zhang]{janzing2008BQP}
Dominik Janzing, Pawel Wocjan, and Shengyu Zhang.
\newblock A single-shot measurement of the energy of product states in a
  translation invariant spin chain can replace any quantum computation.
\newblock \emph{New Journal of Physics}, 10\penalty0 (9):\penalty0 093004,
  2008.
\newblock \doi{10.1088/1367-2630/10/9/093004}.

\bibitem[Atia and Aharonov(2017)]{aharonov2017}
Yosi Atia and Dorit Aharonov.
\newblock Fast-forwarding of hamiltonians and exponentially precise
  measurements.
\newblock \emph{Nature communications}, 8\penalty0 (1):\penalty0 1572, 2017.
\newblock \doi{10.1038/s41467-017-01637-7}.

\bibitem[Von~Neumann(2018)]{vonneumann}
John Von~Neumann.
\newblock \emph{Mathematical foundations of quantum mechanics: New edition}.
\newblock Princeton university press, 2018.

\bibitem[Aharonov et~al.(2002)Aharonov, Massar, and Popescu]{AMPmeasuring2002}
Y.~Aharonov, S.~Massar, and S.~Popescu.
\newblock Measuring energy, estimating hamiltonians, and the time-energy
  uncertainty relation.
\newblock \emph{Phys. Rev. A}, 66:\penalty0 052107, Nov 2002.
\newblock \doi{10.1103/PhysRevA.66.052107}.

\bibitem[Wocjan et~al.(2003)Wocjan, Janzing, Decker, and
  Beth]{wocjan2003PSPACE}
Pawel Wocjan, Dominik Janzing, Thomas Decker, and Thomas Beth.
\newblock Measuring 4-local n-qubit observables could probabilistically solve
  {PSPACE}.
\newblock \emph{arXiv:quant-ph/0308011}, 2003.

\bibitem[Shepherd and Bremner(2009)]{IQP}
Dan Shepherd and Michael~J. Bremner.
\newblock Temporally unstructured quantum computation.
\newblock \emph{Proceedings of the Royal Society of London A: Mathematical,
  Physical and Engineering Sciences}, 465\penalty0 (2105):\penalty0 1413--1439,
  2009.
\newblock ISSN 1364-5021.
\newblock \doi{10.1098/rspa.2008.0443}.

\bibitem[Morimae et~al.(2014)Morimae, Fujii, and
  Fitzsimons]{Morimae14HardnessDQC1}
Tomoyuki Morimae, Keisuke Fujii, and Joseph~F. Fitzsimons.
\newblock Hardness of classically simulating the one-clean-qubit model.
\newblock \emph{Phys. Rev. Lett.}, 112:\penalty0 130502, Apr 2014.
\newblock \doi{10.1103/PhysRevLett.112.130502}.

\bibitem[Morimae(2017)]{Morimae1704.03640}
Tomoyuki Morimae.
\newblock Hardness of classically sampling the one-clean-qubit model with
  constant total variation distance error.
\newblock \emph{Phys. Rev. A}, 96:\penalty0 040302, Oct 2017.
\newblock \doi{10.1103/PhysRevA.96.040302}.

\bibitem[Kitaev et~al.(2002)Kitaev, Shen, and Vyalyi]{kitaev2002}
Alexei~Yu Kitaev, Alexander Shen, and Mikhail~N Vyalyi.
\newblock \emph{Classical and quantum computation}.
\newblock Number~47. American Mathematical Soc., 2002.

\bibitem[Childs et~al.(2002)Childs, Deotto, Farhi, Goldstone, Gutmann, and
  Landahl]{childsmeasurement}
Andrew~M. Childs, Enrico Deotto, Edward Farhi, Jeffrey Goldstone, Sam Gutmann,
  and Andrew~J. Landahl.
\newblock Quantum search by measurement.
\newblock \emph{Phys. Rev. A}, 66:\penalty0 032314, Sep 2002.
\newblock \doi{10.1103/PhysRevA.66.032314}.

\bibitem[Lloyd(1996)]{lloyd1996universal}
Seth Lloyd.
\newblock Universal quantum simulators.
\newblock \emph{Science}, pages 1073--1078, 1996.
\newblock \doi{10.1126/science.273.5278.1073}.

\bibitem[Berry et~al.(2015{\natexlab{a}})Berry, Childs, Cleve, Kothari, and
  Somma]{truncatedtaylor}
Dominic~W. Berry, Andrew~M. Childs, Richard Cleve, Robin Kothari, and
  Rolando~D. Somma.
\newblock Simulating hamiltonian dynamics with a truncated taylor series.
\newblock \emph{Phys. Rev. Lett.}, 114:\penalty0 090502, Mar
  2015{\natexlab{a}}.
\newblock \doi{10.1103/PhysRevLett.114.090502}.

\bibitem[Somma et~al.(2002)Somma, Ortiz, Gubernatis, Knill, and
  Laflamme]{somma2002}
R.~Somma, G.~Ortiz, J.~E. Gubernatis, E.~Knill, and R.~Laflamme.
\newblock Simulating physical phenomena by quantum networks.
\newblock \emph{Phys. Rev. A}, 65:\penalty0 042323, Apr 2002.
\newblock \doi{10.1103/PhysRevA.65.042323}.

\bibitem[Somma(2019)]{somma2019}
Rolando~D Somma.
\newblock Quantum eigenvalue estimation via time series analysis.
\newblock \emph{New Journal of Physics}, 21\penalty0 (12):\penalty0 123025,
  2019.
\newblock \doi{10.1088/1367-2630/ab5c60}.

\bibitem[Roushan et~al.(2017)Roushan, Neill, Tangpanitanon, Bastidas, Megrant,
  Barends, Chen, Chen, Chiaro, Dunsworth, Fowler, Foxen, Giustina, Jeffrey,
  Kelly, Lucero, Mutus, Neeley, Quintana, Sank, Vainsencher, Wenner, White,
  Neven, Angelakis, and Martinis]{Roushanmanybody}
P.~Roushan, C.~Neill, J.~Tangpanitanon, V.~M. Bastidas, A.~Megrant, R.~Barends,
  Y.~Chen, Z.~Chen, B.~Chiaro, A.~Dunsworth, A.~Fowler, B.~Foxen, M.~Giustina,
  E.~Jeffrey, J.~Kelly, E.~Lucero, J.~Mutus, M.~Neeley, C.~Quintana, D.~Sank,
  A.~Vainsencher, J.~Wenner, T.~White, H.~Neven, D.~G. Angelakis, and
  J.~Martinis.
\newblock Spectroscopic signatures of localization with interacting photons in
  superconducting qubits.
\newblock \emph{Science}, 358\penalty0 (6367):\penalty0 1175--1179, 2017.
\newblock ISSN 0036-8075.
\newblock \doi{10.1126/science.aao1401}.

\bibitem[Terhal and DiVincenzo(2004)]{constantdepth}
B.~M. Terhal and D.~P. DiVincenzo.
\newblock Adaptive quantum computation, constant depth quantum circuits and
  arthur-merlin games.
\newblock \emph{Quantum Inf. Comput.}, 4\penalty0 (2):\penalty0 134--145, 2004.

\bibitem[Bremner et~al.(2010)Bremner, Jozsa, and Shepherd]{IQPexact}
Michael~J. Bremner, Richard Jozsa, and Dan~J. Shepherd.
\newblock Classical simulation of commuting quantum computations implies
  collapse of the polynomial hierarchy.
\newblock \emph{Proceedings of the Royal Society of London A: Mathematical,
  Physical and Engineering Sciences}, 2010.
\newblock ISSN 1364-5021.
\newblock \doi{10.1098/rspa.2010.0301}.

\bibitem[Berry et~al.(2015{\natexlab{b}})Berry, Childs, and Kothari]{BCK2015}
D.~W. Berry, A.~M. Childs, and R.~Kothari.
\newblock Hamiltonian simulation with nearly optimal dependence on all
  parameters.
\newblock In \emph{2015 IEEE 56th Annual Symposium on Foundations of Computer
  Science}, pages 792--809, Oct 2015{\natexlab{b}}.
\newblock \doi{10.1109/FOCS.2015.54}.

\bibitem[Low and Chuang(2017)]{low2017}
Guang~Hao Low and Isaac~L. Chuang.
\newblock Optimal hamiltonian simulation by quantum signal processing.
\newblock \emph{Phys. Rev. Lett.}, 118:\penalty0 010501, Jan 2017.
\newblock \doi{10.1103/PhysRevLett.118.010501}.

\bibitem[Low and Chuang(2019)]{low2016}
Guang~Hao Low and Isaac~L. Chuang.
\newblock Hamiltonian {S}imulation by {Q}ubitization.
\newblock \emph{{Quantum}}, 3:\penalty0 163, July 2019.
\newblock ISSN 2521-327X.
\newblock \doi{10.22331/q-2019-07-12-163}.

\bibitem[Chakraborty et~al.(2019)Chakraborty, Gily{\'e}n, and
  Jeffery]{chakraborty2018}
Shantanav Chakraborty, Andr{\'a}s Gily{\'e}n, and Stacey Jeffery.
\newblock {The Power of Block-Encoded Matrix Powers: Improved Regression
  Techniques via Faster Hamiltonian Simulation}.
\newblock In Christel Baier, Ioannis Chatzigiannakis, Paola Flocchini, and
  Stefano Leonardi, editors, \emph{46th International Colloquium on Automata,
  Languages, and Programming (ICALP 2019)}, volume 132 of \emph{Leibniz
  International Proceedings in Informatics (LIPIcs)}, pages 33:1--33:14,
  Dagstuhl, Germany, 2019. Schloss Dagstuhl--Leibniz-Zentrum fuer Informatik.
\newblock ISBN 978-3-95977-109-2.
\newblock \doi{10.4230/LIPIcs.ICALP.2019.33}.

\bibitem[Aharonov and Ben-Or(1997)]{aharonov1999fault}
D.~Aharonov and M.~Ben-Or.
\newblock Fault-tolerant quantum computation with constant error.
\newblock In \emph{Proceedings of the Twenty-Ninth Annual ACM Symposium on
  Theory of Computing}, STOC '97, page 176–188, New York, NY, USA, 1997.
  Association for Computing Machinery.
\newblock ISBN 0897918886.
\newblock \doi{10.1145/258533.258579}.

\bibitem[Nielsen and Chuang(2002)]{nielsen2002quantum}
Michael~A Nielsen and Isaac Chuang.
\newblock Quantum computation and quantum information, 2002.

\bibitem[Blaizot and Ripka(1986)]{blaizot1986quantum}
Jean-Paul Blaizot and Georges Ripka.
\newblock \emph{Quantum theory of finite systems}, volume~3.
\newblock MIT press Cambridge, MA, 1986.

\bibitem[Shchesnovich(2013)]{shchesnovich2013}
VS~Shchesnovich.
\newblock The second quantization method for indistinguishable particles
  (lecture notes in physics, ufabc 2010).
\newblock \emph{arXiv preprint arXiv:1308.3275}, 2013.

\bibitem[Cirstoiu et~al.(2020)Cirstoiu, Holmes, Iosue, Cincio, Coles, and
  Sornborger]{variationalFF}
Cristina Cirstoiu, Zoe Holmes, Joseph Iosue, Lukasz Cincio, Patrick~J Coles,
  and Andrew Sornborger.
\newblock Variational fast forwarding for quantum simulation beyond the
  coherence time.
\newblock \emph{npj Quantum Information}, 6\penalty0 (1):\penalty0 1--10, 2020.
\newblock \doi{10.1038/s41534-020-00302-0}.

\bibitem[Gottesman(1997)]{GottesmanThesis}
Daniel Gottesman.
\newblock Stabilizer codes and quantum error correction.
\newblock \emph{arXiv preprint quant-ph/9705052}, 1997.
\newblock \doi{10.7907/rzr7-dt72}.

\bibitem[Briegel and Raussendorf(2001)]{Briegel-PRL-2001}
H.~J. Briegel and R.~Raussendorf.
\newblock Persistent entanglement in arrays of interacting particles.
\newblock \emph{Phys. Rev. Lett.}, 86:\penalty0 910, 2001.

\bibitem[Raussendorf and Briegel(2001)]{RaussendorfPhysRevLett.86.5188}
R.~Raussendorf and H.~J. Briegel.
\newblock A one-way quantum computer.
\newblock \emph{Phys. Rev. Lett.}, 86:\penalty0 5188--5191, May 2001.
\newblock \doi{10.1103/PhysRevLett.86.5188}.

\bibitem[Hangleiter et~al.(2018)Hangleiter, Bermejo-Vega, Schwarz, and
  Eisert]{Hangleiter2018anticoncentration}
Dominik Hangleiter, Juan Bermejo-Vega, Martin Schwarz, and Jens Eisert.
\newblock Anticoncentration theorems for schemes showing a quantum speedup.
\newblock \emph{{Quantum}}, 2:\penalty0 65, May 2018.
\newblock ISSN 2521-327X.
\newblock \doi{10.22331/q-2018-05-22-65}.

\bibitem[Karp and Lipton(1980)]{karp1980}
R.~M. Karp and R.~J. Lipton.
\newblock Some connections between nonuniform and uniform complexity classes.
\newblock In \emph{Proc. 12th Annu. Symp. Theory Comput.}, STOC, pages
  302--309. ACM, 1980.
\newblock ISBN 0-89791-017-6.
\newblock \doi{10.1145/800141.804678}.

\bibitem[Fortnow(2005)]{fortnow2005beyond}
L.~Fortnow.
\newblock Beyond {NP}: The work and legacy of {L}arry {S}tockmeyer.
\newblock In \emph{Proc. 37th Annu. Symp. Theory Comput.}, STOC, pages
  120--127. ACM, 2005.
\newblock \doi{10.1145/1060590.1060609}.

\bibitem[Aaronson(2016)]{aaronson2016}
S.~Aaronson.
\newblock \emph{{P$\neq$NP?}}
\newblock Springer, 2016.
\newblock \doi{10.1007/978-3-319-32162-2}.

\bibitem[Stockmeyer(1985)]{stockmeyer1985approximation}
Larry Stockmeyer.
\newblock On approximation algorithms for \# {P}.
\newblock \emph{SIAM Journal on Computing}, 14\penalty0 (4):\penalty0 849--861,
  1985.
\newblock \doi{10.1137/0214060}.

\bibitem[Toda(1991)]{Toda:1991:PHP:125944.125952}
Seinosuke Toda.
\newblock {PP} is as hard as the polynomial-time hierarchy.
\newblock \emph{SIAM J. Comput.}, 20\penalty0 (5):\penalty0 865--877, October
  1991.
\newblock ISSN 0097-5397.
\newblock \doi{10.1137/0220053}.

\bibitem[Haferkamp et~al.(2020)Haferkamp, Hangleiter, Bouland, Fefferman,
  Eisert, and Bermejo-Vega]{haferkamp2019closing}
J.~Haferkamp, D.~Hangleiter, A.~Bouland, B.~Fefferman, J.~Eisert, and
  J.~Bermejo-Vega.
\newblock Closing gaps of a quantum advantage with short-time hamiltonian
  dynamics.
\newblock \emph{Phys. Rev. Lett.}, 125:\penalty0 250501, Dec 2020.
\newblock \doi{10.1103/PhysRevLett.125.250501}.

\bibitem[Aaronson(2005)]{Aaronson-ProcRS-2005}
S.~Aaronson.
\newblock Quantum computing, postselection, and probabilistic polynomial-time.
\newblock \emph{Proc. Roy. Soc. A}, 461:\penalty0 2063, 2005.
\newblock ISSN 1364-5021.
\newblock \doi{10.1098/rspa.2005.1546}.

\bibitem[Aharonov et~al.(2008)Aharonov, Van~Dam, Kempe, Landau, Lloyd, and
  Regev]{aharonov2008adiabatic}
Dorit Aharonov, Wim Van~Dam, Julia Kempe, Zeph Landau, Seth Lloyd, and Oded
  Regev.
\newblock Adiabatic quantum computation is equivalent to standard quantum
  computation.
\newblock \emph{SIAM review}, 50\penalty0 (4):\penalty0 755--787, 2008.
\newblock \doi{10.1137/080734479}.

\bibitem[Sachdev(2011)]{sachdev11}
Subir Sachdev.
\newblock \emph{Quantum Phase Transitions}.
\newblock Cambridge University Press, 2 edition, 2011.
\newblock \doi{10.1017/CBO9780511973765}.

\bibitem[Huang and Chen(2015)]{YichenChen15}
Yichen Huang and Xie Chen.
\newblock Quantum circuit complexity of one-dimensional topological phases.
\newblock \emph{Phys. Rev. B}, 91:\penalty0 195143, May 2015.
\newblock \doi{10.1103/PhysRevB.91.195143}.

\bibitem[Caha et~al.(2018)Caha, Landau, and Nagaj]{caha2018}
Libor Caha, Zeph Landau, and Daniel Nagaj.
\newblock Clocks in feynman's computer and kitaev's local hamiltonian: Bias,
  gaps, idling, and pulse tuning.
\newblock \emph{Phys. Rev. A}, 97:\penalty0 062306, Jun 2018.
\newblock \doi{10.1103/PhysRevA.97.062306}.

\bibitem[Ni and Van Den~Nest(2013)]{ni2012commuting}
Xiaotong Ni and Maarten Van Den~Nest.
\newblock Commuting quantum circuits: efficient classical simulations versus
  hardness results.
\newblock \emph{Quantum Information \& Computation}, 13\penalty0
  (1-2):\penalty0 54--72, 2013.

\bibitem[Moylett and Turner(2018)]{Qcircuitbosonsampling}
Alexandra~E. Moylett and Peter~S. Turner.
\newblock Quantum simulation of partially distinguishable boson sampling.
\newblock \emph{Phys. Rev. A}, 97:\penalty0 062329, Jun 2018.
\newblock \doi{10.1103/PhysRevA.97.062329}.

\bibitem[Bennett et~al.(1997)Bennett, Bernstein, Brassard, and
  Vazirani]{Bennet97StrenghtsWeaknnesses_QC}
C.~H. Bennett, E.~Bernstein, G.~Brassard, and U.~Vazirani.
\newblock Strengths and weaknesses of quantum computing.
\newblock \emph{SIAM J. Comp.}, 26\penalty0 (5):\penalty0 1510--1523, 1997.
\newblock \doi{10.1137/S0097539796300933}.

\bibitem[Pashayan et~al.(2020)Pashayan, Bartlett, and Gross]{pashayan2017}
Hakop Pashayan, Stephen~D. Bartlett, and David Gross.
\newblock From estimation of quantum probabilities to simulation of quantum
  circuits.
\newblock \emph{{Quantum}}, 4:\penalty0 223, January 2020.
\newblock ISSN 2521-327X.
\newblock \doi{10.22331/q-2020-01-13-223}.

\bibitem[Aharonov and Naveh(2002)]{aharonovsurvey}
Dorit Aharonov and Tomer Naveh.
\newblock Quantum {NP}-a survey.
\newblock \emph{arXiv preprint quant-ph/0210077}, 2002.

\bibitem[Diep(2004)]{diep_frustrated_2004}
H.~T. Diep.
\newblock \emph{Frustrated {Spin} {Systems}}.
\newblock World Scientific, 2004.
\newblock ISBN 978-981-256-781-9.

\bibitem[Cubitt et~al.(2018)Cubitt, Montanaro, and
  Piddock]{cubitt_universal_2018}
Toby~S. Cubitt, Ashley Montanaro, and Stephen Piddock.
\newblock Universal quantum {Hamiltonians}.
\newblock \emph{Proceedings of the National Academy of Sciences}, 115\penalty0
  (38):\penalty0 9497--9502, September 2018.
\newblock ISSN 0027-8424, 1091-6490.
\newblock \doi{10.1073/pnas.1804949115}.

\bibitem[Movassagh(2019)]{movassagh_cayley_2019}
Ramis Movassagh.
\newblock Cayley path and quantum computational supremacy: {A} proof of
  average-case \#{P}-hardness of {Random} {Circuit} {Sampling} with quantified
  robustness.
\newblock \emph{arXiv:1909.06210}, September 2019.

\bibitem[Movassagh(2018)]{movassagh_efficient_2018}
Ramis Movassagh.
\newblock Efficient unitary paths and quantum computational supremacy: {A}
  proof of average-case hardness of {Random} {Circuit} {Sampling}.
\newblock \emph{arXiv:1810.04681}, October 2018.

\bibitem[Hangleiter et~al.(2019)Hangleiter, Kliesch, Eisert, and
  Gogolin]{hangleiter_sample_2019}
Dominik Hangleiter, Martin Kliesch, Jens Eisert, and Christian Gogolin.
\newblock Sample complexity of device-independently certified ``quantum
  supremacy''.
\newblock \emph{Phys. Rev. Lett.}, 122:\penalty0 210502, May 2019.
\newblock \doi{10.1103/PhysRevLett.122.210502}.

\bibitem[Fitzsimons and Kashefi(2017)]{verificationElham}
Joseph~F. Fitzsimons and Elham Kashefi.
\newblock Unconditionally verifiable blind quantum computation.
\newblock \emph{Phys. Rev. A}, 96:\penalty0 012303, Jul 2017.
\newblock \doi{10.1103/PhysRevA.96.012303}.

\bibitem[Fitzsimons et~al.(2018)Fitzsimons, Hajdu\ifmmode~\check{s}\else
  \v{s}\fi{}ek, and Morimae]{verificationfitzsimons2018post}
Joseph~F. Fitzsimons, Michal Hajdu\ifmmode~\check{s}\else \v{s}\fi{}ek, and
  Tomoyuki Morimae.
\newblock Post hoc verification of quantum computation.
\newblock \emph{Phys. Rev. Lett.}, 120:\penalty0 040501, Jan 2018.
\newblock \doi{10.1103/PhysRevLett.120.040501}.

\bibitem[Mahadev(2018)]{verificationmahadev2018classical}
Urmila Mahadev.
\newblock Classical verification of quantum computations.
\newblock In \emph{2018 IEEE 59th Annual Symposium on Foundations of Computer
  Science (FOCS)}, pages 259--267. IEEE, 2018.
\newblock \doi{10.1109/FOCS.2018.00033}.

\bibitem[Oliveira and Terhal(2008)]{oliveira2005}
Roberto Oliveira and Barbara~M Terhal.
\newblock The complexity of quantum spin systems on a two-dimensional square
  lattice.
\newblock \emph{Quantum Information \& Computation}, 8\penalty0 (10):\penalty0
  900--924, 2008.

\bibitem[Gosset et~al.(2015)Gosset, Terhal, and
  Vershynina]{gosset2015universal}
David Gosset, Barbara~M. Terhal, and Anna Vershynina.
\newblock Universal adiabatic quantum computation via the space-time
  circuit-to-hamiltonian construction.
\newblock \emph{Phys. Rev. Lett.}, 114:\penalty0 140501, Apr 2015.
\newblock \doi{10.1103/PhysRevLett.114.140501}.

\bibitem[Cheung and Mosca(2001)]{cheung_decomposing_2001}
Kevin K.~H. Cheung and Michele Mosca.
\newblock Decomposing {Finite} {Abelian} {Groups}.
\newblock \emph{Quantum Info. Comput.}, 1\penalty0 (3):\penalty0 26--32,
  October 2001.
\newblock ISSN 1533-7146.
\newblock \doi{10.5555/2011339.2011341}.

\bibitem[Bermejo-Vega et~al.()Bermejo-Vega, Lin, and
  Nest]{bermejo-vega_computational_2014}
Juan Bermejo-Vega, Cedric Yen-Yu Lin, and Maarten Van~den Nest.
\newblock The computational power of normalizer circuits over black-box groups.
\newblock \emph{arXiv:1409.4800}.

\bibitem[Gharibian et~al.(2015)Gharibian, Huang, Landau, and
  Shin]{HamcomplexityCS}
Sevag Gharibian, Yichen Huang, Zeph Landau, and Seung~Woo Shin.
\newblock Quantum hamiltonian complexity.
\newblock \emph{Foundations and Trends® in Theoretical Computer Science},
  10\penalty0 (3):\penalty0 159--282, 2015.
\newblock ISSN 1551-305X.
\newblock \doi{10.1561/0400000066}.

\bibitem[Kuperberg(2015)]{Kuperberg15}
Greg Kuperberg.
\newblock How hard is it to approximate the jones polynomial?
\newblock \emph{Theory of Computing}, 11\penalty0 (6):\penalty0 183--219, 2015.
\newblock \doi{10.4086/toc.2015.v011a006}.

\end{thebibliography}

\appendix

\section{Relation between fast-forwarding of Hamiltonians and exponentially precise energy measurements}\label{app:fastforwarding}

The work of Atai-Aharonov \cite{aharonov2017} demonstrates a fundamental relation between the ability to fast-forward a Hamiltonian and the ability to do exponentially precise energy measurements. In this section, we summarize the definitions of Ref.~\cite{aharonov2017} and explain how our results fit in the context of that work. 

A normalized Hamiltonian $H$ is said to be \emph{exponentially fast-forwardable} if a poly-size quantum circuit $U'$ can be constructed such that $||U'-\exp(-iHT)||\leq \alpha$ for $T=O(2^{\Omega(n)})$ and $\alpha=1/\poly(n)$.
Atai-Aharonov show that the ability to exponentially fast-forward a Hamiltonian implies that one can find a poly-size circuit $\tilde{U}_{EM}$ such that $||\tilde{U}_{EM}-U_{EM}|| \leq \alpha'$, where $U_{EM}$ is a unitary operation that performs an exponentially precise energy measurement and $\alpha'=1/\poly(n)$. More precisely, $U_{EM}$ acts on an eigenstate $\ket{\psi_E}$ and additional ancillas as
\begin{equation}
U_{EM}\ket{\psi_E,0,0}=\ket{\psi_E}\sum_{E'} a_{E'} \ket{E',g(E')}
\end{equation}
where $E', g(E')$ live in a poly-size register, $E'$ is the measurement outcome and $g(E')$ is some garbage data; furthermore, the probability of observing $E'$ obeys Eq.~\eqref{eq:measacc} where $\delta =1/2^{\Omega(n)}$ and $\eta=1-1/\poly(n)$. 

It can be seen that since $\tilde{U}_{EM}$ is close to $U_{EM}$ in operator norm ($||\tilde{U}_{EM}-U_{EM}||\leq 1/\poly(n)$), the total variation distance between the probability distributions resulting from a measurement of the output of $U_{EM}$ and $\tilde{U}_{EM}$ is also bounded by $\beta=1/\poly(n)$.  Hence, the ability to exponentially fast-forward a Hamiltonian implies the ability to generate a quantum circuit that solves the $\beta$-approximate energy sampling problem with confidence $\eta$ sampling error $\beta=1/\poly(n)$ and resolution $\delta =1/2^{\Omega(n)}$.  
\section{Proof of Theorem 1}\label{app:proofsthm}
In this section we give technical proof of Theorem~\ref{thm:superres} of the main text.
\setcounter{thm}{0}
\begin{thm}[\bf Quantum algorithm for super-resolution energy measurements]\label{thm:superres} 
Consider any quantum diagonalizable Hamiltonian $\tilde{H}=U^{\dagger}H_f U$ as in (\ref{eq:specialHam}). Then, the following quantum algorithm efficiently solves the $\beta-$approximate Energy Sampling problem for Hamiltonian $\tilde{H}$, with the initial state $\ket{\psi}$ and parameters $\eta=1$ and $\delta =2^{-l}$:
\begin{itemize}
\item Query a $\beta$-approximate sampler for $U$, with initial state $\ket{\psi}$.
\item Given an outcome $z$, output an $l$-digit approximation of the value $f(z)$ .
\end{itemize} 
\end{thm}
\emph{Proof:}
First let us consider this algorithm in the case we have access to an \emph{exact sampler} from $U$, that is, we take $\beta=0$. Such sampler outputs $z$ with probability $P_z=|\bra{z}U\ket{\psi}|^2$. We denote the function that approximates $f(z)$ to $l$-bits as $\tilde{f}(z)$, implying that $|\tilde{f}(z)-f(z)|\leq \delta =2^{-l}$. Assuming the values of $f(z)$ lie in the interval $[0,1]$, the function $\tilde{f}(z)$ outputs values $E_m\in \{0,\delta ,....,1-\delta,1 \}$. 

Let us denote the probability of outputting $E_m$, via the procedure described in the theorem, as $q_m$. Then 
\begin{equation}\label{eq:qm}
q_m= \sum_{z\in \tilde{f}^{-1}(E_m)}P_z, 
\end{equation}
where $\tilde{f}^{-1}(E_m)$ is the pre-image of $E_m$ under the function $\tilde{f}$ i.e., the set of values $z$ that are mapped to $E_m$ via $\tilde{f}$.

Let us demonstrate that this probability distribution obeys the constraints given by Eq.~\eqref{eq:Ensamp}, of an energy sampler with $\epsilon=0$ and $\delta =2^{-l}$. Let us define $f^{-1}([E_a,E_b])$ as the pre-image of the energy interval $[E_a,E_b]$ under $f(z)$ i.e.,
\begin{equation}
f^{-1}([E_a,E_b])=\{z ~|~ f(z)\in[E_a,E_b]\}.
\end{equation}
Given an outcome value $z\in f^{-1}([E_a,E_b])$, we have that $\tilde{f}(z)\in [E_a-\delta ,E_b+\delta ]$. Hence, the probability that Procedure~1 outputs a value $E'\in [E_a-\delta ,E_b+\delta ]$ 
\begin{align}
\Pr(E'\in [E_a-\delta ,E_b+\delta ])&\geq\sum_{z\in f^{-1}([E_a,E_b])} P_z\\
& = \sum_{z\in f^{-1}([E_a,E_b])} |\bra{z}U\ket{\psi}|^2 \label{eq:probab}
\end{align}
On the other hand, we have that the eigenstates of $H$ are given by $U^{\dagger} \ket{z}$ with eigenvalue $f(z)$. It follows that the spectral projection of $H$ in an interval $[E_a,E_b]$ is given by
\begin{equation}\label{eq:spec_proj}
\Pi_{[E_a,E_b]}= \sum_{z\in f^{-1}([E_a,E_b])}U^{\dagger} \ket{z}\bra{z}U.
\end{equation}  
Consequently, defining $\rho=\ket{\psi}\bra{\psi}$ and using Eqs.~\eqref{eq:probab} and \eqref{eq:spec_proj} we have that the probability that Procedure~1 outputs the energy value $E'$ is given by 
\begin{equation}
\Pr(E'\in [E_a-\delta ,E_b+\delta ])\geq \text{tr}\left(\rho \Pi_{[E_a,E_b]} \right),
\end{equation}
which is an energy sampler with parameters $\delta =2^{-l}$ and $\epsilon=0$. 

Let us now consider the more general case where Procedure~1 has access to a $\beta$-approximate sampler for $U$ i.e., the outcome $z$ is observed with probability $P'_z$ such that 
\begin{equation}
\sum_z |P'_z-P_z|\leq \beta.
\end{equation}
In this case, analogously to Eq.~\eqref{eq:qm}, we define $q'_m$ as the probability that the procedure described in the theorem outputs $E_m=m \delta $, which is given by 
\begin{equation}\label{eq:qprimem}
q'_m= \sum_{z\in \tilde{f}^{-1}(E_m)}P'_z.
\end{equation}
Using Eq.~\eqref{eq:qm}, we have that 
\begin{align}
\sum_m|q_m-q'_m|&= \sum_m \left|\sum_{z\in \tilde{f}^{-1}(E_m)} (P_z-P'_z)\right|\\
				&\leq \sum_m \sum_{z\in \tilde{f}^{-1}(E_m)} |P_z-P'_z|\\
				&=\sum_z |P_z-P'_z| \leq \beta.
\end{align}
This shows that the probability distribution $\{q'_m\}$ has a total variation distance of at most $\beta$ with respect to the energy sampler with parameters $\epsilon=0$ and $\delta =2^{-l}$ defined by the probabilities $\{q_m\}$. \QEDB
\section{Locality of Hamiltonians diagonalized by IQP circuits with bounded degree} \label{app:hamcalculations}
Let us consider the class of Hamiltonians $H_{IQP}=\sum_j w_j U_{IQP}^{\dagger}{\hat{n}}_l U_{IQP}$, where $U_{IQP}$ is an IQP circuit~\cite{IQPapprox}, of the form
\begin{equation}\label{U_IQP}
U_{IQP}=\exp(i \frac{\pi}{8}\sum_{(j,k)\in E_{G}} w_{jk} X_j X_k + \sum_k v_k X_k).
\end{equation}
where $E_{G}$ denotes the edges of the interaction graph of the circuit.   
Clearly, $U_{2D}$ is a particular case of this more general set of unitaries, which is obtained when the weights are $w_{jk}=2$ if $(j,k)$ corresponds to an edge of a 2D lattice.
We start by calculating $U_{IQP}^{\dagger}Z_l U_{IQP}$. We can write $U_{IQP}=\exp(i\pi/8 H_{XX})$ with 
\begin{align}\label{eq:HXX}
H_{XX}=& \sum_{(j,k)\in E_{G}} w_{jk} X_j X_k + \sum_k v_k X_k \\
=& X_l\otimes(v_l\mathbb{I}+\sum_{k\neq l} w_{kl} X_k)+\bar{H}_l\\
=&X_l\otimes H_l+\mathbb{I}_l\otimes\bar{H}_l
\end{align}
where we have defined the Hamiltonians $H_l$ and $\bar{H}_l$ acting on the $n-1$ qubits other than $l$ as
\begin{align}
H_l &= v_l\mathbb{I}+\sum_{k\neq l} w_{kl} X_k\\
\bar{H}_l & =\sum_{j,k \neq l } w_{jk} X_j X_k + \sum_{k\neq l} v_k X_k
\end{align}
Using this, we can write 
\begin{align}
U_{IQP}^{\dagger}Z_l U_{IQP}&= e^{-i\frac{\pi}{8} X_l\otimes H_l} Z_l~e^{i\frac{\pi}{8} X_l\otimes H_l}\\
&=Z_l -i\frac{\pi}{8}[X_l\otimes H_l , Z_l]\\
&~-\frac{1}{2}\left(\frac{\pi}{8}\right)^2 [X_l\otimes H_l ,[X_l\otimes H_l , Z_l]+... \nonumber \\
&=\sum_{k=0}^{\infty}\frac{C_k}{k!}\left(-i\frac{\pi}{8}\right)^k,
\end{align} 
where $C_0=Z_l$ and $C_k$ results from applying $k$-times the commutator $[X_l\otimes H_l ,~\cdot~]$ to the operator $Z_l$. By calculating the first few commutators, a pattern can be noticed
\begin{align}
C_1=[X_l\otimes H_l, Z_l]=[X_l, Z_l]\otimes H_l=-2i Y_l\otimes H_l,
\end{align}
\begin{align}
C_2&=-2i [X_l\otimes H_l, Y_l\otimes H_l]\\ &=-2i [X_l, Y_l]\otimes H_l^2\\
&= -2i(2i) Z_l \otimes H_l^2. 
\end{align}
The even terms are thus given by 
\begin{align}
C_{2k}&= (2i)^k(-2i)^k Z_l\otimes H_l^{2k}\left(-i\frac{\pi}{8}\right)^{2k}\\
&= (-1)^k Z_l\otimes \left(\frac{\pi}{4}H_l\right)^{2k}, 
\end{align}
whereas the odd terms yield
\begin{align}
C_{2k+1}&= (2i)^k(-2i)^{k+1} Y_l\otimes H_l^{2k+1}\left(-i\frac{\pi}{8}\right)^{2k+1}\\
&=- (-1)^k Y_l\otimes \left(\frac{\pi}{4}H_l\right)^{2k+1}, 
\end{align}
Using these results, we can write 
\begin{align}
U_{IQP}^{\dagger}Z_l U_{IQP}= Z_l\otimes \cos\left(\frac{\pi}{4}H_l\right)-Y_l\otimes \sin\left(\frac{\pi}{4}H_l\right).
\end{align}
This term acts non-trivially in $d+1$ qubits, where $d$ is the number of non-zero values of $w_{kl}$, for $k\neq l$ i.e., the number of qubits that interact with qubit $l$ via Hamiltonian $H_{XX}$ in Eq.~\eqref{eq:HXX}. In the case discussed in the main text, $H_{XX}$ is defined on a 2D lattice, which implies that each qubit interacts with 4 other qubits. Hence, the Hamiltonian $H_{2D}$ from Eq.~\eqref{eq:H2d} is 5-local. 
In fact, since all the weights are the same ($w_{jk}=2, (j,k)\in E_{2D}$) the expression above simplifies to
\begin{equation}
U_{2D}^{\dagger}Z_l U_{2D}=Z_l \prod_{j:(j,l)\in E_{2D}} X_j.
\end{equation}

\section{Proof and consequences of Lemma~\ref{lemma:ObstructionBMS}}\label{app:Stockmeyer}

\setcounter{lem}{1}
\begin{lem}[\textbf{Stockmeyer error}]
Let $\mathcal{Q}_n,{n\in\mathbb{N}}$ be a family of uniformly-generated poly-size $n$-qubit quantum circuits  with $m$ output bits and the hiding property
$$\forall U \in \mathcal{Q}_n, x\in\{0,1\}^m, \exists  U_x\in \ \mathcal{Q}_n : q_U(x)=q_{U_x}(0_m).$$
Assume there exists a classical algorithm $\mathcal{A}$ that samples from $q_U$ with $\ell_1$-error $\beta$ in $O(\poly(n))$ time for any circuit $U\in\mathcal{Q}_n$. Then, for any $0<\nu<1$, there is an $\textnormal{FBPP}^\textnormal{NP}$ algorithm which, given access to $\mathcal{A}$, approximates $q_U(x),x\in\{0,1\}^m$ up to additive error $\varepsilon$
\begin{equation}
\varepsilon\in O \left( \frac{q_U(x)}{\poly(n)}+ \frac{\beta}{2^m\nu}\left(1+\frac{1}{\poly(n)} \right)\right).\notag
\end{equation}
with probability $1-\nu$ over the choice of~$x\in\{0,1\}^m$. 
\end{lem}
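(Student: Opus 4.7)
The plan is to combine Stockmeyer's approximate counting theorem with a Markov-inequality argument applied to the $\ell_1$-error of the classical sampler $\mathcal{A}$. Let $\tilde{p}_U(x) := \Pr_r[\mathcal{A}(U, r) = x]$ denote the output distribution of $\mathcal{A}$ on circuit $U$, where $r$ ranges over the internal randomness. Since $\mathcal{A}$ runs in polynomial time, the counts $\#\{r : \mathcal{A}(U, r) = x\}$ are $\#\mathrm{P}$ quantities, so $\tilde{p}_U(x)$ is amenable to Stockmeyer-type estimation to multiplicative precision.

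The proof then proceeds in three steps. First (Stockmeyer), apply Stockmeyer's theorem to the $\#\mathrm{P}$ quantity $\#\{r : \mathcal{A}(U, r) = x\}$ and normalize by $2^{|r|}$, producing an estimate $\hat{p}$ with
\[
|\hat{p} - \tilde{p}_U(x)| \leq \tilde{p}_U(x)/\poly(n),
\]
computable in $\textnormal{FBPP}^{\textnormal{NP}}$ in $\poly(n)$ time. Second (Markov), the hypothesis $\sum_x |\tilde{p}_U(x) - q_U(x)| \leq \beta$ implies $\mathbb{E}_{x \sim \mathrm{Unif}}|\tilde{p}_U(x) - q_U(x)| \leq \beta/2^m$, so Markov's inequality gives
\[
\Pr_{x \sim \mathrm{Unif}(\{0,1\}^m)}\!\left[|\tilde{p}_U(x) - q_U(x)| > \frac{\beta}{2^m \nu}\right] \leq \nu.
\]
Third (triangle inequality), on the good event of probability $\geq 1-\nu$ we have $\tilde{p}_U(x) \leq q_U(x) + \beta/(2^m\nu)$, and combining with the Stockmeyer bound gives
\[
|\hat{p} - q_U(x)| \leq \frac{\tilde{p}_U(x)}{\poly(n)} + \frac{\beta}{2^m \nu} \leq \frac{q_U(x)}{\poly(n)} + \left(1 + \frac{1}{\poly(n)}\right) \frac{\beta}{2^m \nu},
\]
which matches the claimed form of $\varepsilon$.

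The hiding property is not strictly required for the bound as stated, since Stockmeyer can be applied directly to estimate $\tilde{p}_U(x)$ for any fixed $x \in \{0,1\}^m$. However, it plays a natural role in situating the argument within the broader BMS framework: it allows one to equivalently reduce the estimation of $q_U(x)$ to that of $q_{U_x}(0_m)$, a reformulation convenient for aligning this conclusion with average-case hardness conjectures on the ``canonical'' zero-string output probability.

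The only subtlety is the interplay between Stockmeyer's multiplicative error and the sampler's additive error: the multiplicative error is bounded in terms of $\tilde{p}_U(x)$, which must itself be upper bounded by $q_U(x)$ plus the additive-error term, producing the $(1+1/\poly(n))$ factor on the second summand of $\varepsilon$. A second point to handle is Stockmeyer's internal failure probability, which can be amplified to $1/\exp(\poly(n))$ by standard repetition and absorbed into the $\textnormal{FBPP}^{\textnormal{NP}}$ characterization without changing the stated bound. Neither poses a real obstacle; the argument is a careful accounting of error sources along the now-standard route used in hardness proofs for IQP and boson sampling.
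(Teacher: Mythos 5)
Your proof is correct and follows essentially the same route as the paper's: apply Stockmeyer's counting theorem to obtain a relative-error estimate of the sampler's output probability, invoke Markov's inequality on the $\ell_1$ error averaged over $x$, and combine via the triangle inequality to extract the $(1+1/\poly(n))$ factor on the $\beta/(2^m\nu)$ term. The only differences are notational (you use $\tilde{p}_U$ for the sampler's distribution and $\hat{p}$ for the Stockmeyer estimate, whereas the paper uses $p_U$ and $\tilde{p}_U$ respectively), and your observation that the hiding property is not actually invoked in the bound itself is consistent with the paper's proof, which also never uses it.
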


\begin{proof}
For any $U\in\mathcal{Q}_n$, let $p_U$ the distribution generated by $\mathcal{A}$ fulfilling
\begin{equation}
\|p_U-q_U\|=\sum_{x\in\{0,1\}^m} |p_U(x)-q_U(x)|< \beta.
\end{equation}
As discussed in \cite{IQPapprox}, Stockmeyer's algorithm implies the existence of an FBPP$^\textrm{NP}$ algorithm that computes a relative-error estimate $\tilde{p}_U(x)$ of ${p}_U(x)$:
\begin{equation}
| \tilde{p}_U(x) - p_U(x)| \leq  \frac{p_U(x)}{\poly(n)}.
\end{equation}
Using the triangle inequality we get
\begin{align}\label{eq:error_estimate}
| \tilde{p}_U(x) - q_U(x)| \leq & \frac{q_U(x)}{\poly(n)}\\&+ | p_U(x) - q_U(x)|\left(1+\frac{1}{\poly(n)}\right).\notag
\end{align}
Last, for any $0< \nu < 1$, Markov's inequality implies that 
\begin{align}\label{eq:StockError}
| \tilde{p}_U(x) - q_U(x)|\leq & \frac{q_U(x)}{\poly(n)}+ \frac{\beta}{2^m\nu}\left(1+\frac{1}{\poly(n)}\right)
\end{align}
with probability $1-\nu$ over the choice of~$x\in\{0,1\}^m$. This completes the proof.
\end{proof}

We review a few consequences of this lemma that are mentioned in the main text, section \ref{sec:conjectures}.

(I) \emph{Classical hardness of near-exact quantum sampling problems based on the non-collapse of the Polynomial Hierarchy.} First, we review how Stockmeyer's algorithm can be used to rule out near-exact classical simulations of certain sampling problems (where $\beta=1/2^\poly(n)$), assuming only the non-collapse of the Polynomial Hierarchy (PH) \cite{bosonsampling,IQPexact}. This is the case for poly-size quantum circuits with output probabilities that are $\#$P-hard to compute up to a relative error, even when the number of output bits $m$ is constant. 

To demonstrate this, it is important to note that the output probabilities of the latter are either zero or larger than  $p^*=1/2^{O{(n^c)}}$, for some constant $c$~\cite{Kuperberg15}. Using this fact, we can estimate these probabilities up to relative error via Stockmeyer's algorithm by choosing a value of $\beta<p^*/\poly(n)$ i.e., significantly smaller than the probability gap $p^*$. This can be seen by analysing the error of the estimation on the right hand side (RHS) of Eq.~\eqref{eq:error_estimate}. If $q_U(x)$ is not zero, then it is larger than $p^*$~\cite{Kuperberg15} and the aforementioned choice of $\beta$ guarantees that the error on the RHS of Eq.~\eqref{eq:error_estimate} is upper bounded by $q_U(x)/\poly(n)$. Hence, in this case, $\tilde{p}_U(x)$ is a relative error estimation of $q_U(x)$. On the other hand, if $q_U(x)=0$, the error on the RHS of Eq.~\eqref{eq:error_estimate} is upper bounded by $p^*/\poly(n)$. This implies that $\tilde{p}_U(x)\leq p^*/\poly(n)$, which is significantly smaller than $p^*$. This allows us to conclude that $q_U(x)=0$. 

Consequently, the existence of an efficient classical algorithm for near-exact sampling of these quantum circuits implies, via  the previous arguments, the existence of an algorithm in the complexity class FBPP$^\textnormal{NP}$ that computes $\#$P-hard to estimate output probabilities.  It follows that PH collapses to its 3rd level, since it is known that FBPP$^\textnormal{NP}$ is in level three, and $\textnormal{P}^{\#\textnormal{P}}$ is above the hierarchy (by Toda's theorem) \cite{Toda:1991:PHP:125944.125952}.

(II) \emph{Classical hardness of approximate quantum sampling problems based on additional complexity-theoretic conjectures.} As discussed in section \ref{sec:conjectures}, it is possible to extend the above results to rule out classical simulations with constant or inverse polynomial sampling errors assuming additional conjectures. Specifically, Refs.\mot\cite{bosonsampling,IQPapprox} exploit Lemma \ref{lemma:ObstructionBMS}, in the case $m=n$, to prove the hardness of approximate sampling problems based on three conjectures: the non-collapse of PH, anticoncentration, and the average-case $\#$P-hardness of approximating output probalities of a quantum device. We review the key idea behind this proof. If the distribution $q_U(x)$ anticoncentrates (assumption C3 in section \ref{sec:conjectures}), then  $\mathrm{prob}_{x} \left[q_U(x) > \alpha/2^n\right] > \gamma$, for some constants $\alpha, \gamma\in O(1)$. Then, with constant probability, the error in equation (\ref{eq:StockError})  is a relative error for $q_U(x)$, if $m=n$. It follows that there is an FBPP$^\textnormal{NP}$ algorithm that can approximate the output probabilities of the device up to a relative error for a constant fraction of the instances (i.e., in ``average'' when we randomize over the choice of probability). If we assume this problem to be $\#$P-hard  (the average-case assumption C2 in section \ref{sec:conjectures}), then the Polynomial Hierarchy collapses to its 3rd level.

(III) \emph{The $m < \log(n)$ case.} 
 In this scenario, the error in the right hand side of Eq.~(\ref{eq:StockError}) is $\Omega(1/\poly{(n)})$. Assuming the existence of an efficient classical sampler of the circuit family $\mathcal{Q}_n$, the Stockmeyer argument implies the existence of an average-case FBPP$^\textnormal{NP}$ algorithm that approximates up to this error the output probabilities of\mot$\mathcal{Q}_n$. This error is quite large and, in fact, can be achieved simply by querying the hypothetical classical sampler a polynomial number of times i.e., this problem would be in BPP.
 
 In order to draw an unlikely complexity theoretic implication in this scenario, one would have to prove that approximating a typical output probability of a quantum circuit to $1/\poly{(n)}$ errors is hard for a complexity class which is unlikely to be contained in BPP. However, it is important to note that this approximation problem can be efficiently solved by sampling from quantum circuits, and is therefore in BQP. It is thus implausible that one can show that this problem is  \#P-hard, or even  NP-hard, for then quantum computers would be able to solve such problems; which is, in turn, considered to be unlikely \mot\cite{Bennet97StrenghtsWeaknnesses_QC,Aaronson-ProcRS-2005}. Hence, new techniques seem to be required to give complexity theoretic evidence for the classical harness of approximate sampling problems with small output space. 
\end{document}